\documentclass{report}

\usepackage{amsmath, amsthm, amssymb, amsfonts}
\usepackage{hhline}
\usepackage{mathrsfs}
\usepackage{relsize}
\usepackage{float}
\usepackage{appendix}

\setlength{\oddsidemargin}{0.25in}
\setlength{\evensidemargin}{0.25in}
\setlength{\textwidth}{6.0in}
\setlength{\topmargin}{-0.25in}
\setlength{\textheight}{8.5in}

\begin{document}

\title{Software Safety Demonstration and Indemnification}
\author{
    \Large{Odell Hegna}\\
    \MakeLowercase{\texttt{hegnaoc2@gmail.com}} \\
    contact: my beloved wife Carolyn D. Parsons, M.A., Psy.D.\\
    \MakeLowercase{\texttt{cdparsons@att.net}}
}

\theoremstyle{plain}
\newtheorem{theorem}{Theorem}[section]
\newtheorem{lemma}[theorem]{Lemma}
\newtheorem{corollary}[theorem]{Corollary}
\newtheorem{algorithm}[theorem]{Algorithm}
\newtheorem{conjecture}[theorem]{Conjecture}
\newtheorem{fact}[theorem]{Fact}
\newtheorem{axiom}[theorem]{Axiom}
\newtheorem{notation}[theorem]{Notation}

\theoremstyle{definition}
\newtheorem{definition}[theorem]{Definition}

\theoremstyle{remark}
\newtheorem*{remark}{Remark}
\newtheorem*{nomenclature}{Nomenclature}
\newtheorem*{example}{Example}
\newtheorem*{rationale}{Rationale}

\newcommand{\realnumbers}{\mathbb{R}}
\newcommand{\positives}{\mathbb{R}^+}
\newcommand{\nonnegatives}{\mathbb{R}_0^+}
\newcommand{\naturalnumbers}{\mathbb{N}}
\newcommand{\positivereals}{\mathbb{R}^+}
\newcommand{\positiveintegers}{\mathbb{Z}^+}
\newcommand{\naturalnumberz}{\mathbb{N}_0}
                              
\newcommand{\Inf}[1]{\mathrm{inf}(#1)}                  
\newcommand{\Sup}[1]{\mathrm{sup}(#1)}                  
\newcommand{\SetBuild}[2]{\{ {#1} \colon {#2} \} }      
\newcommand{\Idiom}[2]{\{ {#1}_{#2} \} }                
\newcommand{\IdiomPrime}[2]{\{ {#1}'_{#2} \} }         
\newcommand{\Single}[1]{{\{ {#1} \}}}                   
\newcommand{\domain}[1]{{\operatorname{dom}{#1}}}
\newcommand{\range}[1]{{\operatorname{ran}{#1}}}
\newcommand{\codomain}[1]{{\operatorname{codom}{#1}}}
\newcommand{\nullplex}{\lbrace\varnothing\rbrace}
\newcommand{\powerset}[1]{\mathscr{P}({#1})}

\def\sync{\mathop{\textstyle{\mathrm{sync}}}}
\def\test{\mathop{\mathbf{t}}}
\newcommand{\Instance}[1]{{\mathsf{\footnotesize{stochastic}\thinspace(}{#1}\mathsf{)}}}
\newcommand{\OP}{\mathcal{O} \negthickspace \mathcal{P}}          

\newcommand{\Fr}[1]{\mathfrak{#1}}                      
\newcommand{\Cp}[1]{\Pi\mathsf{#1}}                     
\newcommand{\CP}[1]{\prod{#1}}                          
\newcommand{\Ip}[1]{\mathfrak{#1}}                      
\newcommand{\Auto}[1]{\mathfrak{#1}}                    
\newcommand{\AutoInv}[1]{\mathfrak{#1}^{-1}}            
\newcommand{\Card}[1]{\lvert{#1}\rvert}                 
\newcommand{\Norm}[1]{\Vert{#1}\Vert}                   
\newcommand{\Fty}[1]{\mathscr{#1}}                      
\newcommand{\Ftylc}[1]{\mathit{#1}}                     
\newcommand{\Act}[1]{\mathsf{#1}}			
\newcommand{\Frm}[1]{\mathbf{#1}}                       
\newcommand{\Prc}[1]{\mathsf{#1}}			
\newcommand{\Stp}[1]{\mathbb{#1}}                       
\newcommand{\Stplc}[1]{\mathit{#1}}                     
\newcommand{\Con}[1]{\mathcal{#1}}                      
\newcommand{\Closure}[1]{\widehat{#1}}                
\newcommand{\Fld}[1]{\mathscr{#1}}                      
\newcommand{\Basis}[2]{\Closure{#1} \times \Closure{#2}}      
\newcommand{\Abstract}[1]{\mathit{abstraction}({#1})}   
\newcommand{\Dquo}[1]{\textquotedblleft{}#1\textquotedblright}

\newcommand{\Acro}[1]{{\small{#1}}}
\newcommand{\MilStd}[0]{{\small{MIL-STD-882}}}          
\newcommand{\MilStdE}[0]{{\small{MIL-STD-882E}}}	

\newcommand{\ConfigSpace}[0]{\textstyle{\Closure{\Phi} \times \Closure{\Xi} \times \Act{A} \times \Fty{F}\! \times \Lambda}}
\newcommand{\ConfigStd}[0]{(\phi, \xi, \Act{a}, \Ftylc{f}, \lambda)}
\newcommand{\ConfigSub}[1]{(\phi_{#1}, \xi_{#1}, \Act{a}_{#1}, \Ftylc{f}_{#1}, \lambda_{#1})}
\newcommand{\ConfigScript}[1]{\mathfrak{#1}}
\newcommand{\Prev}[0]{\mathrm{prev}}
\newcommand{\Curr}[0]{\mathrm{curr}}
\newcommand{\Next}[0]{\mathrm{next}}

\newcommand{\indexfont}[1]{\mathscr{#1}}                
\newcommand{\Restrict}[2]{{#1}\negmedspace\restriction_{#2}\thinspace}
\newcommand{\xth}[1]{{#1}^{\text{th}}}

\newcommand{\edge}[1]{{\operatorname{edge}{#1}}}


\def\E{\mathop{\textstyle{\mathsf{E}}}}
\newcommand{\Ex}[1]{\mathop{\textstyle{\mathsf{E}_{#1}}}}
\def\Var{\mathop{\textstyle{\mathsf{Var}}}}

\newcommand{\Bigast}[1]{{\operatorname{\mathlarger{\mathlarger{\ast}}}\!{#1}}}
\newcommand{\absc}[1]{{\operatorname{absc}{#1}}}	
\newcommand{\ord}[1]{{\operatorname{ord}{#1}}}		
\newcommand{\Blocks}[1]{{\operatorname{\boxplus}{(#1)}}}
\newcommand{\blocks}{{\operatorname{\boxplus}}}

\newcommand{\tranbasic}[1]
{
    {#1} \circ {\operatorname{\boxplus}}_{\domain{#1}}
}
\newcommand{\transducer }[1]		
{
    ({#1} \circ {\operatorname{\boxplus}}_{\domain{#1}})
}
\newcommand{\transduction }[2]		
{
    ({#1} \circ {\operatorname{\boxplus}}_{\domain{#1}})(#2)
}
\newcommand{\tranfunc}[2]		
{
    \operatorname{absc} {
        (({#1} \circ {\operatorname{\boxplus}}_{\domain{#1}})(#2))
    }
}
\newcommand{\tranaddr}[2]		
{
    \operatorname{ord} {
        (({#1} \circ {\operatorname{\boxplus}}_{\domain{#1}})(#2))
    }
}
\newcommand{\tranresp}[2]		
{
    (\operatorname{absc} {
        ({#1} \circ {\operatorname{\boxplus}}_{\domain{#1}})(#2))(#2)
    }
}
\newcommand{\tranframe}[2]		
{
    ({#2}, 
    (\operatorname{absc} {
        (({#1} \circ {\operatorname{\boxplus}}_{\domain{#1}})(#2))(#2)
    }
    ))
}


\setlength{\parindent}{0in}
\setlength{\parskip}{0.1in}

\begin{abstract}
Computers may control safety-critical operations in machines having embedded software. This memoir proposes a regimen to verify such algorithms at prescribed 
levels of statistical confidence. 

The United States Department of Defense standard for system safety engineering (\Acro{MIL-STD-882E}) defines development procedures for safety-critical systems. 
However, a problem exists: the Standard fails to distinguish quantitative \emph{product} assurance technique from categorical \emph{process} assurance method for software development. Resulting is conflict in the technical definition of the term \emph{risk}. 

The primary goal here is to show that a quantitative risk-based product assurance method exists and is consistent with hardware practice. Discussion appears in two major parts: theory, which shows the relationship between automata and software; and application, which covers demonstration and indemnification. Demonstration is a technique for generating random tests;  indemnification converts pass/fail test results to compound Poisson parameters (severity and intensity). Together, demonstration and indemnification yield statistical confidence that safety-critical code meets design intent. Statistical confidence is the keystone of quantitative product assurance. 

A secondary goal is resolving the conflict over the term \emph{risk}. The first meaning is an 
accident model known in mathematics as the compound Poisson stochastic process, and so is 
called \emph{statistical} risk. Various of its versions underlie the theories of safety and 
reliability. The second is called \emph{developmental} risk. It considers software autonomy, 
which considers time until manual recovery of control. Once these meanings are separated, 
\Acro{MIL-STD-882} can properly support either formal quantitative safety assurance or empirical process robustness, which differ in impact.

Keywords: software, safety, hazard, operational profile, automata, confidence, statistics

\end{abstract}
\maketitle


\chapter{Prologue} 

\section{About this memoir}\label{S:ABOUT} 

\subsection{Copyright}\label{S:COPYRIGHT} 
This document may be freely copied or modified in accordance with the Creative Commons 
Attribution license\footnote{http://creativecommons.org/licenses/by/3.0/}. 

\subsection{Preface to the fourth revision}\label{S:PREFACE} 
This fourth revision was to improve the document's exposition as applied mathematics. 
This goal has met limited success; it is challenged by the author's meager education 
and his failing health. 

The author deeply regrets that his health now dictates that he is unable to complete 
this revision. Consequently, this incomplete edition is classified as a memoir. He bids 
farewell to this portrayal of reactive systems, and wishes all interested parties well in 
continued advancement. His fondest wish is that this memoir will prove useful to researchers. 

.
\subsection{Acknowledgment} \label{S:PERSONAL_DISCOVERY} 
The author acknowledges W. Ross Ashby and his 1956 pioneering work in cybernetics\cite{wA56}. 
His depiction of the transducer [The Determinate Machine and The Machine With Input, 
Chapters 3 and 4] was the author's personal inspiration for the reactive (semi-deterministic) 
actuated automaton. Here this topic is covered in chapter \ref{S:RAA_CHAPTER}.

\subsection{Credentials}\label{S:CREDENTIALS} 
The author holds an undergraduate degree in mathematics from the University of Minnesota. He is a retired safety and reliability engineer with more than three decades of software experience. 

\subsection{Advocacy}\label{S:ADVOCACY} 
The author is a critic of \MilStd, the United States Department of Defense standard for systems 
safety. A fault occurring in the Standard is absence of quantitative assurance for software. 
He objects to the prescribed software safety metrics, which result in incomparable measures 
of risk between software and hardware hazards. The author's credentials mark this memoir as 
experienced but non-scholarly advocacy. 

\subsection{Approach}\label{S:APPROACH} 
This memoir collects related career experiences of the author into a mathematical 
discussion. 

A custom automaton lays a foundation for software quantitative assurance. It 
accomplishes this by examining the significance of a random sample of bounded software 
trajectories passing through a common point. Such a collection, when tested, is here 
called a safety demonstration. No error is tolerated in a safety demonstration. 

Indemnification converts the test result to Poisson probability, which is consistent with 
hardware practice. The test result is expressed as statistical confidence.

\section{Management summary}\label{S:EXECUTIVE_SUMMARY} 
United States {\MilStdE}, a widely recognized system safety standard, mistakenly prescribes two incompatible safety 
analysis methodologies, one for hardware and the other for software. The safety method for software was formulated by 
focusing on the \Dquo{shift-left}\cite{wW17shiftleft} debugging strategy of development engineers. When this partial 
but incomplete strategy is imposed on safety engineers, quantitative assurance of software safety is lost. The present 
Standard ignores an important objective of system safety engineering: providing a {\footnotesize{\emph{QUANTITATIVE}}} 
level of assured safety, having common measure between hardware and software. Indeed this error is so 
fundamental that one must question whether systems safety was properly represented during this standard's review. 
Properly managed, the system safety discipline requires common philosophy and measure of safety in order to understand 
and rank heterogeneous hazards. 

\MilStd{} indisputably does not provide a common measure of safety assurance. The present Standard is not wrong 
as it stands \emph{for development engineers}. However, the Standard fails one of the leading expectations 
of safety engineering, namely to express the risks of hazards in common units. Thus \MilStd{} suffers a 
fundamental error which renders it unacceptably wrong. The prescribed safety methods for hardware and 
software diverge in both metrics and ideology. Harm ensues because modern engineered systems are usually neither 
pure hardware nor pure software; thus they require common measure to render hazard threats comparable. 

Development engineers characteristically have superior knowledge of project detail, whereas safety engineers 
appreciate the quantitative risks underlying hazard identification and analysis. Hazard analysis is a separate 
engineering topic.\footnote{A comprehensive work on the subject is  
\emph{Engineering a Safer World}\cite{nL11} by Leveson}

The overall process is improved by fine-tuning the duties of these personnel to their particular abilities: development 
and debugging for software engineers and quantitative assurance for safety engineers. Currently \MilStd{} does not 
differentiate the duties of these two personnel groups, but recommends a team approach with oversight responsibility 
for safety engineers.

Bug discovery is a necessary prelude to assurance, which is not direct verification of logic itself, but verification of a 
random sample of software trajectories representing the logic under examination. This approach is necessary because gross logic 
is assembled from smaller combinatory pieces, and whether these pieces are always equivalent to the whole must be quantitatively 
assured. This is also the reason that assurance is a statistical task with levels of confidence depending on sample size. 

The author recognizes shift-left tactics as important to efficient development; no change to process for development engineers is 
proposed here. However, we do wish to add software assurance procedure for safety engineers. In this sense, \emph{assured} is not 
merely synonymous with feeling good about the development process. Assurance tests are specially structured to affirm safety at 
known statistical confidence levels. These procedures are plainly missing from \MilStd; it should be amended to include metrics 
supporting quantitative assurance of safety hazards. 

\section{Apologies to the reader}\label{S:APOLOGY} 
The author apologizes that the concepts discussed here, being distilled from his 
personal experience, are new to him but not to all. 

This memoir falls short of academic standards of quality; it benefits from neither 
literature search nor peer review. The experienced reader may find terms in nonstandard 
context. The author has strived to maintain consistency, but admits deficiency in 
standardization of nomenclature, a consequence of writing in isolation. The author 
apologizes for resulting inconvenience.

There are worthy readers who would prefer a traditional engineering approach 
(examples) to this topic; however mathematics is central in what 
follows. A mathematical foundation is necessary, and this memoir is a step 
in that direction. The author was emphatically not a mathematician; his 
education is over-matched even to begin this work. The author encourages 
interested academically qualified individuals to advance this memoir into 
proper research.

\chapter{Rudiments of discrete reactive systems} \label{S:RUDIMINARY_REACTIVE_SYSTEMS_CHAPTER} 
This chapter discusses core structures of discrete systems theory. Definitions of very basic 
concepts such as \emph{ensemble}, \emph{class}, and their elementary operations appear in 
Groundwork, Appendix \ref{S:ENSEMBLE} ff. 

\section{Description of discrete systems}
The operation of discrete systems is composed of chains woven from paired units of stimulus 
and response. Each stimulus appears in either of two types, \emph{deterministic} or 
\emph{stochastic}, and possesses a \emph{value}. Deterministic 
stimuli are computable and retain their values until re-assigned, a feature that is enabled 
by buffered storage. Stochastic stimuli are volatile, requiring observation rather than 
computation, and their value is defined  only at the instant of observation. This value 
may be copied to deterministic storage. Example stochastic stimuli are inputs from sensors and 
remote commands to robotic apparatus. Systems containing both stimulus types are called 
\emph{reactive}.

For any combination of stimuli, discrete reactive systems construct a unique \emph{response}. 
The agent transforming stimuli into response is called a \emph{functionality}.
Since responses are computable, they are recorded in deterministic storage. Therefore, as a mapping, a functionality has domain consisting of deterministic and stochastic stimuli, and 
codomain similarly composed of deterministic storage.

In its characteristic chain of stimulus and response, successive links are called frames. 
Each frame is composed of a stimulus (starting condition) and a response (ending condition). 
Frames are not independent because of a mechanism called feedback. Feedback expresses the 
principle that the current frame's response is included in the next frame's stimulus 

As a non-reactive example consider a clock having pendulum, cog-wheels and escapement. 
At each tick, accomplished gear train movement becomes input to the next. The ending 
condition generated in the current frame feeds forward into the starting condition of the 
next frame. 

\begin{axiom}[deterministic stimulus] \label{A:PERSISTENT_STIMULUS}
Class $\Closure{\Phi}$ represents deterministic stimulus, the value of which persists between assignments. 
\end{axiom}

\begin{axiom}[stochastic stimulus] \label{A:STOCHASTIC_STIMULUS}
Class $\Closure{\Xi}$ represents stochastic stimulus, the value of which is 
instantaneous and not predictable free of error. A stochastic stimulus 
is \Dquo{read-only}, used but set outside the system. 
\end{axiom}

\begin{axiom}[response] \label{A:RESPONSE}
Class $\Closure{\Phi}$ contains the system's unique response to the reactive stimulus. 
\end{axiom}

\begin{axiom}[reactive] \label{A:REACTIVE}
A stimulus and response is \Dquo{reactive} if it is a mapping from the union of 
deterministic and stochastic space into deterministic space, symbolically 
$\Closure{\Phi}\Closure{\Xi} \to \Closure{\Phi}$ (see \S\ref{S:DETERMINISTIC_STOCHASTIC_PRTN}). 
\end{axiom} 

\begin{definition}[reactive] \label{D:REACTIVE}
A stimulus and response is \Dquo{reactive} if it is a mapping from the union of 
deterministic and stochastic space into deterministic space, symbolically 
$\Closure{\Phi}\Closure{\Xi} \to \Closure{\Phi}$ (see \S\ref{S:DETERMINISTIC_STOCHASTIC_PRTN}). 
\end{definition} 

\begin{definition}[reactive basis] \label{D:REACTIVE_BASIS}
Two classes $\Closure{\Phi}$ and $\Closure{\Xi}$ are a \emph{reactive basis}, designated $\Basis{\Phi}{\Xi}$, 
if there is a mapping between stimulus and response $\Closure{\Phi}\Closure{\Xi} \to \Closure{\Phi}$, 
and if this mapping is disjoint ($\domain{\Phi} \cap \domain{\Xi} = \varnothing$).
\end{definition}

\section{Principles of stochastic stimulus}
Reactive stimulus in systems is a composite of the two stimulus types, deterministic and stochastic. 
Stochastic stimulus is required to complete the value of reactive stimulus. In mathematics stochastic stimuli are values which are not defined until used. In accordance with the aforementioned physics interpretation, we require a sequence of stochastic stimuli to be read in sequential order, at the instant corresponding to that order. 

\begin{notation} \label{N:PSEUDO_STOCHASTIC}
The pseudo-function $\xi = \Instance{\Closure{\Xi}}$ means that $\xi$ is a random sample of population $\Closure{\Xi}$.
\end{notation}

\begin{remark}
Distributions, moments, and autocorrelations may be important in real 
problems, but will not be used here. Randomness does not deny these properties.
\end{remark}

\subsection{Cybernetics}
We routinely picture time as a continuum of instants. In this framework, each event of discrete 
systems theory occupies an instant and occurs in a duration (interval). Bounded collections of 
events possess either a time or a starting and ending time. In discrete processes frames constitute a partition of time. Frames adjacent in time are connected. 

\begin{axiom} \label{A:EXIST_CONFIG}
Before observation, a stochastic stimulus is non-existent as a realized value. 
\end{axiom}
\begin{remark}
Before the current cycle (definition \ref{D:CYCLE}), the current stochastic stimulus is an object of probabilistic 
uncertainty (a random variable). In the current cycle it transitions into a realized observation. 
\end{remark}

%

The automaton resembles an automorphism, except certain of its arguments are unknown until their instant of use. In physics 
stochastic stimulus is not computable, and the only way to determine its exact value is through 
observation. 

Deterministic stimulus has the contrasting property that once its value is set, it retains that 
value until re-set. This is the paradigm associated with ordinary computer memory transactions.

\section{Frame space} 
The frame is a two-part structure consisting of starting and ending conditions. Interpreted in systems vocabulary, 
a frame's starting condition is a stimulus and its ending condition is a response. 

\begin{definition} \label{D:FRAME_SPACE}
The \emph{frame space} $\Frm{F}$ of reactive basis $\Basis{\Phi}{\Xi}$ is the set $\Closure{\Phi}\Closure{\Xi} \times \Closure{\Phi}$.
A member $\Frm{f} \in \Frm{F}$ is a \emph{frame}.
\end{definition}
\begin{nomenclature}
Let $\Frm{f} = (\phi \xi, \varphi) \in \Closure{\Phi}\Closure{\Xi} \times \Closure{\Phi}$ be a frame. The choice $\phi\xi \in \Closure{\Phi}\Closure{\Xi}$ is the 
frame's starting condition (abscissa) and $\varphi \in \Closure{\Phi}$ is the frame's ending condition (ordinate).
\end{nomenclature}

\begin{definition}\label{D:ABSCISSA_PROJECTION}
Let $\Basis{\Phi}{\Xi}$ be a basis with frame space $\Frm{F} = \Closure{\Phi}\Closure{\Xi} \times \Closure{\Phi}$. Define the \emph{abscissa}
projection $\absc{}: \Frm{F} \to \Closure{\Phi}\Closure{\Xi}$ by $(\phi\xi, \varphi) \stackrel{\absc{}}{\mapsto} \phi\xi$. Define the \emph{ordinate}
projection $\ord{}: \Frm{F} \to \Closure{\Phi}$ by $(\phi\xi, \varphi) \stackrel{\ord{}}{\mapsto} \varphi$.
\end{definition}

\begin{definition}\label{D:PERSISTENT_STOCHASTIC_COMPONENTS}
Let $\Basis{\Phi}{\Xi}$ be a basis with deterministic-stochastic partition $\Psi = \Phi\Xi$ (see \S\ref{S:DETERMINISTIC_STOCHASTIC_PRTN}). 
Suppose $\Frm{f}$ is a frame in $\Closure{\Phi}\Closure{\Xi} \times \Closure{\Phi}$. The \emph{reactive} stimulus of frame $\Frm{f}$ is 
$\psi = \phi\xi = \absc{\Frm{f}}$. The \emph{stochastic} excitation stimulus of frame $\Frm{f}$ is 
$\xi = \Restrict{(\absc{\Frm{f})}}{\domain{\Xi}}$. Similarly, the \emph{deterministic} stimulus of frame 
$\Frm{f}$ is $\phi = \Restrict{(\absc{\Frm{f})}}{\domain{\Phi}}$.
\end{definition}

\begin{nomenclature}
In the above, both $\phi$ and $\varphi$ are members of $\Closure{\Phi}$. 
Interpretation of the choice space $\Closure{\Phi}$ is contextual. 
It is a component of stimulus in context of a frame's starting condition, and is the system's response in the context 
of a frame's ending condition. 
\end{nomenclature}

\section{Feedback} \label{S:FEEDBACK} 
Two frames may be related such that the ending condition of the first frame is replicated as 
a subset of the second frame's starting condition. This stipulation is called feedback; it's 
conveniently expressed as a mapping restriction.

\begin{axiom} \label{D:FEEDBACK}
\emph{Feedback} is the principle that the current frame's response becomes the next frame's deterministic stimulus.
\end{axiom}
\begin{remark}
For consecutive frames $\Frm{f}$ and $\Frm{f}'$, this condition is 
written $\ord{\Frm{f}} = \Restrict{(\absc{\Frm{f}'})}{\domain{\Phi}}$. 
\end{remark}

\begin{definition}[feedback-coupled frames] \label{D:FEEDBACK_COUPLED}
Let $\Basis{\Phi}{\Xi}$ be a reactive basis with frames $\Frm{f}$ and $\Frm{f}'$ in frame 
space $\Frm{F} = \Closure{\Phi}\Closure{\Xi} \times \Closure{\Phi}$. Frame $\Frm{f}$ is (directionally) feedback-coupled to $\Frm{f}'$ 
if $\ord{\Frm{f}} = \Restrict{(\absc{\Frm{f}'})}{\domain{\Phi}}$.
\end{definition}

\begin{remark} 
Suppose the current frame is $(\phi\xi, \varphi)$ and the next frame is $(\phi'\xi', \varphi')$. 
The feedback equality $\phi' = \varphi$ is intrinsic to discrete reactive systems. 
\end{remark} 

\begin{definition}[coupling for sequences of frames] \label{D:SEQUENTIALLY_COUPLED}
Let $\Basis{\Phi}{\Xi}$ be a reactive basis with sequence of frames 
$\lbrace \Frm{f}_n \rbrace \colon \naturalnumbers \to \Closure{\Phi}\Closure{\Xi} \times \Closure{\Phi}$. The sequence is 
\emph{coupled by feedback} if $\Frm{f}_i$ is coupled to $\Frm{f}_{i+1}$ for each $i \geq 1$.
\end{definition}

\begin{definition}[process] \label{D:PROCESS}
With $\Basis{\Phi}{\Xi}$ a reactive basis, a \emph{process} is a feedback-coupled sequence
of frames $\naturalnumbers \to \Closure{\Phi}\Closure{\Xi} \times \Closure{\Phi}$. 
\end{definition}

\section{Functionality} \label{S:FUNCTIONALITY} 

The functionality is useful to portray the frame as a transformation from the reactive stimulus to the response space, 
which is identical to the deterministic space. 

The functionality determines how the frame goes from reactive stimulus to response. 
If $\Frm{f}_i = (\psi_i, \varphi_i) = (\phi_i\xi_i, \varphi_i)$ is the $i^{\text{th}}$ 
process frame, this concept permits writing $\varphi_i = {\Ftylc{f}}(\psi_i) = {\Ftylc{f}}(\phi_i\xi_i)$, where $\Ftylc{f}$ 
is a functionality:

\begin{definition} \label{D:FUNCTIONALITY}
Let $\Basis{\Phi}{\Xi}$ be a reactive basis. Any mapping $\Ftylc{f} \colon \Closure{\Phi}\Closure{\Xi} \to \Closure{\Phi}$ is a functionality
(that is, if $\Ftylc{f} \in {\Closure{\Phi}}^{\Closure{\Phi}\Closure{\Xi}}$).
\end{definition}

\begin{remark}[functionality versus function]
In its programming sense, the term \Dquo{function} will not be used here. A mathematical functionality differs from 
a software function; functionalities take a global approach to the function's argument list. By virtue of its ordered 
argument list, a programming function is effectively a class of functionalities.
\end{remark}

\begin{definition}[procedure] \label{D:PROCEDURE}
Let $\Fty{F} \subseteq {\Closure{\Phi}}^{\Closure{\Phi}\Closure{\Xi}}$ be a finite collection of functionalities. 
A \emph{procedure} is a sequence $\Idiom{\Ftylc{f}}{n} : \naturalnumbers \to \Fty{F}$. 
\end{definition}

\section{Consistency of frames and functionalities} \label{S:CONSISTENCY} 
The relation holding between frame $\Frm{f} \in \Closure{\Phi}\Closure{\Xi} \times \Closure{\Phi}$ and functionality
$\Ftylc{f} \colon \Closure{\Phi}\Closure{\Xi} \to \Closure{\Phi}$ is membership: either $\Frm{f} \in \Ftylc{f}$ or $\Frm{f} \notin \Ftylc{f}$.

\begin{definition} \label{D:BASIC_CONSISTENCY}
Let $\Frm{f}$ be a frame and $\Ftylc{f}$ be a functionality. The frame and the functionality are \emph{consistent} 
if $\Frm{f} \in \Ftylc{f}$ (that is, $\Frm{f} = (\psi, \varphi) = (\psi, \Ftylc{f}(\psi))$).
\end{definition}

\begin{notation}[anonymous sequence] \label{N:SEQUENCE_NOTATION}
A sequence in a set $S$ is some mapping $\sigma \colon \naturalnumbers \to S$ -- that is, $\sigma \in S^\naturalnumbers$. The
anonymous sequence convention allows reference to a sequence using the compound symbol $\lbrace s_n \rbrace$, understanding
$s \in S$. The symbol $s_i$ denotes that term $(i, s_i) \in \lbrace s_n \rbrace$. 
\end{notation}

\begin{remark}
The convention is clumsy when expressing functional notation; for instance $s_i = \lbrace s_n \rbrace(i)$ means 
$i \stackrel{\lbrace s_n\!\rbrace}{\mapsto} s_i$.
\end{remark}

\begin{definition} \label{D:SEQUENTIAL_CONSISTECY}
Let $\Idiom{\Frm{f}}{n}$ be a sequence of frames and $\Idiom{\Ftylc{f}}{n}$ be a sequence of functionalities. The 
sequences of frames and functionalities are \emph{consistent} if $\Frm{f}_i \in \Ftylc{f}_i$ for each $i \geq 1$ (that is,
$\Frm{f}_i = (\psi_i, \varphi_i) = (\psi_i, \Ftylc{f}_i(\psi_i))$).
\end{definition}

\chapter{Discrete categorical regulation} \label{S:REGULATORY_STRUCTURES_CHAPTER} 
Regulatory structures for the reactive actuated automaton are partitions, actuators, 
and related transducers. The domain sponsoring regulation is reactive space.
Transducer labels have a subordinate role.

\section{Lookup tables} \label{S:LOOKUP_TABLES} 
A lookup table $T$ is a finite computer science structure of paired indices and values. 
It is secondarily known as an associative array, and denoted that $(i, v) \in T$ or 
$v = T(i)$. As the notation indicates, the relation between indices and values must 
be a mapping.

Lookup tables are alternatively organized via their indices' level sets. 
Let $\ell$ be a level set of indices, so $ \ell = \lbrace i_1, i_2, ... i_k \rbrace$.
In canonical compound form, a lookup table has the property 
$(\ell, v) \in T$ and $(\ell', v) \in T$ implies $\ell = \ell'$. 

Lookup tables are easily implemented in most programming languages. 

\subsection{Classification of lookup tables} \label{S:LOOKUP_CLASSIFICATION} 
We identify two basic types of lookup table which are equivalent but differ in 
set-theoretic structure. The direct-indexed form is a transducer and the level-set 
type is an actuator.

\section{Partitions} \label{S:PARTITIONS} 
In the context of a partition, the level set of \S\ref{S:LOOKUP_TABLES} is known as a block. 

\begin{definition}[block]
A \emph{block} of a collection $X$ is a non-empty subset $B$, such that $B \neq \varnothing$ and $B \subseteq X$.
\end{definition}

\begin{definition}[finite partition]
A \emph{finite partition} of a collection $X$, denoted $\#(X)$, is a finite set of blocks $B_i$, $1 \leq i \leq n$, such that 
each $B_i \neq \varnothing$, $B_i \cap B_j = \varnothing$ if $i \neq j$, and $X = \bigcup B_i$.
\end{definition}

We have defined $\#$ as a single instance of a partition, rather than the set of all partitions.

\begin{definition}[rho function] \label{D:RHO_FUNCTION}
Let $X$ be a set with partition $\#(X)$. Suppose $B \in \#(X)$ is a block and $\chi \in X$ is a member.
The \emph{rho} function is $\varrho_B(\chi) = 
\left\{
    \begin{array}{ll}
        B           & \mbox{if } \chi \in B \\
        \varnothing & \mbox{if } \chi \notin B
    \end{array}
\right. .$ 
\end{definition}

\subsection{The containing block function and families of partitions} \label{S:PARTITION_FAMILY} 

The \emph{containing block} function converts a point of reactive space into its containing 
partition block.

\begin{definition}[containing block of a partition] \label{D:POINT_BLOCK_CONVERSION}

Suppose $\chi \in X$ is a member of set $X$, $\#(X)$ is a partition of $X$, and $B \in \#(X)$ 
is a block of the partition. The \emph{containing block} mapping $\blocks$,  is defined as 
$\SetBuild{(\chi, B_\chi)}{\chi \in X \text{ and } B_{\chi} = \bigcup_{B \in \#(X)} \varrho_B(\chi)}$ 
(see definition \ref{D:RHO_FUNCTION} for rho function).
\end{definition}

\begin{remark}
The containing block function has prototype $\blocks \colon X \to \#(X)$. 
\end{remark}

The abstract set $X$ is now realized as $\Closure{\Psi}$, the space of reactive stimuli. 
Suppose we have a finite collection $\mathcal{P}$ of finite partitions of reactive space 
(that is, for each $p \in \mathcal{P}$, $p = \#(\Closure{\Psi})$ for some partition $\#(\Closure{\Psi})$). 
This mapping is symbolized $\blocks$, and it is associated with some implicitly understood 
partition $\#(\Closure{\Psi})$. The notion of a solitary containing block may be extended into a 
family of mappings indexed by $p \in \mathcal{P}$.

\begin{notation}
Let $\mathcal{P}$ be a collection of partitions of reactive space.
For each $p \in \mathcal{P}$,  the corresponding containing block function is denoted 
$\blocks_p$.
\end{notation}

\section{Actuators} \label{S:ACTUATORS} 

%


In software engineering, we take the actuator as a uniform comb-structured conditional 
statement: if \emph{condition$_1$} then \emph{consequence$_1$}, elsif \emph{condition$_2$} 
then \emph{consequence$_2$}, \ldots{} else \emph{consequence$_n$}. Each condition is a set 
of reactive stimuli, and the consequences are homogeneous and possibly compound-valued.

\begin{definition}[actuator] \label{D:ACTUATOR}
An \emph{actuator} is a mapping $\Act{a} \colon \#(\Closure{\Psi}) \to N$ from the blocks of a 
finite partition of reactive space to an abstract space $N$. 
\end{definition}

\begin{remark}
The actuator of definition \ref{D:ACTUATOR} has the form 
$\{ B_1 \mapsto \nu_1, \thickspace\cdots\thickspace , B_n \mapsto \nu_n \}$, 
where each block $B_i \in \#(\Closure{\Psi})$ and $\nu_i \in N$.
\end{remark}

\begin{definition}[canonical form] \label{D:CANONICAL_FORM} 
An actuator $\Act{a} \colon \#(\Closure{\Psi}) \to N$ is in \emph{canonical form} if 
$(B, \nu) \in \Act{a}$ and $(B', \nu) \in \Act{a}$ implies ${B} = {B'}$ for any $\nu \in N$.
\end{definition}

\begin{lemma} \label{L:ACTUATOR_DOMAIN}
Any actuator $\Act{a}$ has domain $\domain{\Act{a}}$, which is itself a partition of the form $\#(\Closure{\Psi})$. 
\end{lemma}
\begin{proof}
Definition \ref{D:ACTUATOR} states that any actuator $\Act{a}$ is a mapping $\#(\Closure{\Psi}) \to N$. 
By equivocation, $\domain{\Act{a}} = \#(\Closure{\Psi})$ is a partition of $\Closure{\Psi}$.
\end{proof}

\section{Transducers} \label{S:TRANSDUCERS} 
\begin{lemma}[re-indexed actuator] \label{L:ACTUATOR_REINDEX}
Suppose $\Act{a} \colon \#(\Closure{\Psi}) \to N$ is an actuator per definition \ref{D:ACTUATOR}, 
and $\blocks_{\domain{\Act{a}}} \colon \Closure{\Phi} \to \#(\Closure{\Phi})$ is its containing block function 
(definition \ref{D:POINT_BLOCK_CONVERSION}).
An actuator $\Act{a}$ is re-indexed from partition blocks $\#(\Closure{\Psi})$ to reactive stimulus space 
$\Closure{\Psi}$ through the composition: 
\[
\tranbasic{\Act{a}} \colon \textstyle{\Closure{\Psi}} \to N.
\]
\end{lemma}

\begin{proof}
By definition \ref{D:ACTUATOR}, actuator $\Act{a}$ is a mapping from blocks of a partition of reactive space to 
a transduction space. Suppose $(B, \nu_B) \in \Act{a}$. We desire to associate the transduction space point 
$\nu_B$ not with its containing block $B$, but rather with a point $\psi \in B$ in reactive space. Thus we 
re-introduce the containing block conversion of definition \ref{D:POINT_BLOCK_CONVERSION}, which maps points 
of reactive space to their containing blocks of the partition. The actuator maps 
$\Act{a} \colon \#(\Closure{\Psi}) \to N$, and the containing block conversion maps 
$\blocks_{\domain{\Act{a}}} \colon \Closure{\Psi} \to \#(\Closure{\Psi})$. Thus the composite function maps 
$(\Act{a} \circ \blocks_{\domain{\Act{a}}}) \colon \Closure{\Psi} \to N$. For any block $B \in \#(\Closure{\Psi})$ and for 
any reactive stimulus $\psi \in B$, $(\psi, \nu_B) \in \Act{a} \circ \blocks_{\domain{\Act{a}}}$  as desired.
\end{proof}

\begin{definition}[transducer] \label{D:TRANSDUCER}
Let $\Act{a} \colon \#(\Closure{\Psi}) \to N$ be an actuator per definition \ref{D:ACTUATOR}, 
and let $\blocks_{\domain{\Act{a}}} \colon \Closure{\Psi} \to \#(\Closure{\Psi})$ be its containing block function 
(definition \ref{D:POINT_BLOCK_CONVERSION}).
A \emph{transducer} is a re-indexed actuator \\ 
$\Act{a} \circ \blocks_{\domain{\Act{a}}} \colon \Closure{\Psi} \to N$ 
of lemma \ref{L:ACTUATOR_REINDEX}.
\end{definition}

\begin{remark} \label{D:TRANSDUCTION}
A \emph{transduction} applies a re-indexed actuator 
$\Act{a} \circ \blocks_{\domain{\Act{a}}} \colon \Closure{\Psi} \to N$ 
to infer the value $\nu \in N$ from a reactive stimulus.
\end{remark}

\section{Actuator network and labels} \label{S:ACTUATOR_IDENTIFIER} 
Labels facilitate identifying the current position in the automaton's stepwise network of decisions (actuators) and 
consequent actions (functionalities). To this end, each actuator is assigned a unique identifier called a 
\emph{locus}. A locus is simply a location identifier (address) for an actuator.

\begin{definition} \label{D:LOCUS}
A \emph{locus} is a label for an actuator.
\end{definition}

The impetus for this term is its use as a target of a \Dquo{goto} statement in programming languages. The interrogative 
\Dquo{goto where?} implies that a location (locus) is the desired response. Loci become the foundation for connectivity 
in a network of actuators.

\section{Space of transduction} 
Consider an actuator, which is a mapping between the blocks of a partition of reactive space 
(\Dquo{positions}) and an otherwise unrelated set of values. Suppose the mapping is 
multi-valued, consisting of 
\begin{itemize}
  \item the current functionality, which transforms the current reactive stimulus into the   
        current response, and
  \item the next locus, which is the basis for transition in a network.
\end{itemize}

In any reactive actuated automaton, the transduced value of each actuator consists of an ordered pair: 
the current functionality and the next locus.  

\begin{definition} \label{D:TRANSDUCTION_SPACE}
The \emph{transduction} space of an actuator is the set $\Fty{F} \times \Lambda$. 
A member $(\Ftylc{f}, \lambda')$ is a \emph{transduced} value.
\end{definition}

\begin{theorem}[the \emph{transduced} values] \label{T:TRANSDUCED_VALUE}
The values transduced from actuator $\Act{a}$, applied at reactive stimulus $\psi$, are:  
\begin{align*}
    (\Ftylc{f}, \lambda') &= \transduction{\Act{a}}{\psi} \in \Fty{F} \times \Lambda \\ 
    \text{or, individually} &: \\ 
    \Ftylc{f} &= \tranfunc{\Act{a}}{\psi} \in \Fty{F} \\
    \Ftylc{f}(\psi) &= (\tranfunc{\Act{a}}{\psi})(\psi) \in \textstyle{\Closure{\Phi}} \\
    \lambda' &= \tranaddr{\Act{a}}{\psi} \in \Lambda.
\end{align*}
\end{theorem}
\begin{proof}
Actuation from reactive stimulus means the re-indexing of an actuator with respect to reactive 
stimulus (lemma \ref{L:ACTUATOR_REINDEX}). Let $\blocks_{\domain{\Act{a}}}$ be the block 
conversion mapping for $\domain{\Act{a}}$, the partition of actuator $\Act{a}$ (see 
\S\ref{S:PARTITIONS} and definition \ref{D:POINT_BLOCK_CONVERSION}). Applying this lemma to 
the actuator $\Act{a}$, the re-indexed actuator is 
$(\Ftylc{f}_{i}, \lambda') = \transduction{\Act{a}}{\psi}$. 

The remaining formulas are straightforward manipulations of the abscissa, ordinate, and application operations.
\end{proof}

Consider a transducer $\tranbasic{\Act{a}} \colon \Closure{\Psi} \to \Fty{F} \times \Lambda$
(re-indexed actuator, definition \ref{D:TRANSDUCER}). 
This means that $(\Ftylc{f}, \lambda') \in \Fty{F} \times \Lambda$ is the transduced value 
(theorem \ref{T:TRANSDUCED_VALUE}).
Functionality $\Ftylc{f}$ completes frame $\Frm{f} = (\psi, \Ftylc{f}(\psi))$, and 
$\lambda'$ becomes the locus of the next actuator.

\begin{remark}
The number of transductions is finite. One must not mistake the cardinality of transductions for the cardinality of 
arguments to the transducer $\transduction{\Act{a}}{\phi}$, which is $\Card{\Closure{\Psi}}$. The variety of transductions 
is limited by Cauchy-Schwarz, $\Card{\Fty{F} \times \Lambda} \leq \Card{\Fty{F}}\Card{\Lambda}$. 
\end{remark}

\chapter{Catalogs} \label{S:CATALOG_CHAPTER} 

Discrete systems theory (software) is identified with the reactive actuated automaton (\Acro{RAA}). 
The term \Dquo{discrete} designates that each automaton step is associated with a finite dead time, during 
which no event is recorded.

An automaton is a self-governing machine whose architecture of stimuli and responses automates 
an algorithm. The deterministic finite automaton (\Acro{DFA}, see example in appendix \ref{S:DFA}) is a 
simple structure describing transit-based behavior. However, the \Acro{DFA} does not explain 
the underlying physics by which transitions are accomplished. The \Acro{DFA} can be modified to make explicit the 
mechanism governing state transition. The result is the reactive actuated automaton, which mechanizes logic 
using structure analogous to programming language. An informal analogy between a reactive actuated automaton 
and a programming language will be proposed at the end of this section.

\section{Inventory of catalogs} \label{S:CATALOG_INVENTORY} 
The reactive actuated automaton is constructed from intertwined sets of sets. 
Each of these sets is given the special name \emph{catalog}. 

\subsection{Catalog of reactivity} \label{S:STIMULUS} 
The catalog of reactivity is similar to the previous definition of \emph{reactive basis} 
(definition \ref{D:REACTIVE_BASIS}). Accurately $(\Phi, \Xi)$ is a catalog while its 
closure $(\widehat{\Phi}, \widehat{\Xi})$ is a reactive basis.

\subsection{Catalog of functionalities} \label{S:CATALOG_OF_FUNCTIONALITY} 
Functionalities were introduced earlier in \S\ref{S:FUNCTIONALITY}.

\begin{definition}[catalog of functionality] \label{D:CATALOG_OF_FUNCTIONALITY}
Let $\Basis{\Phi}{\Xi}$ be a reactive basis with total stimulus 
$\Closure{\Psi} = \Closure{\Phi}\Closure{\Xi}$. A finite subset 
$\Fty{F} \subseteq {\Closure{\Phi}}^{\Closure{\Psi}}$ is a \emph{catalog of functionality}.
\end{definition}

\begin{definition}[procedure] \label{D:PROCEDURE}
Let $\Basis{\Phi}{\Xi}$ be a reactive basis and suppose 
$\Fty{F} \subseteq {\Closure{\Phi}}^{\Closure{\Psi}}$ is a catalog of functionality. 
A \emph{procedure} is a sequence 
$\lbrace \Ftylc{f}_n \rbrace \colon \naturalnumbers \to \Fty{F}$.
\end{definition}

\subsection{Catalogs of loci and actuation} \label{S:CATALOG_OF_ACTUATION} 
Since one is a label for the other, the catalogs of loci and actuation are interrelated.

\begin{definition}[catalog of actuation] \label{D:CATALOG_OF_ACTUATION}
The catalog of \emph{actuation} $\Act{A} \subseteq N^{\#(\Closure{\Psi})}$ is a finite subset of the collection of actuators.
\end{definition}

\begin{definition}[catalog of loci] \label{D:CATALOG_OF_LOCI}
A catalog of \emph{loci} is a set $\Lambda$, each element $\lambda$ of which 
bijectively identifies some member of the catalog of actuation.
\end{definition}

\begin{notation}[anonymous label bijection] \label{N:ANONYMOUS_BIJECTION}
By hypothesis, there exists an anonymous bijection $\Lambda \to \Act{A}$ between loci and actuators. Temporarily 
allow the symbol \Dquo{$\lozenge$} to stand for this bijection. The application $\lozenge(\lambda)$ is 
represented in de-referencing notation as $\Bigast{\lambda}$. 
\end{notation}

\begin{remark}
The symbol for the anonymous label bijection will be used only in this section, 
\S\ref{S:CATALOG_OF_ACTUATION}. 
The notation $\Bigast{\lambda}$ will be preferred  because it is already familiar as the de-referencing symbol 
in the C programming language.
\end{remark}

\begin{theorem} \label{T:CATALOG_OF_ACTUATION_APPLICATION}
The application of the anonymous label bijection ($\lozenge$) at $\lambda \in \Lambda$ 
is an actuator $\Bigast{\lambda}$.
\end{theorem}
\begin{proof}
The anonymous label bijection $\lozenge \colon \Lambda \to \Act{A}$ is a mapping, and 
$\Act{A} \subseteq N^{\#(\Closure{\Psi})}$, the set of actuators; hence $\lozenge(\lambda) \in N^{\#(\Closure{\Psi})}$. 
By notation \ref{N:ANONYMOUS_BIJECTION}, $\lozenge(\lambda) = \Bigast{\lambda}$.
\end{proof}

\begin{definition}
Let $\Lambda$ be a catalog of loci. A \emph{path} is a sequence of actuator labels
$\lbrace \lambda_n \rbrace \colon \naturalnumbers \to \Lambda$ \\(see McCabe, appendix \ref{S:LOCUS_DIGRAPH}).
\end{definition}

\chapter{Configuration} \label{S:CONFIGURATION_CHAPTER} 
The subject of the previous chapter was information structures called catalogs. 
The present topic is how these catalogs combine into a system.  

\section{Configuration space and configuration} \label{S:CONFIGURATION_SPACE} 
\begin{definition}[configuration space] \label{D:CONFIG_SPACE}
The \emph{configuration space} for a reactive actuated automaton 
$\Auto{A}$ consists of six synchronized catalogs: a reactive basis $\Basis{\Phi}{\Xi}$, 
along with catalog of actuation $\Act{A}$, catalog of functionality $\Fty{F}$, 
and catalog of loci $\Lambda$. This configuration space is the Cartesian product 
\[
\ConfigSpace.
\] 
\end{definition}


Automata exist in many varieties. Since the reactive actuated automaton occupies 
the entire present scope of interest, we forgo mandatory use of the qualifiers 
\Dquo{reactive} and \Dquo{actuated.\!}

\begin{definition} [configuration] \label{D:CONFIGURATION}
A configuration $\ConfigScript{c} = \ConfigStd$ is a member of the configuration space $\ConfigSpace$. 
\end{definition}

\section{Configuration and unique state} \label{S:CONFIGURATION_UNIQUE_STATE} 
\begin{axiom} \label{A:CONFIGURATION_UNIQUE_STATE} 
With each configuration is associated exactly one state.
\end{axiom}

\section{Projections and functions of a configuration} \label{S:PROJECTIONS} 
A configuration has several components; it's described by a whole-to-constituent relation.
The mappings that invert this relationship are called projections.

\begin{definition}[Projections] \label{D:CONFIGURATION_PROJECTION}
Let $\ConfigScript{c} = \ConfigStd$ be a member 
of configuration space $\ConfigScript{C} = \ConfigSpace$. 
A \emph{projection} is a mapping $\mho$ from the set of configurations $\ConfigScript{C}$ 
to one of its components. 
\end{definition}

Projections of particular interest are: \vspace{-0.3cm}
\begin{description}
  \item [locus projection] 
      $\mathbf{\mho_\Lambda \colon \ConfigScript{C} \to \Lambda}$: 
      set $\mho_\Lambda\ConfigStd = \lambda$.
  \item [functionality projection 
      $\mho_{\Fty{F}} \colon \ConfigScript{C} \to \Fty{F}$:] 
      set $\mho_\Fty{F}\ConfigStd = \Ftylc{f}$.
  \item [actuator projection 
      $\mho_{\Act{A}} \colon \ConfigScript{C} \to \Act{A}$:] 
      set $\mho_\Act{A}\ConfigStd = \Act{a}$.
\end{description}
\vspace{0.3cm}

\begin{definition}[frame function] \label{D:FRAME_MAPPING}
The \emph{frame} function (mapping) is 
$\mho_\Frm{F}\ConfigStd = (\phi\xi, \Ftylc{f}(\phi\xi)) = \Frm{f}$.
\end{definition}

\section{Connectedness of configurations} \label{S:CONNECTEDNESS} 
With each current configuration is associated a family of possible next configurations, 
dependent on the choice of the next stochastic stimulus. This next configuration has a 
complication: its value is uncertain before the current frame, and becomes certain only 
after its observation in the next frame (axiom \ref{A:EXIST_CONFIG}). 

\begin{definition}[parametric family of current configurations] \label{D:CURRENT_CONFIGURATION}
The set of current configurations is a three-parameter family $\ConfigScript{K}_\Curr$ of configurations:
\[
\ConfigScript{K}_\Curr(\phi, \xi, \lambda) 
    = (\phi, \xi, \Ftylc{f}_\Curr, \Act{a}_\Curr, \lambda) 
    = (\phi_\Curr, \xi_\Curr, \Ftylc{f}_\Curr, \Act{a}_\Curr, \lambda_\Curr)
\]
where:
\begin{align*}
    (\phi_\Curr, \xi_\Curr, \lambda_\Curr) &= (\phi, \xi, \lambda) 
                                               \in \ConfigSpace
                                           && \textrm{(free parameters)} \\
                          \Act{a}_\Curr &= \Bigast{\lambda} 
                                           && \textrm{(dereference of current actuator)} \\
                        \Ftylc{f}_\Curr &= \tranfunc{\Act{a}_\Curr}{\phi\xi} 
                                           && \textrm{(functionality transduced from current actuator)} \\
\end{align*}
\end{definition}
\emph{Rationale.}
Lemma \ref{L:ACTUATOR_REINDEX} (actuator re-indexing), 
notation \ref{N:ANONYMOUS_BIJECTION} (anonymous label bijection), 
theorem \ref{T:TRANSDUCED_VALUE} (the transduced values),  
and embedded side comments.

\begin{definition}[parametric family of next configurations] \label{D:NEXT_CONFIGURATION}
The set of next configurations is a four-parameter family $\ConfigScript{K}_\Next$ of configurations:
\[
{\ConfigScript{K}_\Next}(\phi, \xi, \lambda, \xi') = 
    (\phi_\Next, \xi_\Next, \Act{a}_\Next, \Ftylc{f}_\Next, \lambda_\Next)
    \in \textstyle{\ConfigSpace}
\]
where:
\begin{align*}
                   \phi, \xi, \lambda, \xi' &&& \textrm{(free parameters)} \\
                           \Act{a}_\Curr &= \Bigast{\lambda} 
                                            && \textrm{(dereference of current actuator)} \\
                         \Ftylc{f}_\Curr &= \tranfunc{\Act{a}_\Curr}{\phi\xi} 
                                            && \textrm{(functionality transduced from current actuator)} \\
                           \lambda_\Next &= \tranaddr{\Act{a}_\Curr}{\phi\xi} 
                                            && \textrm{(next locus transduced from current actuator)} \\
                           \varphi_\Curr &= \Ftylc{f}_\Curr(\phi\xi) 
                                            && \textrm{(response from current actuator)} \\
                              \phi_\Next &= \varphi_\Curr  = \Ftylc{f}_\Curr(\phi\xi) 
                                            && \textrm{(feedback from current response to next stimulus)} \\
                               \xi_\Next &= \xi' \\
                           \Act{a}_\Next &= \Bigast{\lambda_\Next} 
                                            && \textrm{(dereference of next actuator)} \\
                         \Ftylc{f}_\Next &= \tranfunc{\Act{a}_\Next}{\phi_\Next\xi_\Next} 
                                            && \textrm{(functionality transduced from next actuator)} \\
\end{align*}
\end{definition}
\begin{rationale}
Lemma \ref{L:ACTUATOR_REINDEX} (actuator re-indexing), notation \ref{N:ANONYMOUS_BIJECTION} (anonymous 
label bijection), theorem \ref{T:TRANSDUCED_VALUE} (the transduced values), and embedded side comments.
\end{rationale}

\begin{definition}[connected configurations] \label{D:CONNECTED_CONFIGURATIONS}
Let $\ConfigScript{c}_1 = \ConfigSub{1}$ and $\ConfigScript{c}_2 = \ConfigSub{2}$ be two 
configurations. If 
$\ConfigScript{c}_1 \in {\mathfrak{K}_\Curr}(\phi_1, \xi_1, \lambda_1)$ and 
$\ConfigScript{c}_2 \in {\mathfrak{K}_\Next}(\phi_1, \xi_1, \lambda_1, \xi_2)$, then 
$\ConfigScript{c}_1$ is directionally connected to $\ConfigScript{c}_2$. 
\end{definition}

\begin{notation}
The notation $\ConfigScript{c}_1 \prec \ConfigScript{c}_2$ signifies 
\Dquo{directionally connected} and
$\ConfigScript{c}_1 \nprec \ConfigScript{c}_2$ otherwise.
\end{notation}

\begin{remark}
Connectivity of configurations ($\prec$) is not generally transitive.
\end{remark}

\section{Trajectory} \label{S:TRAJECTOY} 

\begin{definition}[walk] \label{D:WALK}
A \emph{walk} is a sequence of configurations 
$\naturalnumbers \to \ConfigSpace$.
\end{definition}

\begin{definition}[trajectory] \label{D:TRAJECTORY}
A walk $\Idiom{\ConfigScript{c}}{n}$, each configuration $\ConfigScript{c}_i$ of which is 
directionally connected to 
the following configuration $\ConfigScript{c}_{i+1}$, is a \emph{trajectory}  
(compare definition \ref{D:SEQUENTIALLY_COUPLED}).
\end{definition}

\begin{nomenclature}
The term \Dquo{run} is synonymous with \emph{trajectory}. 
\end{nomenclature}

\begin{remark}
It is possible that a particular configuraton appears in one run but not in 
another.
\end{remark}

\begin{definition}[finite run] \label{D:FINITE_RUN}
An initial segment of a run (trajectory).
\end{definition}

\section{Inductive behavior of automaton runs} \label{S:INDUCTIVE_AUTOMATON} 
A reactive actuated automaton $\Auto{A}$ is a class of configurations 
in $\ConfigSpace$. A \emph{run} of automaton $\Auto{A}$ is also dependent on a 
\emph{starting} configuration $\ConfigScript{c}_0$. 

\begin{definition}[automaton run]
An automaton \emph{run} $\Idiom{\ConfigScript{c}}{n}$ is an inductive sequence built 
from connected members of the configuration space $\ConfigSpace$: 
\vspace{-8pt}
\begin{description}
  \item[Base Clause:] For some (initial) configuration $\ConfigScript{c}_0$, 
       $\ConfigScript{c}_0 \in \mathfrak{A}$. 
  \item[Inductive Clause:] For any configurations $\ConfigScript{c}$ 
       and $\ConfigScript{c}'$, 
       $\ConfigScript{c} \in \mathfrak{K}_\Curr(\phi, \xi, \lambda)$ 
       (theorem \ref{D:CURRENT_CONFIGURATION}) and \\
       $\ConfigScript{c}' \in \mathfrak{K}_\Next(\phi, \xi, \lambda, \dot{\xi})$ 
       (theorem \ref{D:NEXT_CONFIGURATION}), 
       $\ConfigScript{c}' \in \mathfrak{A}$ if and only if $\ConfigScript{c} \in \mathfrak{A}$.
\end{description}
\end{definition}

This inductive class of configurations is based on an intersection of the two parametric 
families: the current family of configurations, and the next family of configurations 
.

\chapter{State spaces} \label{S:STATE_SPACES} 
A state space is a proper subset of the configuration of the 
reactive actuated automaton that carries information 
equivalent to the entire configuration. 
By \Dquo{carries information} we shall mean that the full set is 
algebraically recoverable from the subset. 
We will identify two varieties of state space: indigenous and 
triune step. Triune step state space is inspired by explicit operation of 
the automaton's configuration space, while indigenous state space 
considers a minimal sufficient subset of the configuration space. 

Regardless of which state space is used, the reactive actuated 
automaton has the same configuration space. 

\section{Triune step state space} \label{S:STEP_STATE_SPACE} 
While leaving undefined the notion of a step in an algorithm, we 
do formalize it for the automaton, where a triune step is equivalent to three 
individual points: 
one from a \emph{path} (definition \ref{D:}).
one from a \emph{process} (definition \ref{D:}).
and one from a \emph{procedure} (definition \ref{D:}).

A triune step consists of processing an actuator, which induces a 
functionality and a transition 
to the next actuator. This information is included in the 
Cartesian product of the catalogs of loci, functionality, 
and frames. 

\begin{definition}[triune step space] \label{D:STEP_SPACE}
Let $\Lambda$ be a catalog of loci, and suppose basis 
$\Basis{\Phi}{\Xi}$ underlies the catalog of functionality 
$\Fty{F} \subseteq {\Closure{\Phi}}^{\Closure{\Phi}\Closure{\Xi}}$ 
and the frame space 
$\Frm{F} = \Closure{\Phi}\Closure{\Xi} \times \Closure{\Phi}$. A 
\emph{triune step space} $\Stp{S}$ is the Cartesian product 
$\Stp{S} = \Lambda \times \Fty{F} \times \Frm{F}$.
\end{definition}

\begin{nomenclature}[triune step]
A triune step state is a point in triune step space.
\end{nomenclature}

\begin{theorem} \label{T:REGENERATE_FROM_STEP}
The configuration $\ConfigStd \in \ConfigSpace$ is algebraically 
regenerable from the triune step state $\Stplc{s} = (\lambda, \Ftylc{f}, \Frm{f})$. 
\end{theorem}
\begin{proof}
The values of $\lambda \in \Lambda$, $\Ftylc{f} \in \Fty{F}$, and 
$\Frm{f} \in \Frm{F}$ are given by hypothesis. 
Other values may be read off directly. The value of actuator 
$\Act{a} \in \Act{A}$ is $\Act{a} = \Bigast{\lambda}$ 
(see notation \ref{N:ANONYMOUS_BIJECTION}). The total reactive 
stimulus $\psi$ is $\absc {\Frm{f}}$, which has 
dyadic constituents $\phi = \Restrict{\absc {\Frm{f}}}{\domain \Phi} \in \Closure{\Phi}$ and 
$\xi = \Restrict{\absc {\Frm{f}}}{\domain \Xi} \in \Closure{\Xi}$. 
\end{proof}

\section{Indigenous state space} \label{S:INDIGENOUS_STATE_SPACE} 
The reactive actuated automaton's configuration space $\ConfigSpace$ is interlocked.
Indigenous state space is one answer to which components comprise a minimum sufficient set.  

\begin{definition}[indigenous space] \label{D:INDIGENOUS_STATE_SPACE}
\emph{Indigenous} state space $H$ is the Cartesian product 
$\Lambda \times \Closure{\Psi}$ of the catalog of loci and
the space of reactive stimulus.
\end{definition}

Members of indigenous space are called indigenous states.
Indigenous states compactly summarize the operational condition of an automaton, and map to equivalent steps.

\begin{theorem} \label{T:RECOVER_FROM_INDIGENOUS}
The configuration $\ConfigStd \in \ConfigSpace$ is algebraically recoverable from the indigenous state $\eta = (\lambda, \psi)$. 
\end{theorem}
\begin{proof}
The values of $\lambda \in \Lambda$ and $\psi \in \Closure{\Psi}$ are given by hypothesis. The value of actuator $\Act{a} \in \Act{A}$ is $\Act{a} = \Bigast{\lambda}$ (see 
notation \ref{N:ANONYMOUS_BIJECTION}). 
The given reactive stimulus $\psi$ has dyadic constituents $\phi = \Restrict{\psi}{\domain \Phi} \in \Closure{\Xi}$ 
and $\xi  = \Restrict{\psi}{\domain \Xi} \in \Closure{\Xi}$. 
The functionality $\Ftylc{f}$ is transduced from actuator $\Act{a}$: 
$\Ftylc{f} = \tranfunc{\Act{a}}{\psi} \in \Fty{F}$ (see theorem \ref{T:TRANSDUCED_VALUE}).
\end{proof}

\chapter{The reactive actuated automaton} \label{S:RAA_CHAPTER} 

\section{Principle of operation} \label{S:RAA_OPERATION} 
Operation of the \Acro{RAA} is recognized as the base and repetitive clauses of an 
inductive form:
\begin{enumerate}
  \item Initialize. [definition \ref{D:INITIALIZER}]
  \item Cycle. [definition \ref{D:CYCLE}]
  \item Go To 2. [Repeat]
\end{enumerate}

\emph{Initialize} is the base clause of the inductive form, which prepares the 
automaton's context. \emph{Cycle} is the organization of \Acro{RAA} subtasks for the 
repetitive clause. 

\begin{remark}
Before undertaking initialization, it is necessary to examine the cycle to discover 
and classify its sensitivities.
\end{remark}

\subsection{Cycle} \label{S:RAA_CYCLE} 
\emph{Cycle} is the organization of \Acro{RAA} subtasks for the inductive clause. 

\begin{definition}[cycle] \label{D:CYCLE} 
A \emph{cycle} of a reactive actuated automaton is a unit of work consisting of several ordered subtasks:
\begin{enumerate}
  \item Observe the current values of the stochastic stimuli \\
        $\xi_\Curr = \Instance{\Closure{\Xi}}$ \quad [random sample].
  \item Update the current stimulus as the union of the current stochastic stimulus and 
        the previous response \\ 
        $\psi_\Curr = \varphi_\Prev \cup \xi_\Curr$ \quad [feedback, \S\ref{S:FEEDBACK}].
  \item Update the current locus as the previous cycle's next actuator locus \\ 
        $\lambda_\Curr = \lambda_\Prev$.
  \item Determine actuator addressed by current locus \\
        $\Act{a}_\Curr = \Bigast{\lambda}_\Curr$
        \quad [dereference notation \ref{N:ANONYMOUS_BIJECTION}]. 
  \item Determine the transduced quantities (current functionality and next locus) \\
        $(\Ftylc{f}_\Curr, \lambda_\Next) = \transduction{\Act{a}_\Curr}{\psi_\Curr}$
        \quad [theorem \ref{T:TRANSDUCED_VALUE}].
  \item Determine the current response by applying the current functionality to the current reactive stimulus \\
        $\varphi_\Curr = \Ftylc{f}_\Curr(\psi_\Curr)$. 
\end{enumerate}
\end{definition}

\subsection{Initialize} \label{S:RAA_INITIALIZATION} 
It is evident that \emph{cycle} goes awry whenever $\varphi_\Prev$ (step 2) or 
$\lambda_\Prev$ (step 3) is undefined. On the first cycle, these are not defined, 
so $\psi_\Curr$ and $\lambda_\Curr$ are not updated. \emph{Initialize}, 
the base clause of the inductive form, must establish reasonable defaults. 

\begin{definition}[initializer] \label{D:INITIALIZER} 
Let $\Idiom{\ConfigScript{c}}{n}$ be the trajectory to be initialized. 
\vspace{-0.25cm} 
\begin{itemize} 
  \item If there exists a configuration $\ConfigScript{c}$ such that
        $\ConfigScript{c} \prec \ConfigScript{c}_1$, 
        set actuator $\Act{a} = \mho_{\Act{A}}(\ConfigScript{c})$. 
  \item Compute the transduced values 
        $(\varphi_\Prev, \lambda_\Prev) = \transduction{\Act{a}}{\psi}$
        \quad\quad\quad
        (theorem \ref{T:TRANSDUCED_VALUE}). 
\end{itemize} 
\vspace{-0.25cm} 
The value $(\varphi_\Prev, \lambda_\Prev)$ is a valid initializer for $\ConfigScript{c}_1$.
\end{definition}

\section{Halting} \label{S:HALTING} 
\begin{theorem}[reactive systems do not halt] \label{T:NO_HALT}
Regardless of stimulus, in a system the next configuration is completely and uniquely defined 
through next actuator. 
\end{theorem}
\begin{proof}
Regardless of stimulus, the next configuration in a system is completely and uniquely defined 
through next actuator. 
\end{proof}

\section{Sketch of programming language analogy} \label{S:LANGUAGE_ANALOGY} 
The reactive actuated automaton is a network of actuators. 
Each actuator may possess many alternate functionalities and successor actuators. 

The automaton configuration space $\ConfigSpace$ of definition \ref{D:CONFIG_SPACE} resembles an elementary programming 
language. The following comprise an oversimplified analogy:
\begin{itemize}
  \item the domain of deterministic class $\Phi$ represents ordinary program variables;
  \item the domain of stochastic class $\Xi$ represents read-only external inputs \\
        (e.g. sensors, remote commands);
  \item actuators in class $\Act{A}$ represent if-then-elsif-else contingencies;
  \item the transduced functionality $\Ftylc{f}$ represents a reactive stimulus that modifies deterministic variables; 
        \\and 
  \item the transduced label $\lambda$ is a \Dquo{goto} indicating the next actuator to be exercised.
\end{itemize}

\chapter{Reactive iteration} \label{S:ITERATION_CHAPTER} 
Iteration is the transition of an automaton from its current configuration to the next 
configuration, which is unique for deterministic phenomena. Reactive iteration generalizes automorphic iteration; the difference lies in stochastic stimuli. 

In automorphic iteration, all information required to complete an iteration resides in
deterministic stimuli. This condition is not true of reactive iteration. 

\section{Reactivity} \label{S:REACTIVE_ITERATION} 
Reactivity concerns response to stimuli. 

\begin{nomenclature}[observation]
\emph{Observation} of a stochastic stimulus is the term used for determining its 
instantaneous value.
\end{nomenclature}

\begin{axiom}[value of stochastic stimulus]
The value of a stochastic stimulus is observable at any instant of time. 
\end{axiom}
\begin{remark}
This law does not preclude that the observed value of a stochastic variable can be 
copied into a deterministic variabe for later reference.
\end{remark}

For the reactive actuated automaton, the status of a collection of stochastic 
stimuli is simultaneously observable at any instant. 
This occurs as step 1 of the \emph{cycle} sub-process (definition \ref{D:CYCLE}).  

\begin{axiom}[stochastic-deterministic confusion]
The value of any stochastic stimulus cannot be set or influenced by any deterministic stimuli. 
\end{axiom}

\section{Iteration} \label{S:ITERATION} 
Iteration involves at least two adjacent cycles (definition \ref{D:CYCLE})
and creates a trajectory segment (definition \ref{D:TRAJECTORY}). 

\section{Summary of forward reactive iteration} \label{S:REVIEW_ITERATION} 

The following two-cycle table is summarized from the explicit iteration formulae of 
definitions \ref{D:CURRENT_CONFIGURATION}, \ref{D:NEXT_CONFIGURATION}, and theorem 
\ref{T:TRANSDUCED_VALUE} 
 
\begin{table}[h!]
    \begin{center}
        \begin{tabular}{|c|c|c|}
            \hline
Cycle&Configuration Item&Formula \\ \hline
current&$\phi_\Curr$&\textrm{(free parameter)} \\
&$\xi_\Curr$&$\Instance{\widehat{\Xi}}$ \textrm{pseudo-funtion}  \\
&$\lambda_\Curr$&\textrm{(free parameter)} \\
&$\Act{a}_\Curr$&$\Bigast{\lambda_\Curr}$ \\
&$\Ftylc{f}_\Curr$&$\tranfunc{\Act{a}_\Curr}{\phi_\Curr\xi_\Curr}$ \\ \hline
next&$\phi_\Next$&$\varphi_\Curr  = \Ftylc{f}_\Curr(\phi_\Curr\xi_\Curr)$ \quad[feedback] \\
&$\xi_\Next$&$\Instance{\widehat{\Xi}}$ \textrm{pseudo-function} \\
&$\Act{a}_\Next$&$\Bigast{\lambda_\Next}$ \\
&$\Ftylc{f}_\Next$&$\tranfunc{\Act{a}_\Next}{\phi_\Next\xi_\Next}$ \\ 
&$\lambda_\Next$&$\tranaddr{\Act{a}_\Next}{\phi_\Next\xi_\Next}$ \\ 
\hline
        \end{tabular}
    \end{center}
    \caption{Configuration at end of cycle} \label{Ta:CONFIG_FIRST_CYCLE}
\end{table}

\section{Reverse iteration} \label{S:REVERSE_ITERATION_SECTION} 
Our construction of automata has provided that configurations unfold sequentially~-- 
that is, the next configuration becomes known after completing the current configuration. 
Consequently automata inherit an intrinsic \Dquo{forward} orientation. It is also 
reasonable to inquire what configuration may have occurred previously. This question is the 
motivation for reverse iteration, which considers automata operating backwards.

\subsection{Reverse inference} \label{S:ALTERNATIVE_SECTION} 
Suppose \Dquo{inference} is the task of determining what configuration must follow another 
in a trajectory. Because the reactive actuated automaton is a system, this configuration 
must be unique. (see axiom \ref{A:CONFIGURATION_UNIQUE_STATE})

Consequently, reverse inference is identifying all immediate predecessor steps 
${\ConfigScript{c}_{n-1}}$ such that $\ConfigScript{c}_n = \Auto{A}(\ConfigScript{c}_{n-1})$. We have 
\[
{\Auto{A}}^{-1}(\ConfigScript{c}) = \lbrace \tilde{\ConfigScript{c}} \in \ConfigScript{C} \;\colon\; \Auto{A}(\tilde{\ConfigScript{c}}) = \ConfigScript{c} \rbrace.
\]

\begin{remark}
Although referring informally to the inverse ${\Auto{A}}^{-1}$ of automaton $\Auto{A}$, 
speaking precisely we have defined the inverse $\tilde{\ConfigScript{c}}$ of a 
\emph{configuration} $\ConfigScript{c} \in \ConfigScript{C}$ within the configuration space.
\end{remark}

\subsection{Reverse inference as subset of configuration space} \label{S:POWER_FUNCTION_INVERSE} 
\begin{definition} \label{D:STOCHASTIC_INVERSE}
Let $\ConfigScript{c} \in \ConfigScript{C}$ be a configuration and configuration space, 
Reverse inference $u^{-1}(\ConfigScript{c})$ is the 
mapping $\AutoInv{a} \colon \ConfigScript{C} \to \powerset{\ConfigScript{C}}$ defined by 
\[
u^{-1}(\ConfigScript{c}) = \lbrace \tilde{\ConfigScript{c}} \in \ConfigScript{C} 
\colon u(\tilde{\ConfigScript{c}}) = \ConfigScript{c} \rbrace,
\] 
where $\powerset{\ConfigScript{C}}$ denotes the power set of $\ConfigScript{C}$. 
\end{definition}

In other words, the reverse inference relation is a mapping from a configuration to a 
set of configurations.

\section{Reactive morphism} \label{S:REACTIVE_MORPHISM} 
\begin{definition}[reactive morphism] \label{D:REACTIVE_MORPHISM} 
Let $\ConfigScript{C}$ be the set of all configurations and $\widehat{\Xi}$ be the included 
set of stochastic stimuli. A \emph{reactive} morphism is a mapping 
$v: \ConfigScript{C} \to \ConfigScript{C}^{\widehat{\Xi}} 
$, where 
$v\ConfigSub{1} = (\xi_2,\ConfigSub{2})$, and also $\xi_2$ (first instance) equals 
$\xi_2$ (second instance). 
\end{definition}

\begin{remark}
$(\xi_2,\ConfigSub{2})$ is a term of the mapping $\ConfigScript{C}^{\widehat{\Xi}}$ 
in which $\xi_2$ appaers twice. 
\end{remark}

The work to determine $\ConfigSub{1} \prec \ConfigSub{2}$ 
is implicit in the reactive morphism $v$ (definition \ref{D:CONNECTED_CONFIGURATIONS}). 

\subsection{Reactive morphism equality constraint} \label{S:STOCHASTIC_EQUALITY} 
Let $\ConfigScript{C}$ be the set of configurations and $\widehat{\Xi}$ be the  
set of stochastic stimuli. The sets overlap in the sense $\widehat{\Xi} \in \ConfigScript{C}$. 

This can possibly introduce ambiguity, as illustrated 
by the expansion to elementary terms of $(\xi_2,\ConfigSub{2})$ of the expression 
$\ConfigScript{C}^{\widehat{\Xi}}$. The ambiguity is whether the duplicates of 
$\xi_2$ are free or bound. 

\begin{definition}[reactive morphism equality constraint] \label{D:}
Definition \ref{D:REACTIVE_MORPHISM} states that a reactive morphism is a mapping 
$v: \ConfigScript{C} \to \ConfigScript{C}^{\widehat{\Xi}}$ (or 
$v: \ConfigScript{C} \to (\widehat{\Xi} \to \ConfigScript{C})$). It is understood that 
the repeated value of $\xi \in \widehat{\Xi}$ must equal the value embedded within 
$\ConfigScript{C}$. This condition is here called the \emph{reactive morphism equality} onstraint.
\end{definition}

\subsection{Inverse of reactive morphism} \label{S:ITERATIVE_INVERSE} 
The space of stochastic stimuli is part of configureation. Over the configuration 
space, the reactive morphism takes the abstract form 
$\ConfigScript{C} \to \ConfigScript{C}^{\widehat{\Xi}}$. A reactive morphism has 
abscissa $\ConfigSub{1}$ and ordinate $(\xi_2,\ConfigSub{2})$. 
According to set theory, the inverse of the reactive morphism has the 
\emph{ordinate-implies-abscissa} representation 
\[
(\xi_2,\ConfigSub{2}) \to \ConfigSub{1}.
\]

\begin{remark}
Both the reactive morphism and its inverse are mappings,
\end{remark}

%
%
%
%
%

\chapter{Cones} \label{S:CONE_CHAPTER} 
A cone is a construct prepared with the inverse of a reactive actuated automaton. 
It consists of all finite backwards trajectories converging to a given point.  

Although the term \Dquo{cone} is more ideologic than geometric, to preserve intuition this chapter uses triune step space rather than configuration. The two are equivalent 
(theorem \ref{T:REGENERATE_FROM_STEP}).

\section{Description} \label{S:CONE_DESCRIPTION} 
The reactive actuated automaton possesses a non-deterministic\footnote{That is, the \emph{inverse} is not generally a pointwise invertible 
mapping as often suggested by the term \emph{inverse}.} inverse relation. See \S\ref{S:ITERATIVE_INVERSE}.

A collection of reverse trajectories is realized through repetitive re-application of the inverse, converging to a designated \emph{crux} 
triune step. These iterative chains may be bounded (trimmed to finite length) by enforcing some stopping criterion. 
This construction results in the cone, a structured set of possible bounded trajectories eventually leading to the crux triune step.
The starting points of such trajectories are known as \emph{precursor} steps of the crux triune step.

\subsection{Predecessor generations} \label{S:PREDECESSOR_GENERATION} 
\begin{definition}
Hypothesize $\Auto{A}$ as a reactive actuated automaton containing configuration 
$\ConfigScript{c}_{0}$. 
Let $u \colon \ConfigScript{C} \to \ConfigScript{C}$ be a mapping with reverse inference relation 
$u^{-1} \colon \ConfigScript{C} \to \powerset{\ConfigScript{C}}$.

(base clause) 
Let base protoset $\ConfigScript{C}_\heartsuit^{(0)} = \lbrace \ConfigScript{c}_\text{crux} \rbrace$ be the 
$0^{\text{th}}$ predecessor generation of 
$\ConfigScript{a}_{0}$.

(inductive clause)
The ${(n+1)}^{\text{st}}$ generation predecessors are defined in terms of the 
$n^{\text{th}}$ generation:
\[
\ConfigScript{C}_\heartsuit^{(n+1)} = 
    \bigcup_{\ConfigScript{c} \in \ConfigScript{C}_\heartsuit^{(n)}} u^{-1}(\ConfigScript{c}).
\]
\end{definition}
\begin{remark}
This definition places protoset $\ConfigScript{C}_\heartsuit^{(1)} = u^{-1}(\ConfigScript{c}_{0})$.
\end{remark}
For a discussion of protoset $\ConfigScript{C}_\heartsuit^{(n)}$ in context of the Cartesian product, see Appendix \S\ref{D:PROTOSET}.

\section{Inductive definition of predecessor generations} \label{S:PREDECESSOR_GENERATION} 
\begin{definition}
Let $\Stp{S}$ be a triune step space containing crux step $\ConfigScript{a}_{\text{crux}}$. 
Suppose $u \colon \Stp{S} \to \Stp{S}$ is an automorphism with reverse inference relation 
$u^{-1} \colon \Stp{S} \to \powerset{\Stp{S}}$.

(base clause) 
Let base protoset $G_\heartsuit^{(0)} = \lbrace \ConfigScript{a}_\text{crux} \rbrace$ be the $0^{\text{th}}$ predecessor generation of 
$\ConfigScript{a}_{\text{crux}}$.

(inductive clause)
The ${(n+1)}^{\text{st}}$ generation predecessors are defined in terms of the 
$n^{\text{th}}$ generation:
\[
G_\heartsuit^{(n+1)} = 
    \bigcup_{\ConfigScript{a} \in G_\heartsuit^{(n)}} u^{-1}(\ConfigScript{a}).
\]
\end{definition}
\begin{remark}
This definition places protoset $G_\heartsuit^{(1)} = u^{-1}(\ConfigScript{a}_{\text{crux}})$.
\end{remark}
For a discussion of protoset $G_\heartsuit^{(n)}$ in context of the Cartesian product, see Appendix \S\ref{D:PROTOSET}.

\begin{definition} \label{D:ISOLATION} 
If $G_\heartsuit^{(n)}(\ConfigScript{a}_\text{crux}\}) = \varnothing$ for some natural number $n$, 
then $\ConfigScript{a}_\text{crux}$ is said to be \emph{isolated}.
\end{definition}
\begin{remark}
If $\ConfigScript{a}_\text{crux}$ is isolated, it means that its set of $n^\text{th}$ generation predecessors is empty. This set is 
without predecessors in terms of the inverse automaton, and is consequently unapproachable by forward steps of the parent automaton.
\end{remark}


\section{Predecessor trajectory} \label{S:CONE_WALK} 
A predecessor trajectory begins at triune step 
$\ConfigScript{a}_0 = \ConfigScript{a}_{\text{crux}}$ and proceeds backwards, indexing through 
the negative integers. We abuse the proper sense of the term \Dquo{sequence} by 
permitting an indexing not being the natural numbers.

\begin{definition}
Let $\Stp{S}$ be a triune step space. A bounded predecessor trajectory starting with step $\ConfigScript{a}_0 = \ConfigScript{a}_{\text{crux}}$ is a \emph{finite} 
sequence in triune step space such that $\ConfigScript{a}_{i-1} \prec \ConfigScript{a}_i$ for every $i \leq 0$.
\end{definition}
\begin{remark}
For example in the case $i = -2$, we have  $\ConfigScript{a}_{-3} \prec \ConfigScript{a}_{-2}$.
\end{remark}

Since a bounded predecessor trajectory $\Stplc{w}$ is a \emph{finite} sequence of steps, then it has a finite number of terms which run in
index $i$ from $-(n-1) \leq i \leq 0$, where $n = \Card{\Stplc{w}} < \infty$ is the number of triune steps in $\Stplc{w}$.

\begin{definition}  \label{S:COMPLETE_CONE_WALK}
A set $\Stp{W}$ of bounded predecessor trajectories, all starting at 
$\Stplc{w}_0 = \Stplc{s}_{\text{crux}}$, is \emph{complete} if\\
\[
\forall(\Stplc{w} \in \Stp{W}) 
\forall(-(\Card{\Stplc{w}}-2) \leq i \leq 0) 
\forall(\Stplc{s} \in v^{-1}(\Stplc{w}_{i}))
\exists(\Stplc{e} \in \Stp{W})
     : \Stplc{w}_{i} = \Stplc{e}_{i} \wedge \Stplc{s} = \Stplc{e}_{i-1}.
\]
\end{definition}

\begin{remark}
Completeness assures combinatorial diversity by preventing sub-walk duplication.
\end{remark}

\begin{definition} \label{S:DEPENDENT_CONE_WALK}
Let $\Stplc{w}$ and $\Stplc{w}'$ be localized predecessor walks starting at 
$\Stplc{w}_0 = \Stplc{w}_0' = \Stplc{s}_{\text{crux}}$. Suppose the length of $\Stplc{w}$ 
is $\Card{\Stplc{w}} = n$ and the length of $\Stplc{w}'$ is $\Card{\Stplc{w}'} = m$, with $m \leq n$. If $\Stplc{w}'(i) = \Stplc{w}(i)$ for every $-(m-1) \leq i \leq 0$, then $\Stplc{w}$ 
and $\Stplc{w}'$ are \emph{dependent} with $\Stplc{w}'$ \emph{dispensable}.
\end{definition}

\begin{definition} \label{S:INDEPENDENT_CONE_WALK}
Let $\Stp{W}$ be a set of localized predecessor walks starting at 
$\Stplc{s}_0 = \Stplc{s}_{\text{crux}}$. The set is \emph{independent} if it contains no 
dispensable member.
\end{definition}

\subsection{Cone} \label{S:CONE_CONE_WALK} 
\begin{definition} \label{S:CONE_DEFINITION}
A \emph{cone} $\Con{C}$ is a complete independent set of localized predecessor walks starting at $\Stplc{s}_{\text{crux}}$.
\end{definition}

\begin{remark}[Stopping rule]
We avoid the specificity of various stopping criteria (\S\ref{S:CONE_DESCRIPTION}) by introducing the equivalent but arbitrary 
notion of localization.
\end{remark}


\begin{definition} \label{S:INDEPENDENT_CONE_WALK}
Let $\Stp{W}$ be a set of bounded predecessor trajectories starting at $\Stplc{a}_0 = \Stplc{a}_{\text{crux}}$.
The set is \emph{independent} if it contains no dispensable member.
\end{definition}

\section{The cone formalism} \label{S:CONE_CONE_WALK} 
\begin{definition} \label{S:CONE_DEFINITION}
A \emph{cone} $\Con{C}$ is a complete independent set of bounded predecessor trajectories starting at non-isolated 
(definition \ref{D:ISOLATION}) triune step $\Stplc{a}_{\text{crux}}$.
\end{definition}

\begin{remark}
In definition \ref{S:CONE_DEFINITION}, it is profoundly difficult to detect the existential condition of non-isolation.
\end{remark}

\begin{remark}{(Stopping rule)}
We avoid the specificity of various stopping criteria (\S\ref{S:CONE_DESCRIPTION}) by introducing the equivalent but arbitrary 
notion of boundedness.
\end{remark}

\begin{definition} \label{S:CONE_EDGE_STEP}
Let $\Con{C}$ be a cone with $\Stplc{w} \in \Con{C}$ a member bounded predecessor trajectory. Suppose $n = \Card{\Stplc{w}}$ is the number of
steps in $\Stplc{w}$. The terminus $\Stplc{w}(-(n-1)) = \Stplc{w}_{-(n-1)}$ is the \emph{edge triune step} of trajectory $\Stplc{w}$.
\end{definition}

\begin{definition} \label{S:CONE_EDGE}
Let $\Con{C}$ be a cone.  Its \emph{edge}, written $\edge{\Con{C}}$, is the collection of edge steps of all member bounded 
predecessor trajectories.
\end{definition}

\begin{definition} \label{S:ACYLIC_CONE}
An \emph{acyclic} cone has no cycle (loop) in its path projection $\overline{\mho}_\Lambda$ 
(see \S\ref{D:SEQUENCE_PROJECTION}).
\end{definition}

\begin{theorem} \label{T:ACYLIC_CONE_CORRESPONDENCE}
The acyclic cone $\Con{C}$ and $\edge{\Con{C}}$ are in one-to-one correspondence via the edge triune step relation of a bounded 
predecessor trajectory.
\end{theorem}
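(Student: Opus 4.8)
The plan is to exhibit the edge-step assignment as a map $\epsilon \colon \Con{C} \to \edge{\Con{C}}$ sending each member walk to its edge step, and to show $\epsilon$ is a bijection. First I would check that $\epsilon$ is well defined: by Definition \ref{S:CONE_EDGE_STEP} each localized predecessor walk $\Stplc{w}$ of length $n = \Card{\Stplc{w}}$ has a single terminus $\Stplc{w}_{-(n-1)}$, so $\Stplc{w} \mapsto \Stplc{w}_{-(n-1)}$ is single-valued. Surjectivity is then immediate from Definition \ref{S:CONE_EDGE}, since $\edge{\Con{C}}$ is \emph{by definition} the collection of edge steps of members of $\Con{C}$; every element of the codomain is hit.

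The substance of the theorem is injectivity, and the crucial leverage is that the forward operator $V$ is a genuine mapping, so the precedence relation $\Stplc{s}' \prec \Stplc{s}$ (meaning $V(\Stplc{s}') = \Stplc{s}$) determines successors uniquely. Suppose two walks $\Stplc{w}$ and $\Stplc{w}'$, of lengths $n$ and $m$, share the same edge step $\Stplc{e}$. Because every step of a localized predecessor walk satisfies $V(\Stplc{w}_{i-1}) = \Stplc{w}_i$, both walks trace the \emph{same} forward orbit $\Stplc{e}, V(\Stplc{e}), V^2(\Stplc{e}), \dots$, with $\Stplc{w}_i = V^{\,i+(n-1)}(\Stplc{e})$ and likewise for $\Stplc{w}'$; each reaches the crux at its own terminal index, so $V^{\,n-1}(\Stplc{e}) = \Stplc{s}_{\text{crux}} = V^{\,m-1}(\Stplc{e})$.

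I would then split on the lengths. If $n = m$, a short induction (base $\Stplc{w}_{-(n-1)} = \Stplc{e} = \Stplc{w}'_{-(n-1)}$; step $\Stplc{w}_i = V(\Stplc{w}_{i-1}) = V(\Stplc{w}'_{i-1}) = \Stplc{w}'_i$) forces $\Stplc{w} = \Stplc{w}'$. The interesting case is $n \neq m$; say $n > m$. Then $V^{\,m-1}(\Stplc{e}) = \Stplc{s}_{\text{crux}}$ occurs within the longer walk $\Stplc{w}$ at walk-index $-(n-m)$, while $\Stplc{w}_0 = \Stplc{s}_{\text{crux}}$ as well; since $n - m \geq 1$ these indices are distinct. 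Thus $\Stplc{w}$ visits the crux step, and hence its locus, at two distinct indices, so the intervening segment is a closed loop in the path projection $\overline{\mho}_\Lambda(\Stplc{w})$, contradicting Definition \ref{S:ACYLIC_CONE}. Hence $n = m$, and by the equal-length case $\Stplc{w} = \Stplc{w}'$, so $\epsilon$ is injective.

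Being well defined, surjective, and injective, $\epsilon$ is the asserted one-to-one correspondence. I expect the main obstacle to be the length-mismatch case: it is tempting to believe injectivity follows from determinism of $V$ alone, but determinism only rules out distinct walks of \emph{equal} length. The genuine role of acyclicity is to forbid a short walk and a long walk from terminating at a common edge step, which it does precisely by turning such a coincidence into a revisit of the crux, i.e. a cycle. The one place demanding care is the index bookkeeping that translates orbit exponents $V^k(\Stplc{e})$ into the negative walk indices of Definition \ref{S:CONE_EDGE_STEP} and pins the recurrence to walk-index $-(n-m)$.
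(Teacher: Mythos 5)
Your proof is correct and takes essentially the same route as the paper's: injectivity via forward determinism of the operator propagating term-by-term equality of the two walks from the shared edge step, with a length mismatch forcing the crux to recur in the longer walk at index $-(n-m)$, which acyclicity forbids. Your explicit checks of well-definedness and surjectivity, and your reading of the crux revisit as a loop in the path projection $\overline{\mho}_\Lambda$, only make precise what the paper leaves implicit.
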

\begin{proof}
Assume the opposite: there are different bounded predecessor trajectories with the same edge triune step. Let $\ConfigScript{u}$ and $\ConfigScript{v}$ be two 
different trajectories with common edge triune step $\Stplc{a}_\text{common}$.  

Suppose $\Card{\ConfigScript{u}} = m$ and $\Card{\ConfigScript{v}} = n$, so the 
indexes of $\Stplc{a}_\text{common}$ are $-(m-1)$ and $-(n-1)$ respectively. 

We assert that if $\ConfigScript{u}_{-(m-1)+i} = \ConfigScript{v}_{-(n-1)+i}$ for some $0 \leq i$, 
then $\ConfigScript{u}_{-(m-1)+(i+1)} = \ConfigScript{v}_{-(n-1)+(i+1)}$.

Suppose sequencing is governed by a reactive actuated automaton $\Auto{A}$. 
So sequenced, the next triune step in predecessor trajectory $\ConfigScript{u}$ is $\ConfigScript{u}_{-(m-1)+(i+1)} = \Auto{A}(\ConfigScript{u}_{-(m-1)+i})$.
Similarly, the next triune step in $\ConfigScript{v}$ is $\ConfigScript{v}_{-(n-1)+(i+1)} = \Auto{A}(\ConfigScript{v}_{-(n-1)+i})$.
But if $\ConfigScript{u}_{-(m-1)+i} = \ConfigScript{v}_{-(n-1)+i}$, then 
$\Auto{A}(\ConfigScript{u}_{-(m-1)+i}) = \Auto{A}(\ConfigScript{v}_{-(n-1)+i}) = \Auto{A}(\Stplc{a})$.
By transitivity of equality, $\ConfigScript{u}_{-(m-1)+(i+1)} = \Auto{A}(\Stplc{a}) = \ConfigScript{v}_{-(n-1)+(i+1)}$.

Without loss of generality suppose $m \leq n$. Then $\ConfigScript{u}_{-(m-1)+i} = \ConfigScript{v}_{-(n-1)+i}$ is true for $i = 0, 1, \;...\; m-1$.
At $i = m - 1$ we have $\Stplc{a}_{\text{crux}} = \ConfigScript{u}_0 = \ConfigScript{v}_{-(n-1)+(m-1)} = \ConfigScript{v}_{-(n-m)}$.
Since $\ConfigScript{v}$ is a bounded predecessor trajectory of a cone, then $\ConfigScript{v}_0 = \Stplc{a}_{\text{crux}}$.
But $\ConfigScript{v}_0 = \Stplc{a}_{\text{crux}} = \ConfigScript{v}_{-(n-m)}$. Because the cone is assumed acyclic, 
$\ConfigScript{v}_0$ and $\ConfigScript{v}_{-(n-m)}$ must then be the same identical triune step -- that is, $m = n$.

Here the assumption of two different bounded predecessor trajectories with the same edge triune step leads to the contradiction that both are indeed the 
same identical trajectory. This means bounded predecessor trajectories within an acyclic cone $\Con{C}$ are in one-to-one correspondence with 
$\edge{\Con{C}}$ via the edge triune step relation.
\end{proof}

\section{Hazards and multiple cone collections} \label{S:MULTIPLE_CONES} 
When a cone having a single crux is insufficient to encompass a given hazard, then a collection of cones is likely to suffice.
One example is the path-convergent family of cones:

\begin{definition} \label{D:MULTIPLE_CONES_PATH_CONVERGENT}
For some $\lambda_0 \in \Lambda$, each member $\Stplc{w}$ of a \emph{path-convergent} family of cones satisfies 
$\mho_\Lambda(\Stplc{w}_{\thinspace\text{crux}}) = \lambda_0$.
\end{definition}

\begin{remark}
A path-convergent collection does not necessarily include \emph{all} walks $\Stplc{w}$ such that 
$\mho_\Lambda(\Stplc{w}_{\thinspace\text{crux}}) = \lambda_0$.
\end{remark}

\chapter{Counting in trajectories} \label{S:COUNTING_CHAPTER} 
\begin{remark}
The following chapters frequently use the compound idiom that $\Idiom{x}{n}$ represents an 
anonymous sequence of objects of the same type as $x$. That is, if $X$ is the set of all 
$x_i$, then $\Idiom{x}{n} \colon \naturalnumbers \to X$.
\end{remark}

\section{Marked configurations} \label{S:COUNTING} 
Recall that definition \ref{D:CONFIGURATION} states that a configuration $\ConfigScript{c} = \ConfigStd$ is a member of the configuration space $\ConfigScript{C} = \ConfigSpace$.

Let $\Idiom{\ConfigScript{c}}{n}$ be a trajectory (infinite sequence of connected configurations) 
and let $Z$ be an arbitrary marked set of configurations. Simply summarized, 
$N_Z(\Idiom{\ConfigScript{c}}{n},k)$ counts the number of occurrences of any member of $Z$ 
before or at the $\xth{k}$ automaton configuration.

For those interested, details follow: 

\begin{definition}[marked set] \label{D:MARKED_SET}
A \emph{marked} set $Z$ is a finite subset of the collection of all configurations occurring in a trajectory $\Idiom{\ConfigScript{c}}{n}$.
\end{definition}

\begin{definition}[arrival] \label{D:ARRIVAL}
When the $\xth{i}$ configuration of the trajectory $\Idiom{\ConfigScript{c}}{n}$ is a member of 
$Z$ (that is, ${\ConfigScript{c}_i} \in Z$), then $\Idiom{\ConfigScript{c}}{n}$ is said to 
\emph{arrive} at $i$.
\end{definition}

\begin{definition}[arrival function] \label{D:ARRIVAL_FUNCTION}
An \emph{arrival} function is a sequence $\varphi = \{(1,n_1), (2,n_2), \cdots \}$ mapping 
each arrival, as identified by its ordinal occurrence number $i$, into its frame sequence 
number $n_i$. 
\end{definition}

This means that the first arrival occurs at frame sequence $n_1$, the second at $n_2$, 
\emph{et cetera}. The arrival function assumes the natural order, that is, $i < j$ implies 
$n_i < n_j$.

A related function counts \emph{how many} arrivals occur within a given duration: 

\begin{definition}[counting function] \label{D:COUNTING}
Suppose $\Idiom{\ConfigScript{c}}{n}$ is a trajectory and $Z$ is a set of arbitrarily marked 
configurations.
Let  $\varphi$ be an arrival function. The \emph{counting} function 
induced by the arrival function $\varphi$ is
\[
N_Z(\Idiom{\ConfigScript{c}}{n}, k) = \max \lbrace i \colon \varphi(i) \leq k \rbrace,
\] 
and for completeness set $\max(\varnothing) = 0$.
\end{definition}

\section{Classification of counting ratios} \label{S:TYPES_OP_PROFILE} 
A relative counting ratio is the conditional probability that a configuration in a reactive actuated automaton's trajectory coincides with a particular 
member of the marked set, given that it agrees with the marked set. We consider one other: an absolute counting ratio is 
the time rate at which a particular configuration of a trajectory coincides with any member of the marked set.

\subsection{Ratios of counting functions} \label{S:COUNTING_RATIOS} 
We apply the counting procedure of \S\ref{S:COUNTING}, which counts the occurrences 
$N_Z(\Idiom{\ConfigScript{c}}{n}, k)$ of marked set $Z$ in trajectory $\Idiom{\ConfigScript{c}}{n}$ 
before or at configuration $k$.

For a trajectory $\Idiom{\ConfigScript{c}}{n}$, given a marked set $Z$, a member $z$ of the 
marked set, and the whole space $\ConfigScript{C}$, there are two forms of counting function 
ratios of interest. First is the relative counting function ratio:
\[
q_z(k) = \frac{N_\Single{z}(\Idiom{\ConfigScript{c}}{n}, k)}{N_Z(\Idiom{\ConfigScript{c}}{n}, k)}
\]
There are $\Card{Z}$ possible relative counting ratios, one for each $z \in Z$. 

\[
Q(k) = \frac{N_Z(\Idiom{\ConfigScript{c}}{n}, k)}{N_{\ConfigScript{C}}(\Idiom{\ConfigScript{c}}{n}, k)} = \
       \frac{N_Z(\Idiom{\ConfigScript{c}}{n}, k)}{k}.
\]

Because $\ConfigScript{C}$ is the whole space, then 
$N_{\ConfigScript{C}}(\Idiom{\ConfigScript{c}}{n}, k)$ counts each configuration, and equals $k$.

The $q_z(k)$ are relative frequencies, whereas $Q(k)$ is absolute. The relative frequencies may be converted to absolute 
by multiplication: $Q_z(k) = q_z(k) \cdot Q(k)$.

\section{Limits of counting ratios} \label{S:LIMITS_OP_PROFILE} 

\subsection{Probability} \label{S:RELATIVE_OP_PROFILE} 
Let $\Idiom{\ConfigScript{c}}{n}$ be a trajectory. Suppose $z \in Z \subset \ConfigScript{C}$ is a configuration of the marked set. Software encounters 
$N_{\{z\}}(\Idiom{\ConfigScript{c}}{n}, k)$ instances of configurations satisfying $\lbrace \ConfigScript{c}_n \rbrace(i) = \ConfigScript{c}_i = z$ during the first 
$k$ automaton configurations. In the same execution there are $N_Z(\Idiom{\ConfigScript{c}}{n}, k)$ instances of $\ConfigScript{c}_i \in Z$. In the frequentist 
\cite{wW14fp} school of interpreting probability, when the limit exists, 
\[
P(z \mid Z) = \lim_{\;k \to \infty} \frac{q_z(k)}{Q(k)} 
            = \lim_{\;k \to \infty} \frac{N_{\{z\}}(\Idiom{\ConfigScript{c}}{n}, k)}{N_Z(\Idiom{\ConfigScript{c}}{n}, k)}
\] 
represents the conditional probability of occurrence of $z$, given that $Z$ occurs. 


\begin{definition}[relative counting ratio] \label{D:RELATIVE_OP}
Let $Z$ be a marked set of which $z$ is a member. A mapping $\OP$ is a \emph{relative} counting ratio over $Z$ if 
for all $z \in Z$, $(z, P(z \mid Z)) \in \OP$
\end{definition}

\begin{theorem}
A relative counting ratio is a finite probability distribution. 
\end{theorem}     
\begin{proof}
Definition \ref{D:MARKED_SET} asserts that marked set $Z$ is finite; therefore a mapping containing one abscissa per 
member of $Z$ is finite (definition \ref{D:RELATIVE_OP}). 
Since each ordinate is a ratio of counts, then each is nonnegative. 

The following shows that the sum of limits is unity:
\begin{alignat*}{3}
&\sum_{z \in Z} P(z \mid Z) &&= \sum_{z \in Z} \lim_{k \to \infty} 
                                \frac{N_{\{z\}}(\Idiom{\ConfigScript{c}}{n}, k)}{N_Z(\Idiom{\ConfigScript{c}}{n}, k)} \\
&                           &&= \lim_{k \to \infty} \sum_{z \in Z} 
                                \frac{N_{\{z\}}(\Idiom{\ConfigScript{c}}{n}, k)}{N_Z(\Idiom{\ConfigScript{c}}{n}, k)} 
                                &&\quad \textrm{[sum of limits equals the limit of sums]} \\
&                           &&= \lim_{k \to \infty} 
                                \frac{1}{N_Z(\Idiom{\ConfigScript{c}}{n}, k)} 
                                \sum_{z \in Z}{N_{\{z\}}(\Idiom{\ConfigScript{c}}{n}, k)} \\
&                           &&= \lim_{k \to \infty} 
                                \frac{N_Z     (\Idiom{\ConfigScript{c}}{n}, k)}{N_Z(\Idiom{\ConfigScript{c}}{n}, k)} 
                                &&\quad [N_Z(\Idiom{\ConfigScript{c}}{n}, k) = \sum_{z \in Z}{N_{\{z\}}(\Idiom{\ConfigScript{c}}{n}, k)}] \\
&                           &&= 1 
\end{alignat*}
Thus $P(z \mid Z)$ is a conditional probability distribution.
\end{proof}

A relative counting ratio is a mapping $\OP \colon Z \to [0,1]$ having total measure~1.

\subsection{Absolute ratio} \label{S:ABSOLUTE_OP_PROFILE} 
Let $\Idiom{\ConfigScript{c}}{n}$ be a trajectory.
An absolute counting ratio is the probability $P(Z)$ with which a trajectory (of some usage pattern) coincides with any 
configuration of the marked set $Z$. As before, this probability is the limiting ratio of two counting functions.
Its numerator contains $N_Z(\Idiom{\ConfigScript{c}}{n}, k)$, the same count as appears in the denominator of the relative counting ratio. 
In its denominator is the counting function of all possible configurations, namely $N_{\ConfigScript{C}}(\Idiom{\ConfigScript{c}}{n}, k)$, where $\ConfigScript{C}$ 
is the space of all configurations. 
Thus the ratio of counting functions of marked set $Z$ to the entire space $\ConfigScript{C}$
is $\frac{N_Z(\Idiom{\ConfigScript{c}}{n}, k)}{N_{\ConfigScript{C}}(\Idiom{\ConfigScript{c}}{n}, k)} = \frac{N_Z(\Idiom{\ConfigScript{c}}{n}, k)}{k}$, 
and the absolute counting ratio 
(of collection $Z$) is
\[
P(Z) = \lim_{\;k \to \infty} \frac{N_Z(\Idiom{\ConfigScript{c}}{n}, k)}{k}.
\]
\[
Q(Z) = \frac{N_Z(\Idiom{\ConfigScript{c}}{n}, k)}{k}.
\]

\chapter{Operational profiles} \label{S:OPERATIONAL_PROFILE_CHAPTER} 
For some trajectories the configuration counting ratio has an analytic limit. 
In this case the limiting quotient is called an operational profile. 

\begin{remark}
This chapter frequently uses the compound idiom that $\Idiom{x}{n}$ represents an anonymous 
sequence of objects of the same type as $x$. That is, if $X$ is the set of all $x_i$, then 
$\Idiom{x}{n} \colon \naturalnumbers \to X$.
\end{remark}

\section{Musa's operational profile}\label{S:MUSA_OP_PROFILE} 
Musa \emph{et al.} intended operational profiles as a tool for analysis of software reliability. A notion of the operational profile appeared 
in their pioneering exposition \cite{jM87}. The authors represent a program's abstract purpose as a collection of executable \Dquo{run types}
\negthickspace, which weren't discussed further. Musa posits that an operational profile is the program's set of run types along their 
probability of occurrence. 

Musa identified two intertwined concepts, the operational profile and the run type. The run type is envisioned as an indivisible unit 
associated with a probability. 

\section{Extension of Musa concept} \label{S:EXTENDED_OP_PROFILE} 
We will follow Musa's successful lead with these two concepts, except we will pursue a partition with finer granularity than the gross-scale 
run type. They will be partitioned into configurations, the elementary quantum of automata. This approach focuses on algorithmic structure, detaching 
the operational profile concept from higher-level human cognition of purpose. Despite appearances, this extension is not daunting~-- the only 
needed addition is a method for counting configuration events.

The synchronization function allows expression of the absolute operational profile as an intensity (rate or quasi-frequency). 

\begin{definition}
The \emph{temporal norm} is written using the double bar notation $\Norm{\cdot}$: 
\[
\Norm{Z} = \lim_{\;k \to \infty} \frac{N_Z(\Idiom{\ConfigScript{c}}{n}, k)}{\sync(\Idiom{\ConfigScript{c}}{n}, k)}.
\]
\end{definition}

The absolute operational profile is properly a subadditive seminorm on sets of configurations. As the limiting ratio of two counts in the 
natural numbers, the norm is positive. The norm is a seminorm because for some nonempty set $Z$ it may be true that $\Norm{Z} = 0$ 
(if the usage pattern does not activate any member of the marked set). This norm is subadditive because for any other set $S$, 
$N_{Z \cup S}(\Idiom{\ConfigScript{c}}{n}, k) \leq N_Z(\Idiom{\ConfigScript{c}}{n}, k) + N_S(\Idiom{\ConfigScript{c}}{n}, k)$. It follows that 
$\Norm{Z \cup S} \leq \Norm{Z} + \Norm{S}$. 

\section{Usage} \label{S:USAGE_SECTION} 
While the notion of \Dquo{purpose} will seem obvious to engineers, the same is less natural for mathematicians. Related to 
purpose is \Dquo{usage}. It is easier to explore through structure, which mathematically characterizes usage. The idea 
behind a usage is exercise of a small number of subordinate routines to demonstrate fitness for purpose. Since the number 
of routines comprising a usage is small, testing it repetitively will likely involve unnatural manipulation of stochastic 
stimulus. It is usually not good to define usage through behavior rather than philosophy.

\subsection{Usage colorings} \label{S:USAGE_COLORING} 
\begin{definition}[usage]
A \emph{usage} is an alternating coloring scheme for trajectories that partitions a trajectory into two sets of blocks. 
Each block of the partition is finite and colored either \emph{in} or \emph{out}.
\end{definition}

\begin{remark}
In-blocks share the common purpose of the usage; out-blocks share nothing. 
\end{remark}

\begin{definition}
A trajectory \emph{exercises} a usage if the in-blocks of the usage's partition appear infinitely often.
\end{definition}

\subsection{Usage colorings and marked sets} \label{S:USAGE_MARKED} 
\begin{definition}
Let $B$ be an in-block of the partition of trajectory $\tau$ induced by usage coloring $\#(\tau)$.  
Let $z = \inf{B}$ denote the \emph{infimum} (greatest lower bound) of $B$. 
\end{definition}

\begin{theorem}
$Z = \SetBuild{\inf{B}}{B \in \mathcal{B}_\mathrm{in}}$ is a marked set. 
\end{theorem}
\begin{proof}
Section \ref{S:COUNTING} places no restriction on marked sets other than that they be included in a trajectory.
$\SetBuild{\inf{B}}{B \in \mathcal{B}_\mathrm{in}}$ is a subset of a trajectory because the blocks consitute a partition 
of the trajectory, and the infimum is a member of a finite block..
\end{proof}

\begin{definition}
$Z = \SetBuild{\inf{B}}{B \in \mathcal{B}_\mathrm{in}}$ is called a \emph{usage-induced} marked set. 
\end{definition}

\section{Equivalence in usage} \label{S:USAGE_EQUIV} 
\begin{definition} \label{T:LIMIT_RATIO}
Two different trajectories $\Idiom{\ConfigScript{c}}{n}$ and 
$\IdiomPrime{\ConfigScript{c}}{n}$ are equivalent in usage if: 
\begin{alignat*}{3}
&&\lim_{k \to \infty} \frac
    {N_\Single{z}(\Idiom{\ConfigScript{c}}{n}, k)}
    {N_Z(\Idiom{\ConfigScript{c}}{n}, k)} 
        &= \lim_{k \to \infty} \frac
            {N_\Single{z}\IdiomPrime{\ConfigScript{c}}{n}, k)}
            {N_{Z}\IdiomPrime{\ConfigScript{c}}{n}, k)} 
        &&\quad \textrm{[relative profiles; one for each $z$ 
            in marked set $Z \neq \varnothing$]} \\
        &&\lim_{k \to \infty} \frac
            {N_{Z}(\Idiom{\ConfigScript{c}}{n}, k)}
            {N_{\ConfigScript{C}}(\Idiom{\ConfigScript{c}}{n}, k)}    
        &= \lim_{k \to \infty} \frac
            {N_{Z}(\IdiomPrime{\ConfigScript{c}}{n}, k)}
            {N_{\ConfigScript{C}} (\IdiomPrime{\ConfigScript{c}}{n}, k)} 
        &&\quad \textrm{
            [absolute profile; one for marked set $Z \neq \varnothing$].} 
\end{alignat*}
\end{definition}

\chapter{Indemnification} \label{S:INDEMNIFICATION_CHAPTER} 
Collections of related software tests are composed of fragments of a cone (chapter 
\ref{S:CONE_CHAPTER}). These tests individually pass or fail, but this perspective 
does not avail a collective view which considers that the collection of tests shares a 
common cone. 

Indemnification is a conglomerate statistic that considers the common cone relationship. 
Collections of tests related by a single cone are converted to equivalent error-free 
operational time. These durations are summed over the cone and interpreted as a Poisson 
process. This conversion enables software results to be expressed in traditional hardware 
assurance units. 


\section{Background} \label{S:INDEMNIFICATION_BACKGROUND} 

Hypothesize that a software hazard is emulated by a compound Poisson process (CPP) having 
intensity $\lambda$ and expected loss $\mu_L$.
Suppose further that the actual control mechanism is a cone convergent to the software point of exhibition of the hazard.  We wish 
to consider statistical evidence that the hazard's hypothetical description via the stochastic process is consistent with its 
mechanism as revealed by safety demonstration.
Indemnification is conversion of pass/fail test data into the failure intensity of an 
equivalent Poisson process.

\subsection{Conversion into a rate} \label{S:ABSOLUTE_OP_PROFILE_RATE} 
For each configuration of a trajectory, an amount of real time appropriate for an 
software system emulating the automaton's configuration is added to the time consumption 
budget. Let $Z$ be the usual arbitrary marked collection of configurations and $\Idiom{\ConfigScript{c}}{n}$ be a trajectory. These two 
provide a set of stochastic excitations and a sequence of configurations in which to count the events' arrivals. The synchronization records 
discrete pairs $(i, t_i)$, where $i$ is the index of the automaton configuration and $t_k$ is the total elapsed time after $k$ configurations. Call 
this mapping the synchronization function, having the formalism 
$\sync \colon \ConfigScript{C}^\naturalnumbers \times \naturalnumbers \to \positivereals$, along with assumed starting point 
$\sync(\Idiom{\ConfigScript{c}}{n}, 0) = 0$. 

Let the sequence index of each configuration be the discrete analog of time. Of course, this 
has the effect that discrete software time will not hold proportional to hardware real time. 
The approximate real time required by execution of configuration 
$\ConfigScript{c} = (\lambda, \Ftylc{f}, \Frm{f})$ is $\tau(\Ftylc{f})$ -- that is, elapsed real time is taken as a function of the executing functionality. 

\begin{definition} \label{D:SYNC_APPOX}
For trajectory $\Idiom{\ConfigScript{c}}{n}$, approximate time elapsed during the first $k$ configurations accumulates to 
\[
\sync(\Idiom{\ConfigScript{c}}{n}, k) = \sum_{i=1}^k \tau(\Ftylc{f}_i)) = \sum_{i=1}^k \tau(\mho_{\Fty{F}}(\ConfigScript{c}_i)) = t_k.
\]
\end{definition}

A theorem to avoid creating dependency on specific trajectories is in order. 

\begin{definition}
The \emph{temporal norm} with respect to trajectory $\Idiom{\ConfigScript{c}}{n}$ 
is written using double bar notation $\Norm{\cdot}_{\Idiom{\ConfigScript{c}}{n}}$: 
\[
\Norm{Z}_{\Idiom{\ConfigScript{c}}{n}} = 
    \lim_{\;k \to \infty} 
        \frac{N_Z(\Idiom{\ConfigScript{c}}{n}, k)}
        {\sync(\Idiom{\ConfigScript{c}}{n}, k)}.
\]
\end{definition}

The absolute counting ratio is properly a subadditive seminorm on sets of configurations. As the limiting ratio of two counts in the 
natural numbers, the norm is positive. The norm is a seminorm because for some nonempty set $Z$ it may be true that $\Norm{Z} = 0$ 
(if the usage pattern does not activate any member of the marked set). This norm is subadditive because for any other set $S$, 
$N_{Z \cup S}(\Idiom{\ConfigScript{c}}{n}, k) \leq N_Z(\Idiom{\ConfigScript{c}}{n}, k) + N_S(\Idiom{\ConfigScript{c}}{n}, k)$. It follows that 
$\Norm{Z \cup S} \leq \Norm{Z} + \Norm{S}$. 

\begin{remark}
The synchronization function allows expression of the absolute counting ratio as an intensity (rate or quasi-frequency). 
\end{remark}

\begin{definition} \label{D:TIME_RATIO}
For a marked set of configurations $\varnothing \neq Z \subseteq \ConfigScript{C}$, and 
different trajectories $\Idiom{\ConfigScript{c}}{n}$ and $\{\ConfigScript{c}'_n\}$ having the 
same temporal norm 
($\Norm{Z}_{\Idiom{\ConfigScript{c}}{n}} =  \Norm{Z}_{\{\ConfigScript{c}'_n\}}$), 
the trajectories are said to be of equivalent \emph{intensity} for the marked set.
\end{definition}

\chapter{Applied statistics} \label{S:STATISTICS_CHAPTER} 

The safety demonstration furnishes data for the indemnification statistic, which originates in the compound Poisson random process.

\section{Reliability demonstration} 
A reliability demonstration is a structured random experiment carrying controlled statistical uncertainty and providing \Dquo{hard} 
evidence against potential liability. 

\subsection{Safety demonstration} \label{S:SAFETY_DEMONSRATION} 
In software safety analysis, a hazard is a region of code bearing potential harmful side effects if incorrectly implemented. A safety 
demonstration is a special type of reliability demonstration posed to exercise a hazard. Here the region is presumed to be an acyclic 
cone, with the hazard located at its crux. The crux is a point of software/hardware transduction, illustrating the 
principle of emergence\footnote{software causes no harm until erroneous values transduce the boundary between software and hardware.} 
(see \S\ref{S:PRINCIPLE_OF_EMERGENCE}).

To oversimplify, a safety demonstration is a random sample from such a region (acyclic cone). The complete story is not so simple, 
because the cone is not a probabilistic structure; it possesses no probability to support randomness. 

As a probabilistic structure, the operational profile (\S\ref{S:RELATIVE_OP_PROFILE}) permits random sampling from its marked set, 
regardless of its higher level meaning. As the edge of a cone is a set, it can become a relative operational profile's marked set. 
Thus we tie an operational profile to a cone's edge. Let $\mathcal{O} \colon \edge{\Con{C}} \to [0,1]$ be a relative operational 
profile on the edge of cone $\Con{C}$. At this stage we have the ability to draw a random sample from $\edge{\Con{C}}$.

Theorem \ref{T:ACYLIC_CONE_CORRESPONDENCE} asserts that an acyclic cone $\Con{C}$ and $\edge{\Con{C}}$ are in one-to-one 
correspondence via the edge configuration relation of a bounded predecessor trajectory. Equivalent to the one-to-one correspondence is the bijection 
$\mathbf{b} = \{(\edge{\Stplc{w}},\Stplc{w}) \colon \Stplc{w} \in \Con{C}\}$. For $\ConfigScript{e} \in \edge{\Con{C}}$, $\mathbf{b}(\ConfigScript{e})$ is the 
bijectively corresponding bounded predecessor trajectory.

We now bijectively associate the random edge event $\ConfigScript{e} = \mathbf{b}^{-1}(\Stplc{w})$ with the bounded predecessor trajectory $\Stplc{w}$:
$\mathcal{O}' = \{ (\mathcal{O}(\mathbf{b}^{-1}(\Stplc{w})), \Stplc{w}) \colon \Stplc{w} \in \Con{C} \}$. With probability inherited from 
an operational profile, we can speak validly of a random sample from a cone.

\subsection{Tests} \label{S:TESTS} 
The last piece of the safety demonstration story is converting bounded predecessor trajectories into tests. Bounded predecessor trajectories are finite 
trajectories existing in confusion-prone backwards time. One may skip this section unless he wishes the detail of converting backward to forward 
trajectories.  

The \emph{test} function reverses and re-indexes bounded predecessor trajectories into conventional sequences. 

\begin{definition} \label{D:TEST}
Let $\Stplc{w}$ be a bounded predecessor trajectory of $n = \Card{\Stplc{w}}$ configurations, indexed from $0$ down to $-(n-1)$. 
Define the \emph{test} function $\test(\Stplc{w})  = \widetilde{\Stplc{w}}$ 
according to formula $\widetilde{\Stplc{w}}_i = \Stplc{w}_{i - n}$ for $i = 1, 2, \cdots , n$.
\end{definition}

Assuming that a bounded predecessor trajectory is indexed from $0$ down to $-(n-1)$,  
its corresponding test will be indexed from $1$ to $n$. 
In sense of direction, the bounded predecessor trajectory traverses configurations from  
$\ConfigScript{c}_{\text{crux}}$ to $\ConfigScript{c}_{\text{edge}}$, while the corresponding test traverses configurations from $\ConfigScript{c}_{\text{edge}}$ 
to $\ConfigScript{c}_{\text{crux}}$.

\begin{theorem}
Suppose $\Con{C}$ is an acyclic cone and $\Stp{W} \subset \Con{C}$ is a (unique) set of bounded predecessor trajectories. 
If \;$\widetilde{\Stp{W}} = \test(\Stp{W})$ is its converted set of reversed and re-indexed tests, then $\Stp{W}$ and $\widetilde{\Stp{W}}$ 
are in one-to-one correspondence. 
\end{theorem}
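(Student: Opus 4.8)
The plan is to show that the test function, restricted to $\Stp{W}$, is a bijection onto its image $\widetilde{\Stp{W}}$. Surjectivity is immediate: by hypothesis $\widetilde{\Stp{W}} = \test(\Stp{W})$ is precisely the image of $\Stp{W}$ under $\test$, so every member of $\widetilde{\Stp{W}}$ is $\test(\Stplc{w})$ for some $\Stplc{w} \in \Stp{W}$. The entire content of the theorem therefore reduces to establishing that $\test$ is injective on $\Stp{W}$, since an injective surjection is a one-to-one correspondence. Because $\Stp{W}$ is a set, its members are already distinct, so injectivity of $\test$ is exactly the property required to match them bijectively with their images.

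First I would take two walks $\Stplc{u}, \Stplc{v} \in \Stp{W}$ with $\test(\Stplc{u}) = \test(\Stplc{v})$ and argue that they share a common length. By definition \ref{D:TEST}, the test of a walk of $n = \Card{\Stplc{w}}$ steps is indexed over $\{1, \ldots, n\}$, hence has exactly $n$ terms; equal tests thus have equally many terms, forcing $\Card{\Stplc{u}} = \Card{\Stplc{v}} = n$. Having fixed this common length, I would unwind the defining relation $\widetilde{\Stplc{w}}_i = \Stplc{w}_{i - n}$ term by term: for each $i \in \{1, \ldots, n\}$ we get $\Stplc{u}_{i-n} = (\test(\Stplc{u}))_i = (\test(\Stplc{v}))_i = \Stplc{v}_{i-n}$. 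As $i$ sweeps $\{1, \ldots, n\}$, the shifted index $i - n$ sweeps the full predecessor-walk index set $\{-(n-1), \ldots, 0\}$, so $\Stplc{u}_j = \Stplc{v}_j$ at every admissible index $j$, whence $\Stplc{u} = \Stplc{v}$. This proves injectivity and completes the correspondence. Equivalently, I could exhibit the inverse directly by setting $(\test^{-1}(\widetilde{\Stplc{w}}))_j = \widetilde{\Stplc{w}}_{j + n}$ for $j = 0, -1, \ldots, -(n-1)$, where $n$ is the number of terms of $\widetilde{\Stplc{w}}$, and verifying that the two composites are identities by the same index bookkeeping.

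The main obstacle is not depth but the length-dependent reindexing built into definition \ref{D:TEST}: because the shift in $\widetilde{\Stplc{w}}_i = \Stplc{w}_{i-n}$ uses $n = \Card{\Stplc{w}}$, one cannot compare two walks' tests term by term until their lengths are known to agree. Establishing the common length first is precisely the step that licenses the subsequent term-by-term argument. I would also remark that acyclicity of $\Con{C}$ is not actually invoked here --- $\test$ is injective on any set of localized predecessor walks --- whereas acyclicity was essential in theorem \ref{T:ACYLIC_CONE_CORRESPONDENCE} to secure the distinct edge-step property; in the present statement the correspondence rests solely on the invertibility of the reindexing.
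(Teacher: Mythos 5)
Your proposal is correct and follows essentially the same route as the paper's proof: both fix the common length $n$ first (using that $\test$ preserves the number of terms per Definition \ref{D:TEST}) and then equate terms via $\Stplc{u}_{i-n} = \Stplc{v}_{i-n}$ to obtain injectivity, with surjectivity onto $\widetilde{\Stp{W}}$ immediate from its definition as the image. Your added observation that acyclicity of $\Con{C}$ is never invoked is accurate and a worthwhile aside, but it does not change the argument.
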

\begin{proof} 
By virtue of construction, $\test$ is already a mapping. Remaining to show is that $\test$ is additionally a bijection. Let $\ConfigScript{u}$ and 
$\ConfigScript{v}$ be bounded predecessor trajectories and $\ConfigScript{x}$ be a finite trajectory. As hypothesis set $\ConfigScript{x} = \test(\ConfigScript{u})  = \test(\ConfigScript{v})$. 
These sequences cannot be equal unless they possess the same number of terms, 
$n = \Card{\ConfigScript{x}} = \Card{\test(\ConfigScript{u})} = \Card{\test(\ConfigScript{v})}$. Since transformation $\test$ preserves the number of configurations (from 
Definition \ref{D:TEST} $\Card{\Stplc{w}} = \Card{\test(\Stplc{w})}$), then $n = \Card{\ConfigScript{x}} = \Card{\ConfigScript{u}} = \Card{\ConfigScript{v}}$. 

Again invoking Definition \ref{D:TEST} on the first part of the hypothesis, we write $\ConfigScript{x}_i = \ConfigScript{u}_{i - n}$. The second part 
similarly yields $\ConfigScript{x}_i = \ConfigScript{v}_{i - n}$. By equating the two parts, we now have $\ConfigScript{u}_{i - n} = \ConfigScript{v}_{i - n}$ for each $i$. 
In other words, the two bounded predecessor trajectories are actually the same trajectory: $\ConfigScript{u} = \ConfigScript{v}$. Thus $\test$ is a bijection. 
\end{proof}

\subsubsection{Stochastic variables preservation} \label{S:STOCHASTIC_VARIABLES} 
The danger in reversed thinking about tests is inadvertently conceptualizing stochastic variables as free. This is untrue, as the 
stochastic variables at any stage of a predecessor chain are fixed, and the \Dquo{next} stage considers the set of what previous conditions 
may have led to the current stage. Thus, predecessor trajectories are chains of a poset of configurations, which include the settings of stochastic variables. 
One must be mindful to reproduce all stochastic stimuli of the bounded predecessor trajectory in its analogous test.

\subsection{Outcome} \label{S:OUTCOME} 
The outcome of a test, pass or fail, will be regarded as a Bernoulli random event, $P_\rho = \rho^n {(1 - \rho)}^{1 - n}$,
for $n = 1$ (pass) or $n = 0$ (fail). These probabilities are statistically independent of the bias involved with drawing the sample
from the operational profile. This bias affects the origin of discovered failures, but not how many failures are found. In other words, 
the total statistical power of the sampling plan is not affected by sampling bias.

Sums of independent Bernoulli random variables are binomial. That is, the probability of finding $n$ failures collectively among 
$N$ sample items is binomial, $\binom{N}{n} \rho^n {(1 - \rho)}^{N - n}$.

\subsection{Physics} \label{S:PHYSICS} 
In the real world, tests pass or fail depending on whether the information transduced at configuration $\ConfigScript{c}_\text{crux}$ meets all safety 
constraints. Such engineering requirements are varied, ultimately involving position, timing, voltage, insulation, dimensional tolerance, 
toxicity, temperature, mechanical shielding, luminosity, and hydrostatic pressure -- just to name a few areas. Review of a test offers a 
last chance to discover a missed constraint (requirement). Another possibility is that the chain of precursor events should actually lead 
to a different conclusion.

Transduced values potentially control the status of any safety concern.  Tests simply pass or fail, but evaluation of why a test passes 
or fails can become nontrivial, requiring collaboration between mechanical, software, and system safety engineers. 

\subsection{Statistics} \label{S:STATISTICS} 
Some statistical error originates in inference from random sample to \Dquo{unknown} population (parametric family of probability 
distributions on a measurable space). Just one distribution is true, while the others are false. An assertion separating the 
parameterization into two decision units is called a hypothesis. One decision unit is traditionally designated null, while the 
other is called alternate. The true distribution belongs either to the null or alternative decision units.

Each sample item either passes or fails its associated test (see \S\ref{S:PHYSICS}). Within the entire cone $\Con{C}$, suppose the 
proportion of tests that fail is $\rho$. This proportion is subsequently realized approximately through a random sample. Regardless 
of the sample size, since the application is to safety, the only cases of interest will be when the number of failures is zero. 
Other cases, implying need for reliability growth, are 
treated in the literature, particularly \cite{jM87}.

We now examine the case defined by drawing a random sample of size $N$ from $\edge{\Con{C}}$ and allowing $n = 0$ failures
in the associated tests from cone $\Con{C}$. The null decision unit contains the probability distribution
$P_0(\text{pass}) = 1$ and $P_0(\text{fail}) = 0$. The alternate decision unit is the set of probability distributions $P_\rho$ 
having $0 < \rho \leq 1$. Hypothesis evaluation entails two types of error, known as $\alpha$ and $\beta$ error. 

\subsubsection{False rejection ($\alpha$ error)} \label{S:FALSE_REJECTION} 
The first is false rejection of the null decision unit, with associated measurement error $\alpha$.
The sampling plan can reject only if finds an error, so this sampling plan is incapable of false rejection. Thus $\alpha \equiv 0$.

\subsubsection{False acceptance ($\beta$ error)} \label{S:FALSE_ACCEPTANCE} 
The second is false acceptance of the null decision unit, with associated measurement error $\beta$. We experience false acceptance 
when $0 < \rho$ but the sample contains no failures.

Under the binomial model, the probability of observing a random sample of size $N$ with $n$ failures collectively is 
$\binom{N}{n} \rho^n {(1 - \rho)}^{N - n}$. Proceeding to our case of interest, $n = 0$, we have 
$\binom{N}{n} \rho^n {(1 - \rho)}^{N - n} \bigm|_{n=0} = {(1 - \rho)}^{N}$. This expression is the probability that random samples of 
size $N$ from a source of characteristic failure proportion $\rho$ will be accepted. 

\subsubsection{Power function} \label{S:POWER_FUNCTION} 
It is confusing to reason in terms of contravariant\footnote{One increasing, the other decreasing} attributes. In our case we formulate  probability of rejection as an increasing function of $\rho$, a measure of the population's undesirability. The probability that random 
samples will be properly rejected is the previous expression's complement: 
\[
K_{N,0}(\rho) = 1 - {(1 - \rho)}^{N} = 1 - \beta.
\]

This non-contravariant result is known as the power function of sample size $N$, tolerating zero $(0)$ failures.
The graph of the power function always increases, starting at $0$ for $\rho = 0$ and ending at $1$ for $\rho = 1$.
Just how fast this function increases in its midrange is determined by the sample size $N$.
With sample size one $(N = 1)$, $K_{1,0}(\rho) = \rho$.

\begin{table}[H]
    \begin{center}
        \begin{tabular}{|r|c|c|c|c|c|c|}
            \hline
$N$&$K_{N,0}(.001)$&$K_{N,0}(.01)$&$K_{N,0}(.05)$&$K_{N,0}(.10)$&$K_{N,0}(.50)$&$K_{N,0}(.90)$ \\ \hline
1&.0010&.0100&.0500&.1000&.5000&.9000 \\
5&.0050&.0490&.2262&.4095&.9688&1.0000 \\
10&.0100&.0956&.4013&.6513&.9990&1.0000 \\
15&.0149&.1399&.5367&.7941&1.0000&1.0000 \\
20&.0198&.1821&.6415&.8784&1.0000&1.0000 \\
30&.0296&.2603&.7854&.9576&1.0000&1.0000 \\
50&.0488&.3950&.9231&.9948&1.0000&1.0000 \\
100&.0952&.6340&.9941&1.0000&1.0000&1.0000 \\
200&.1814&.8660&1.0000&1.0000&1.0000&1.0000 \\
500&.3936&.9934&1.0000&1.0000&1.0000&1.0000 \\
1000&.6323&1.0000&1.0000&1.0000&1.0000&1.0000 \\
2000&.8648&1.0000&1.0000&1.0000&1.0000&1.0000 \\
5000&.9933&1.0000&1.0000&1.0000&1.0000&1.0000 \\
10000&1.0000&1.0000&1.0000&1.0000&1.0000&1.0000 \\ \hline
        \end{tabular}
    \end{center}
    \caption{Family of power functions (probability of rejection)} \label{Ta:POWER_FUNCTION}
\end{table}

Within this family $\beta = {(1 - \rho)}^{N} = 1 - K_{N,0}(\rho)$. 

One is initially dismayed by this sketch of the family of power functions; it suggests that high degrees of assurance are unobtainable 
through random sampling using practical sample sizes. However, reasonable performance useful for coarser screening is very possible. 
Detecting a defective population of 10 percent with a probability of approximately 90\% requires only 20 sample items. 

\subsection{Sampling philosophy} \label{S:SAMPLING_PHILOSOPHY} 
Our safety demonstration sampling technique contrasts two assurance philosophies -- software reliability versus software correctness. 
The software reliability perspective involves a separate operational profile on $\edge{\Con{C}}$, whereas software correctness examines 
only the structure within cone $\Con{C}$. The operational profile asserts the importance of relative excitational intensity to safety 
analysis. An accident that occurs more frequently is worse than an accident that happens less frequently, given that they are of 
comparable severity. This safety factor is ignored under software correctness alone.

\section{Modeling of accidents} 
Accidents are diverse in effect and mechanism, including injury, death, or damage either to equipment or environment. Since the 
causality of accidents is temporarily unknown, they manifest an apparent nature of unpredictability or randomness. However, under 
emulation as a stochastic process, the exact timing of accidents \emph{is} truly a random phenomenon rather than causal. Nevertheless, 
it has proven useful to compare well-understood summary statistics of stochastic processes with those of deterministic but unknown 
physical processes.

\subsection{Compound Poisson process} \label{S:CPP} 
Today's prevalent safety model for the occurrence of accidents is the compound Poisson\footnote{After Sim\'{e}on Denis Poisson, 
mathematician and physicist, 1781 -- 1840} process. This model captures accidents' two dominant attributes: rate of occurrence (intensity) 
and scalar measure of loss (severity). With some exceptions, neither the timing nor severity of one software accident affects another. 
The compound Poisson process (CPP) is appropriate to model accidents of this nature. 

As stochastic processes are models rather than mechanisms, deriving their properties involves somewhat out-of-scope mathematics.
The interested reader can immediately find greater detail in Wikipedia\textregistered \ online articles:
\cite{wW_Poisson_distribution}, \cite{wW_Poisson_process}, \cite{wW_total_expectation}, \cite{wW_Compound_Poisson_process}, 
and \cite{wW_Cumulant}. Relevant theorems will be documented here simply as facts.

\subsection{Poisson processes}\label{S:POISSON_PROCESSES} 
We will consider three variants of basic stochastic process: the ordinary Poisson process, the compound Poisson process,
and the intermittent compound Poisson process.

\subsubsection{Ordinary Poisson process}\label{S:INTERMITTENT_POISSON} 
(Ordinary) Poisson processes are characterized simply by their rate or intensity:
\begin{itemize}
\item its fundamental rate $\lambda$, which is the expected number of arrivals per unit time. 
\end{itemize}

\begin{fact}
Let $\lambda$ be the rate of a Poisson process. The probability of experiencing $k$ arrivals in a time interval 
$t$ units long is
\[
P_\lambda(k) = e^{-\lambda t}\frac{{(\lambda t)}^k}{k!}.
\]
\end{fact}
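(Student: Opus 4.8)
The plan is to derive the Poisson mass function as the limiting case of a binomial distribution, exploiting the binomial machinery already established in the treatment of test outcomes (\S\ref{S:OUTCOME}). The intuition is that a Poisson process is what a stream of rare, independent Bernoulli trials becomes when those trials are made arbitrarily fine-grained in time; this \emph{law of rare events} is the most natural route given that the paper has already shown sums of independent Bernoulli variables to be binomial.

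First I would partition the observation interval $[0,t]$ into $n$ equal subintervals, each of width $t/n$. Invoking the defining local axioms of a Poisson process of rate $\lambda$ --- stationary and independent increments, with the probability of exactly one arrival in a width-$h$ subinterval equal to $\lambda h + o(h)$ and the probability of two or more arrivals equal to $o(h)$ --- each subinterval is, up to vanishing error, an independent Bernoulli trial succeeding with probability $p_n = \lambda t / n$. Summing these independent indicators produces, exactly as in \S\ref{S:OUTCOME}, a binomial count so that the probability of $k$ arrivals is approximately $\binom{n}{k} p_n^k (1 - p_n)^{n-k}$.

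Next I would pass to the limit $n \to \infty$. Writing $\binom{n}{k} = \frac{n(n-1)\cdots(n-k+1)}{k!}$ and substituting $p_n = \lambda t / n$, the approximating expression factors as
\[
\frac{n(n-1)\cdots(n-k+1)}{n^k}\cdot\frac{(\lambda t)^k}{k!}\cdot\left(1-\frac{\lambda t}{n}\right)^{n}\left(1-\frac{\lambda t}{n}\right)^{-k}.
\]
The first factor tends to $1$, the middle factor is constant in $n$, the third factor tends to $e^{-\lambda t}$ by the standard exponential limit, and the final factor tends to $1$. Collecting these limits yields exactly $P_\lambda(k) = e^{-\lambda t} (\lambda t)^k / k!$.

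The main obstacle is rigor in the limiting argument rather than the algebra: one must verify that the accumulated $o(h)$ errors --- both the chance of multiple arrivals within a single subinterval and the lower-order correction to the single-arrival probability --- vanish uniformly as $n \to \infty$, so that the approximating binomial genuinely converges to the stated limit. An alternative that sidesteps this subtlety is to set up the differential--difference (Chapman--Kolmogorov) equations $P_k'(t) = -\lambda P_k(t) + \lambda P_{k-1}(t)$ with $P_{-1} \equiv 0$ and base solution $P_0(t) = e^{-\lambda t}$, then confirm the claimed formula by induction on $k$; but that route merely trades the limiting subtlety for the bookkeeping of solving an infinite system of ordinary differential equations.
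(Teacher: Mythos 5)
Your argument is mathematically sound, but note that the paper itself proves nothing here: in \S\ref{S:CPP} the author explicitly declares the derivation of stochastic-process properties out of scope (``Relevant theorems will be documented here simply as facts'') and defers to the cited Wikipedia articles \cite{wW_Poisson_distribution}, \cite{wW_Poisson_process}. So your proposal is not an alternative route to the paper's proof --- it is the only proof on the table, and it is the standard one: the law-of-rare-events binomial limit, with the Kolmogorov differential--difference system as a fallback. Two remarks on what your proof supplies beyond the paper's text. First, the paper characterizes a Poisson process solely by its rate $\lambda$, ``the expected number of arrivals per unit time''; that premise alone does not entail the stated mass function (a deterministic arrival every $1/\lambda$ time units has the same mean rate but a degenerate count distribution), so the axioms you import --- stationary independent increments plus orderliness, $P(\text{one arrival in }h) = \lambda h + o(h)$ and $P(\text{two or more}) = o(h)$ --- are not optional scaffolding but essential hypotheses the paper leaves entirely implicit. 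Second, the uniformity worry you flag is discharged by stationarity itself: all $n$ subintervals share one and the same $o(t/n)$ error term, so the probability that any subinterval contains two or more arrivals is at most $n \, o(t/n) \to 0$, and the single-arrival corrections perturb $p_n$ only by $o(1/n)$, which does not disturb the limit $\bigl(1 - \lambda t/n + o(1/n)\bigr)^n \to e^{-\lambda t}$. With those two points made explicit, either of your routes closes the gap the paper consciously leaves open; your tie-in to the binomial machinery of \S\ref{S:OUTCOME} is apt but cosmetic, since the Bernoulli structure there concerns test outcomes rather than time increments.
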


\subsubsection{Compound Poisson process}\label{S:INTERMITTENT_POISSON} 
A compound Poisson process is characterized by two rates:
\begin{itemize}
\item its fundamental rate $\lambda$ as before, and 
\item its rate of loss $L$, which is a random variable invoked once for each arrival. 
\end{itemize}

\begin{fact}
Let $\lambda$ be the rate and $L$ be the loss random variable of a compound Poisson process. The expectation of the compound 
process for a time interval $t$ units long is
\begin{alignat*}{1}
\E (\text{compound Poisson}) &= \lambda t \cdot \E (L)\\
                             &= \lambda t \cdot \mu_L.
\end{alignat*}
\end{fact}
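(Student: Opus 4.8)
The plan is to realize the compound Poisson process as a random sum and evaluate its expectation by conditioning on the arrival count. Concretely, over a window of $t$ units I would write the accumulated loss as $S = \sum_{i=1}^{N} L_i$, where $N$ is the (random) number of arrivals supplied by the underlying Poisson process of rate $\lambda$, and the $L_i$ are independent copies of the loss variable $L$, each invoked once per arrival and independent of $N$. The target quantity is $\E(S)$, and the whole argument rests on peeling the randomness in $N$ apart from the randomness in the individual losses.

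First I would apply the law of total expectation (the tower property, which the paper admits among its cited facts) and condition on $N$, writing $\E(S) = \E\big(\E(S \mid N)\big)$. The inner step is the crux: given $N = k$ the sum $S$ has a fixed number $k$ of terms, so linearity of expectation together with the independent, identically distributed nature of the $L_i$ gives $\E(S \mid N = k) = k\,\mu_L$, hence $\E(S \mid N) = N\mu_L$. Pulling the constant $\mu_L$ outside the outer expectation then yields $\E(S) = \mu_L \cdot \E(N)$.

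Second I would compute $\E(N)$ directly from the Poisson fact already established in the excerpt, namely $P_\lambda(k) = e^{-\lambda t}(\lambda t)^k / k!$. Summing $k\,P_\lambda(k)$ over $k \ge 0$, I would cancel the factor $k$ against the factorial, re-index the sum, and recognize what remains as the Taylor series for $e^{\lambda t}$; the surviving $e^{-\lambda t}$ then cancels, leaving $\E(N) = \lambda t$. Substituting into $\E(S) = \mu_L \cdot \E(N)$ gives $\E(S) = \lambda t \cdot \mu_L$, and since $\mu_L = \E(L)$ by definition, both displayed equalities follow at once.

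The main obstacle is the rigor of the random-sum step, since equating $\E(S \mid N)$ with $N\mu_L$ is essentially Wald's identity and genuinely depends on the $L_i$ being independent of the count $N$; without that independence the conditional expectation would not factor cleanly. I would therefore be explicit that this independence is part of the compound-Poisson model's specification -- the loss is invoked once per arrival, as stated -- so that conditioning on $N = k$ legitimately freezes the number of summands while leaving the distribution of each $L_i$ untouched. By contrast, the $\E(N) = \lambda t$ computation is routine once the series re-indexing is handled correctly, and poses no conceptual difficulty.
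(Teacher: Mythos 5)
Your derivation is correct, but there is nothing in the paper to compare it against: the paper deliberately states this result without proof, announcing just before \S\ref{S:POISSON_PROCESSES} that ``relevant theorems will be documented here simply as facts,'' with pointers to external references. Notably, those references include the law of total expectation, which is exactly the tower-property step at the heart of your argument --- writing the accumulated loss as the random sum $S = \sum_{i=1}^{N} L_i$, conditioning on $N$ to get $\E(S) = \mu_L \cdot \E(N)$ (valid precisely because the $L_i$ are i.i.d.\ and independent of $N$, as you rightly flag), and then extracting $\E(N) = \lambda t$ from the Poisson mass function by the standard series re-indexing. So your proposal supplies, correctly and with the right caveat about independence, the standard proof that the paper's citations gesture at but omit.
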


\begin{definition} \label{D:STATISTICAL_RISK}
The statistical \emph{risk}, written $h$, of a compound Poisson process is the time derivative of its expectation in a duration
of length $t$; that is
\[
h = \frac{d}{dt} \E(\text{compound Poisson}) =  \frac{d}{dt} (\lambda t \cdot \mu_L) = \lambda \mu_L,
\]
which is the product of its rate $\lambda$ and its expected loss $\mu_L$.
\end{definition}

\subsubsection{Intermittent compound Poisson process}\label{S:INTERMITTENT_POISSON} 
A variation of the CPP is the intermittent compound Poisson process, which is intermittently on or off with expected durations
$\E(\text{on}) = \mu_\text{on}$ and $\E(\text{off}) = \mu_\text{off}$. 
An intermittent compound Poisson process (ICPP) is characterized by three rates:
\begin{itemize}
\item its fundamental rate $\lambda$ as before, and 
\item its rate of loss $L$, also as before,
\item alternating durations of random lengths
      $\tau_\text{on}$ and $\tau_\text{off}$.
\end{itemize}

Random variables $\tau_\text{on}$ and $\tau_\text{off}$ converge to $\mu_\text{on}$ and $\mu_\text{off}$ in the limit.
The \emph{idle} ratio of a intermittent compound Poisson process is 
$\iota = \frac{\mu_\text{off}}{\mu_\text{on} + \mu_\text{off}}$.

\begin{fact}
Let $\lambda$ be the rate, $L$ be the loss random variable, and $\iota$ be the idle ratio of an intermittent compound Poisson process. 
The expectation of the ICPP for a time interval $t$ units long is
\begin{alignat*}{1}
\E (\text{intermittent compound Poisson}) &= (1 - \iota) \cdot \lambda t \cdot \E(L) \\
                                          &= (1 - \iota) \cdot \lambda t \cdot \mu_L.
\end{alignat*}
\end{fact}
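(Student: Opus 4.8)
The plan is to reduce the intermittent case to the ordinary compound Poisson expectation already established in the preceding fact. The essential observation is that while the process is \emph{off} it generates no arrivals and therefore contributes nothing to accumulated loss; only the \emph{on} phases produce loss. Accordingly, I would introduce the \emph{active duration} $\tau_{\text{active}}$, the total amount of time the process spends on during the horizon of length $t$, and condition the overall expectation on this quantity.

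First I would invoke the law of total expectation (\cite{wW_total_expectation}), conditioning on $\tau_{\text{active}}$. Given that the active duration takes some value $s$, the loss accumulated by the ICPP is exactly that of an ordinary compound Poisson process run for time $s$, whose expectation is $\lambda s \mu_L$ by the immediately preceding fact. Taking the outer expectation over $\tau_{\text{active}}$ then gives
\[
\E(\text{ICPP}) = \E(\lambda \, \tau_{\text{active}} \, \mu_L) = \lambda \mu_L \, \E(\tau_{\text{active}}).
\]
This isolates the whole problem into computing the expected on-time $\E(\tau_{\text{active}})$.

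The principal remaining task is to evaluate $\E(\tau_{\text{active}})$. Here I would appeal to the alternating on/off renewal structure: the process repeats cycles, each an on-phase of expected length $\mu_{\text{on}}$ followed by an off-phase of expected length $\mu_{\text{off}}$, so a full cycle has expected length $\mu_{\text{on}} + \mu_{\text{off}}$. By a renewal-reward (elementary renewal) argument, in the limit described in the text, where $\tau_{\text{on}} \to \mu_{\text{on}}$ and $\tau_{\text{off}} \to \mu_{\text{off}}$, the fraction of time spent on converges to
\[
\frac{\mu_{\text{on}}}{\mu_{\text{on}} + \mu_{\text{off}}} = 1 - \iota.
\]
Hence $\E(\tau_{\text{active}}) = (1 - \iota)\,t$, and substituting back yields $\E(\text{ICPP}) = (1 - \iota)\,\lambda t\,\mu_L$, as claimed.

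The hard part will be justifying $\E(\tau_{\text{active}}) = (1 - \iota)\,t$ for a \emph{finite} horizon $t$. A finite interval captures only partial cycles at its endpoints, so the active-fraction identity holds exactly only in the renewal limit (or asymptotically as $t \to \infty$); for finite $t$ there are boundary corrections on the order of a single cycle. Given the paper's stated convention that the random durations converge to their means, I would present the result in that limiting sense, which renders the active-fraction identity exact and the final expectation clean.
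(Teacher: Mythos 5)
Your derivation is correct, but there is nothing in the paper to compare it against: the paper deliberately states this result as a bare \emph{fact}, declaring in \S\ref{S:CPP} that deriving properties of stochastic processes is out-of-scope mathematics and deferring to the cited references. In effect you have supplied the missing proof, and along the intended lines --- note that the law of total expectation, your key tool, is precisely one of the paper's citations (\cite{wW_total_expectation}), so your conditioning-on-active-time argument is almost certainly the derivation the author had in mind. Two points deserve sharpening. First, the step ``given $\tau_{\text{active}} = s$, the accumulated loss is that of an ordinary CPP run for time $s$'' silently requires that the on/off switching process be independent of the arrival process and of the losses $L$; without that, conditioning on $\tau_{\text{active}}$ does not yield a Poisson count of mean $\lambda s$, and you should state the assumption explicitly since the paper never formalizes the ICPP enough to imply it. Second, your handling of the finite-horizon issue is honest and correct: $\E(\tau_{\text{active}}) = (1-\iota)\,t$ holds exactly only for a stationary (equilibrium) alternating renewal process, and otherwise only asymptotically with boundary corrections of order one cycle; the paper's loose statement that $\tau_{\text{on}}$ and $\tau_{\text{off}}$ ``converge to $\mu_{\text{on}}$ and $\mu_{\text{off}}$ in the limit'' is best read as licensing exactly the limiting interpretation you adopt, and it is consistent with the paper's subsequent use of the fact, which only differentiates the expectation in $t$ to extract the risk $h = (1-\iota)\lambda\mu_L$ and so depends only on the linear-in-$t$ leading term.
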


The statistical risk of an ICPP is 
\begin{alignat*}{1}
h &= \frac{d}{dt} \E(\text{intermittent compound Poisson}) \\
  &= \frac{d}{dt} ((1 - \iota) \cdot \lambda t \cdot \mu_L + \iota \cdot 0 t  \cdot 0) \\
  &= (1 - \iota) \lambda \mu_L.
\end{alignat*}

\section{Indemnification} \label{S:INDEMNIFICATION_FORMULA} 
Hypothesize that a software hazard is emulated by a compound Poisson process (CPP) having 
intensity $\lambda$ and expected loss $\mu_L$.
Suppose further that the actual control mechanism is a cone convergent to the software point of exhibition of the hazard.  We wish 
to consider statistical evidence that the hazard's hypothetical description via the stochastic process is consistent with its 
mechanism as revealed by safety demonstration.

\subsection{Unification}\label{S:UNIFICATION} 
Before undertaking the question of whether test data supports a hypothetical stochastic process, we must establish the theoretical 
conditions under which equality is expected.

\subsubsection{Fundaments of the model} 
The compound Poisson process is a model stochastic process for occurrence of accidents. 
This model is used in safety analysis to quantify the occurrence and losses of accidents without considering their causes. 
\MilStd{} (see Appendix \ref{S:MIL-STD-882}) is an important example. 
In a time interval of duration $t$, accidents converge stochastically in rate to expectation $\lambda t$ and in mean loss to $\mu_L$.
This means an intensity of $\lambda$ accidents per time unit.

\subsubsection{Fundaments of the mechanism} 
The reactive actuated automaton is a mechanism representing software. When extended by the principle of emergence (\S\ref{S:PRINCIPLE_OF_EMERGENCE}) 
and the constructs of the operational profiles (\S\ref{S:OPERATIONAL_PROFILE_CHAPTER}) and cones (\S\ref{S:CONE_SECTION}), 
it becomes capable of representing precursor conditions for software accidents.
Let $\Norm{\edge{\Con{C}}}$ (see \S\ref{S:ABSOLUTE_OP_PROFILE_RATE}) be the rate-based absolute operational profile of the edge 
of an acyclic cone $\Con{C}$.  
Since a member of $\edge{\Con{C}}$ is executed at the average intensity of ${\Norm{\edge{\Con{C}}}}$, then so is 
the cone's configuration of convergence $\ConfigScript{c}_\text{crux}$. 
Let $\rho$ be the proportion of failing tests (bounded predecessor trajectories). 
Under that supposition, failures occur at the intensity of $\rho \cdot \Norm{\edge{\Con{C}}}$. 
The definition of $\Norm{\edge{\Con{C}}}$, 
through the internal function $\text{sync}(\cdot)$, allows for the passage of time in the proper duration.

\subsubsection{Uniting mechanism and model} 
We presume that one failing test equals one accident.
The cone's configuration of convergence is considered to be the point of exhibition of a hazard whenever safety constraints are not met.
This mechanism may be separately equated to the intensity (not the rate of loss) of the compound Poisson process: 
\[
\lambda = \rho \cdot \Norm{\edge{\Con{C}}}.
\]

This equation places a property of the model on the left and properties of the mechanism on the right. 

\begin{remark}
The execution rate of the edge of an acyclic cone numerically equals the execution rate of the (set containing the) cone's crux. 
The cone's definitional status (as a complete independent set of bounded predecessor trajectories ending at $\ConfigScript{c}_{\text{crux}}$) 
causes this. Symbolically,
\[
\Norm{\edge{\Con{C}}} = \Norm{\lbrace\ConfigScript{c}_{\text{crux}}\rbrace}.
\]
\end{remark}

\subsection{Evidence}\label{S:EVIDENCE} 
We propose that the same data used earlier for indemnification testing be re-used in a slightly different statistical context.
Recall that an indemnification test has $N$ items among which are zero failures, where each item is a bounded predecessor trajectory, 
and a cone is a structured collection of bounded predecessor trajectories of an automaton.

We wish to measure the amount of information in an indemnification sample to explain the phenomenon that larger samples justify
more precise estimates than smaller samples. We refer to this information as measuring the \emph{weight of evidence}. 
%
This situation differs from the familiar problem of finding the maximum likelihood estimator.

\subsubsection{Method of indifference} \label{S:INDIFFERENCE_POWER_FUNCTION} 
The power function of sample size $N$, tolerating zero $(0)$ failures, is $K_{N,0}(\rho) = 1 - {(1 - \rho)}^{N}$ 
(see \S\ref{S:POWER_FUNCTION}). 
It measures the probability of rejection as a function of $\rho$.

Each power function $K_{N,0}(\rho) = 1 - {(1 - \rho)}^{N}$ is characterized by its indifference proportion, which is defined as the 
proportion at which rejection and acceptance become equally likely (that is, 
$K_{N,0}(\hat{\rho}_\text{\,I}) = {^1\!\!/\!_2} = 1 - K_{N,0}(\hat{\rho}_\text{\,I})$). With only modest algebra, the analytic expression for the 
indifference proportion may be derived from the power function $K_{N,0}(\rho)$; it is 
\[
\hat{\rho}_\text{\,I} = 1 - \sqrt[N]{{^1\!\!/\!_2}}. 
\]

Below is a numerical tabulation of the previous formula:

\begin{table}[h!]
    \begin{center}
        \begin{tabular}{|r|c|}
            \hline
$N$&$\hat{\rho}_{\,\text{indifference}}$ \\ \hline
1&.50000 \\
5&.12945 \\
10&.06697 \\
15&.04516 \\
20&.03406 \\
30&.02284 \\
50&.01377 \\
100&.00691 \\
200&.00346 \\
500&.00139 \\
1000&.00069 \\
2000&.00035 \\
5000&.00014 \\
10000&.00007 \\ \hline
        \end{tabular}
    \end{center}
    \caption{Indifference proportion} \label{Ta:INDIFFERENCE_PROPORTION}
\end{table}

\subsubsection{Indemnification formula} \label{S:UPPER_BOUND} 
The indemnification formula provides a statistical upper bound on hazard intensity. Indemnification data may be expressed as an 
equivalent statistical upper bound on hazard intensity. This differs fundamentally from estimating the intensity of a hazard. By 
a statistically \Dquo{guaranteed} hazard intensity, we mean an upper bound such that the true hazard intensity is likely to fall 
beneath this level with known confidence (probability).

Suppose we choose ${^1\!\!/\!_2}$ as the known confidence. The indifference proportion 
$\hat{\rho}_\text{\,I} = 1 - \sqrt[N]{{^1\!\!/\!_2}}$ then has a second interpretation as an upper bound with confidence 
${^1\!\!/\!_2}$. For any $\rho \leq \hat{\rho}_\text{\,I}$, it is true that power function $P_{N,0}(\rho) \leq {^1\!\!/\!_2}$, 
so $\hat{\rho}_\text{\,I}$ is an upper bound of confidence ${^1\!\!/\!_2}$.

To convert from the size of the indemnification sample into its equivalent upper bound hazard intensity, find the indifference 
proportion $\hat{\rho}_{\,\text{I}} = 1 - \sqrt[N]{{^1\!\!/\!_2}}$.

Check the sample physics. This amounts to analysis of the originating cone $\Con{C}$, which is the point of exhibition of a hazard 
whenever safety constraints are not met. The cone's edge has an absolute operational profile expressed as a rate. This quantity is 
the temporal norm of the cone's edge.

We have shown that the probable upper bound of the hazard intensity is proportional to the indifference proportion, with constant of 
proportionality furnished by the temporal norm of the cone's edge. The indemnification formula is:
\[
\hat{\lambda}_{\,\text{I}} = \hat{\rho}_{\,\text{I}} \cdot \Norm{\edge{\Con{C}}} 
= \hat{\rho}_{\,\text{I}} \cdot \Norm{\lbrace\ConfigScript{c}_{\text{crux}}\rbrace}.
\]

\chapter{Epilogue} 

From previous discussion two structures of system safety emerge: the safety demonstration and indemnification, 
its measure of assurance. Opinion follows.

\section{Programmatic fit} \label{S:PROGRAMMATIC_FIT} 
Safety demonstration and indemnification merge smoothly into today's programmatic picture. Early in the development cycle, safety 
engineers provide \Dquo{ballpark} quantifications of the threat of hazards\cite{DD12}, expressed as intensity and severity. These 
numbers are often educated guesses: a mixture of circumstance, intuition, similar design, and history. At that stage, the process 
is without supporting evidence.  Later in the development cycle, assuming a program of structured testing has been followed, 
statistical evidence is available in the form of a safety demonstration. These data are expressed as a statistical upper bound on 
each software hazard's intensity -- that is, an indemnification -- and used as evidence of correct operation. This additional configuration 
frees the safety engineer from having to re-asses his original estimate using the same shallow method as the original guesstimate.

Increasingly, indigenous static syntax analyzers satisfy need for overall code robustness. 
However, exclusive reliance on these analyzers would result in a software safety engineering shortcoming, 
because they do not always detect code defects that have valid syntax (that is, syntactically valid but wrong algorithm).

\section{Commercialization} \label{S:COMMERCIALIZATION} 
Difficult work remains before safety demonstration and indemnification can be supported as mature commercial technology. The role 
of the reactive actuated automaton must be replaced by a real-world programming language. Present theory restricting tests (predecessor trajectories) 
to acyclic cones may require generalization to cyclic cones to achieve broader range. Safety demonstration demands the ability 
to produce approximate operational profiles from which can be drawn pseudo-random samples. Spin-offs from similar technologies may 
be possible; static analyzers are one example.

\section{Criticism of MIL-STD-882} \label{S:CRITICISM} 
Statistical risk is a model emulating the threat of haphazard accidents. Users of \MilStd{} are familiar 
with statistical risk as an accident model for hardware. Software is properly deterministic and therefore non-stochastic, but it's 
successfully approximated with the same Poisson stochastic process as hardware. The rationale for this approximation is to apply a 
useful Poisson mathematical assurance property; thus assurance depends on the validity of the Poisson process as an approximation 
to the reactive actuated automaton. 

The Standard introduces a risk-like scale replacing statistical risk for software. For purpose of this chapter, we call this replacement scale the 
\emph{design} risk. The discussion of chapters 2 and 3 conclude that, from the standpoint of statistical assurance, there is no justification for 
the presently differing versions of the term \emph{risk} between hardware and software. 

The state of software engineering is mixed science and sophisticated art. In the current Standard, art has somewhat overtaken science; for 
software, the concept of statistical risk has been abandoned. One subsequently loses the ability to measure and assure risk uniformly. 

Design risk and statistical risk do share a severity axis, but the similarity ends there. Statistical risk has another axis composed of a numerical 
product, the frequency of execution times the probability of error. What we have called design risk also possesses another axis, but it is a 
categorical scale arranged in decreasing order of the design safety importance of the software's functionality.

This results in an error in \MilStd{} with serious consequence. Because the frequency of execution of a software point is not 
well-correlated with its functionality's design safety importance, statistical risk does not correlate well with design risk. Because the Standard 
assigns statistical risk to hardware and design risk to software, and due to lack of correlation between the two, there is no way to rank the relative 
importance of hazards of mixed type. Loss of the ability to compare risks of all hazards is a flagrant omission. Under correct physics, risks of multiple 
hazards are additive. This is not the case under \MilStd.

The formal sense of \emph{assurance} is lost by these definitional variants. Being quantitatively assured requires a limit value on proportion or 
mean deviation and a statement of statistical confidence for this limit; the Standard clearly lacks this characteristic. Properly assurance is a 
numeric quantity associated with statistical control of risk, not an engineering activity to further psychological confidence. Despite that its 
developers may express great confidence in the methodology, software built under the Standard is not quantitatively assured.


\section{Repair of MIL-STD-882} \label{S:REPAIR} 
Rehabilitation of \MilStd{} is straightforward. It must be amended to contain an engineering introduction to statistical risk for software, 
including allied procedures. This subject matter is covered here in mathematical language, but should be presented differently for engineers' 
consumption. The revised Standard should distinguish between formal assurance and design confidence, and classify what procedures support either.
Generally, the concerns associated with design risk align with developmental software engineering, while those of statistical risk align with 
responsibilities of system safety engineering, part of systems engineering. Software and Systems Engineering should not duplicate each other's efforts.

\appendix

\chapter{Groundwork} \label{Ch:GROUNDWORK} 

This appendix examines ensembles, a mathematical structure in the theory of systems representing physical stimulus and 
response. From this start, discussion proceeds into the Cartesian product, choice spaces and subspaces, and partitions of choice 
spaces into deterministic and stochastic components. Dyadic notation is introduced.

\section{Ensemble} \label{S:ENSEMBLE} 
Using a single symbol, ensembles compactly represent physical stimuli, which are in turn associative mappings between \emph{entities} 
and \emph{values}. Here an entity is a unique descriptive category name such as \Dquo{leftmost rear light intensity control} and \emph{value} 
is the setting of the light at a particular instant. One must be cautious to realize engineering specificity in the naming of entities, 
as \Dquo{leftmost rear light intensity control} is meaningless when the light source has failed. A supplementary entity, for example 
\Dquo{leftmost rear light failure status,} may be necessary for engineering completeness. Physical systems are composed of interrelated 
combinations of stimulus and response, in which response is \emph{completely} determined by stimulus. An ensemble represents a system 
stimulus at a particular instant of time.

Mathematically, an ensemble is a special form of a more general structure known as a \emph{family}. We employ nomenclature abridged from 
Halmos \cite[p. 34]{pH74} as follows:
\begin{nomenclature}
Let $I$ and $X$ be non-empty sets, and $\varphi \colon I \to X$ be a mapping. Each element $i \in I$ is an \emph{index}, while $I$
itself is an \emph{index set}. The mapping $\varphi$ is a \emph{family}; its codomain $X$ is an \emph{indexed set}. An ordered pair
$(i,x)$ belonging to the family is a \emph{term}, whose value $x = \varphi(i)\in X$ is often denoted $\varphi_i$.
\end{nomenclature}
\begin{notation}
The family $\varphi$ itself is routinely but abusively denoted $\{\varphi_n\}$. This notation is a compound idiom.\footnote{
compound idiom: the symbol is a composite of other notational devices that do not together indicate the usual compound application 
of the symbol's separate sources; rather, the symbol has a meaning that must be recognized whole-wise.}
Especially in the case of sequences over a set $G$, the symbol $\{g_n\}$ signifies the mapping 
$\{ 1\mapsto g_1, \; 2\mapsto g_2, \; \ldots \;\}$.
\end{notation}

\begin{definition}\label{D:ENSEMBLE}
An \emph{ensemble} $\upsilon$ is a non-empty finite family (mapping).
\end{definition}

\begin{remark}
Since physical systems possess only a finite number of attributes, the scope of interest is ensembles having 
finite-dimensional index sets. 
\end{remark}
\begin{notation}
Let $\psi$ be an ensemble. For term $(i,v) \in \psi$, we denote $v = \psi_i$. 
\end{notation}

\section{Ensemble relations and operations} \label{S:ENSEMBLE_ARITHMETIC} 
This section presents binary comparisons and operators on ensembles. 

\begin{definition} \label{D:DISJOINT_ENSEMBLES}
Two ensembles $\psi$ and $\phi$ are \emph{disjoint} if $\domain{\psi} \cap \domain{\phi} = \varnothing$.
\end{definition}

\begin{definition} \label{D:ENSEMBLE_SUBSET}
Let $\psi$ and $\theta$ be ensembles. The notation $\theta \subseteq \psi$ signifies that the restriction 
$\Restrict{\psi}{\domain{\theta}} = \theta$.
\end{definition}

\begin{theorem} \label{T:UNION_ENSEMBLES}
The ordinary set-theoretic union of two disjoint ensembles $\psi$ and $\phi$ is another ensemble $\psi \cup \phi$.
\end{theorem}
\begin{proof}
Definition \ref{D:ENSEMBLE} asserts that ensembles $\psi$ and $\phi$ are non-empty finite families, which are non-empty mappings composed 
of a finite set of ordered pairs. The union of two finite sets of ordered pairs is another finite set of ordered pairs. This union is a 
mapping because definition \ref{D:DISJOINT_ENSEMBLES} assures the domains have no member in common. The mapping is an ensemble because it 
is non-empty and finite.
\end{proof}

\begin{remark}
The mathematical question is posed by $(\mathit{index}, \mathit{value}_1) \in \psi$ and $(\mathit{index}, \mathit{value}_2) \in \phi$, 
so just what unique $(\mathit{index}, \mathit{undefined})$ belongs to $\psi \cup \phi$? This represents a potential physical conflict 
of stimulus which we wish to avoid categorically by requiring disjointness.
\end{remark}

\begin{definition} \label{D:ENSEMBLE_SUBSET}
Let $\psi$ and $\theta$ be ensembles. The notation $\theta \subseteq \psi$ signifies that the restriction 
$\Restrict{\psi}{\domain{\theta}} = \theta$.
\end{definition}

\begin{definition}\label{D:ENSEMBLE_RANGE}
Let $\psi$ be an ensemble.
The \emph{range} of the ensemble  is
\[
\range{\psi} = \bigcup_{i \in \domain{\psi}} \psi(i).
\]
\end{definition}

\subsection{Ensemble union and dyadic notation} \label{S:ENSEMBLE_DYADIC_NOTATION} 
Dyadic notation is useful in single frequently-used applications which demand notational compactness. 
Dyadic notation replaces a binary operator invocation by its two concatenated arguments, thus resembling a product: 
\[
\mathit{operatorname}(\theta,\phi) \mapsto \theta\phi.
\]

\begin{definition} \label{D:DYADIC_ENSEMBLE_PRODUCT}
Let $\theta$ and $\phi$ be disjoint ensembles. By using \emph{dyadic} product notation 
\[
\theta \cup \phi = \theta \phi,
\] disjointness (definition \ref{D:DISJOINT_ENSEMBLES}) is implicitly implied.
\end{definition}

\begin{theorem}
Let $\theta$ and $\phi$ be disjoint ensembles; then $\theta\phi = \phi\theta$.
\end{theorem}
\begin{proof}
By twice applying definition \ref{D:DYADIC_ENSEMBLE_PRODUCT}, $\thickspace\theta \phi = \theta \cup \phi$ and $\phi \theta = \phi \cup \theta$.
But $\theta \cup \phi = \phi \cup \theta$, so $\theta\phi = \phi\theta$ by transitivity of equality.
\end{proof}

\section{Class} \label{S:CLASS} 
\begin{definition}\label{D:UNIFORM_SET_ENSEMBLES}
A domain-uniform collection of ensembles is a set $\Upsilon$ such that for any $\upsilon_1, \upsilon_2 \in \Upsilon$,
$\domain{\upsilon_1} = \domain{\upsilon_2}$.
\end{definition}

\begin{definition}\label{D:CLASS_ENSEMBLES} (equivocation)
The term \emph{class} is a synonym for a domain-uniform set of ensembles.
\end{definition}

\section{Class relations and operations} \label{S:CLASS_ARITHMETIC} 
\begin{definition}\label{D:CLASS_DOMAIN} 
Let $\Upsilon$ be a class. The idiom $\domain{\Upsilon}$ means $\domain{\upsilon}$, where $\upsilon \in \Upsilon$.
\end{definition}

\begin{lemma}\label{L:CLASS_DOMAIN} 
Definition \ref{D:CLASS_DOMAIN} is unique.
\end{lemma}
\begin{proof} 
Let $\Upsilon$ be a class and suppose $\upsilon_1, \upsilon_2 \in \Upsilon$. 
Since $\Upsilon$ is a class, then it is also a domain-uniform set of ensembles, and, by definition \ref{D:UNIFORM_SET_ENSEMBLES}$, 
\domain{\upsilon_1} = \domain{\upsilon_2}$.  
\end{proof}

\begin{remark} 
Although it has no proper domain, a class is a set of mappings which possesses a common domain.
\end{remark}

\begin{definition}\label{D:CONSTANT_CLASS}
Let $\Upsilon$ be a class. If for index $i \in \domain{\Upsilon}$ there exists constant $c_0$ such that $\upsilon(i) = c_0$ 
for any ensemble $\upsilon \in \Upsilon$, then the class $\Upsilon$ is \emph{constant} at $i$.
\end{definition}

In the context of classes, the symbol usually designating \Dquo{subset} instead indicates quasi-equality between two classes: 
\begin{definition} \label{D:CLASS_SUBSET}
Let $\Psi$ and $\Phi$ be classes. The notation $\Phi \subseteq \Psi$ means that 
for every $\psi \in \Psi$, there exists $\phi \in \Phi$ such that $\Restrict{\psi}{\domain{\Phi}} = \phi$,
\emph{and} 
for every $\phi \in \Phi$, there exists $\psi \in \Psi$ such that $\Restrict{\psi}{\domain{\Phi}} = \phi$. 
\end{definition}

\begin{definition} \label{D:CLASS_RANGE}
Let $\Psi$ be a class. 
The \emph{range} of $\Psi$ is the union the ranges of all constituent ensembles:
\[
\range{\Psi} = \bigcup_{\psi \in \Psi} \range{\psi}.
\]
\end{definition}

\subsection{Class union and dyadic notation} \label{S:CLASS_DYADIC_NOTATION} 
\begin{definition}\label{D:DISJOINT_CLASSES} 
Let $\Psi$ and $\Upsilon$ be classes. The classes are \emph{disjoint} if $\domain{\Psi} \cap \domain{\Upsilon} = \varnothing$ 
(using definition \ref{D:CLASS_DOMAIN}).
\end{definition}

Recall from definition \ref{D:ENSEMBLE_RANGE} that the range of an ensemble $\psi$ is 
$\range{\psi} = \cup_{i \in \domain{\psi}} \medspace\psi(i)$.

\begin{definition} \label{D:CLASS_UNION}
The \emph{union} of two disjoint classes $\Theta$ and $\Phi$ is
\[
\Theta  \cup \Phi = \{ \theta \phi : \theta \in \Theta, \phi \in \Phi \}.
\]
\end{definition}

\begin{definition} \label{D:DYADIC_CLASS_PRODUCT}
Let $\Theta$ and $\Phi$ be disjoint classes. Disjointness (definition \ref{D:DISJOINT_ENSEMBLES}) is implicitly implied by 
using dyadic notation: 
\[
\Theta \cup \Phi = \Theta \Phi.
\]
\end{definition}

\subsection{Deterministic-stochastic partition of reactive stimulus} \label{S:DETERMINISTIC_STOCHASTIC_PRTN} 
Definition \ref{D:REACTIVE_BASIS} asserts that basis $\Basis{\Phi}{\Xi}$ is comprised of two disjoint classes. 
Indeed $\Phi$, $\Xi$, and $\Psi = \Phi\Xi$ have a central role in systems theory, and this importance will now be discussed.

In systems theory, $\Phi$ represents read-write or \emph{deterministic} variables, and the class represents ensembles that can be 
written. Its name is due to the fact that once it is written, its value persists until overwritten, possibly many read operations 
later. The class $\Xi$ represents \emph{stochastic} variables, which are read-only. Thus there is a physical reason for the disjointness 
of these two classes: memory is either read-write or read-only, and not both.

The class $\Psi = \Phi\Xi$ represents the total \emph{reactive} stimulus (that is, the total input which uniquely determines 
the system's output or response). The system's response, on the other hand, must be recorded to read-write variables, which is 
$\Phi$ in the case. The mathematical functionality which transforms stimulus into response must have the functional prototype 
$\Phi\Xi \to \Phi$.

\section{Set product} \label{S:SET_PRODUCT} 
Informally, a choice space contains all combinations of variables' values, 
regardless of whether a particular combination is realized in a listing. The 
Cartesian (set) product formalizes this notion.

\subsection{Empirical range abstraction} \label{S:RANGE_ABSTRACTION} 
In science and engineering, ranges are empirical and not truly limit quantities -- they must be idealized to admit the possibility 
of values which are omitted or skipped. A value's absence from a particular data set does not imply that the value is impossible; 
thus the entire data set must undergo an operation called abstraction. For example, a collection of point set values may be 
abstracted as a bounded interval, whereas a categorical range such as $\{\mathrm{off, on}\}$ is not affected. 
This is properly an engineering operation, so it may be represented but not defined as a mathematical transformation.

We use the following pseudo-function to represent range abstraction:
\[
\mathrm{ideal\ range} = \Abstract{\mathrm{empirical\ range}}.
\]

\subsection{Closure} \label{S:CLOSURE} 
\emph{(Future: simplify by re-introducing appendix \ref{S:ABSTRACT_HULL} ff. as a closure.)}

\subsection{Abstract hull} \label{S:ABSTRACT_HULL} 
In science and engineering, some sets of stimuli are so numerous that they are 
not listable. A notation is needed to admit the possibility of values which are 
omitted or skipped in a given list of observations. 

An abstract hull is a finite subset whose total measure is T and contains no 
null subset.

A value's absence from a particular data set does not imply that the value is impossible; 
thus the entire data set must undergo an operation called abstraction. For example, a collection of point set values may be 
abstracted as a bounded interval, whereas a categorical range such as $\{\mathrm{off, on}\}$ is not affected. 
This is properly an engineering operation, so it may be represented but not defined as a mathematical transformation.

We use the following pseudo-function to represent range abstraction:
\[
\mathrm{ideal\ range} = \Abstract{\mathrm{empirical\ range}}.
\]

Empirical ranges will occur in the margins of a class:
\begin{definition} \label{D:MARGINAL_RANGE}
Let $\Psi$ be a class with $i \in \domain{\Psi}$. The \emph{marginal} range of $\Psi$ with respect to index $i$ is 
$\range{\Psi(i)} = \{ \psi(i) : \psi \in \Psi \}$.
\end{definition}

\begin{theorem} \label{T:ABSTRACTION_NEIGHBORHOOD}
Let $\Psi$ be a class. For each $\psi \in \Psi$ and $i \in \domain{\psi}$, $\psi(i) \in \Abstract{\range{\Psi(i)}}$. 
\end{theorem}
\begin{proof}
Abstraction includes some neighborhood of each observation. 
\end{proof}

We also assume that abstraction of the same phenomenon is invariant to the data set in which it occurs. 

\subsection{Cartesian product of empirical ranges} \label{S:GCP} 
\begin{definition} \label{D:MARGINAL_PROTOSETS}
Let $\Psi$ be a class. For each $i \in \domain{\Psi}$, set 
\[
\Psi_\heartsuit(i) = \Abstract{\range{\Psi(i)}}.
\]
\end{definition}

\begin{definition} \label{D:PROTOSET}
Let $\Psi$ be a class. The \emph{proto-set} $\Psi_\heartsuit$ is the union of the rationalized marginal ranges: 
\[
\Psi_\heartsuit = \bigcup_{i \in \domain{\Psi}} \Psi_\heartsuit(i).
\]
\end{definition}

\begin{definition} \label{D:PROTOSPACE}
Let $\Psi$ be a class with proto-set $\Psi_\heartsuit$. The \emph{proto-space} of $\Psi$ is the set ${\Psi_\heartsuit}^{\domain{\Psi}}$,
where  ${\Psi_\heartsuit}^{\domain{\Psi}}$ is the set of all mappings $\domain{\Psi} \to \Psi_\heartsuit$.
\end{definition}

\begin{definition} \label{D:CHOICE}
Let $\Psi$ be a class with proto-set $\Psi_\heartsuit$. Let $\chi \colon \domain{\Psi} \to \Psi_\heartsuit$ be a mapping.
If $\chi$ satisfies $\chi(i) \in \Psi_\heartsuit(i)$ for each $i \in \domain{\Psi}$, then $\chi$ is a \emph{choice} mapping of $\Psi$.
\end{definition}

\begin{definition} \label{D:CHOICE_SPACE}
The set of all choice mappings of class $\Psi$ is the \emph{choice} space (or general Cartesian product) $\prod\Psi$.
\end{definition}

\begin{remark}
When restricted to classes and ensembles, the general Cartesian product may be approximated as 
$\Closure{\Psi} = \{ \psi \in \Psi : \{ (i,v) : i \in \domain{\psi}, v \in \range{\Psi(i)} \} \}$.
\end{remark}

\begin{nomenclature}
Through the general Cartesian product, a class \emph{generates} a choice space. For brevity we refer to a point in a choice
space (that is, a choice mapping) simply as a \emph{choice}.
\end{nomenclature}

\section{Relations in choice space } \label{S:CHOICE_RELATIONS} 
\subsection{Proto-space} \label{S:PROTOSPACE} 
\begin{theorem}\label{T:PROTOSPACE_INCLUDES_CHOICESPACE}
The proto-space ${\Psi_\heartsuit}^{\domain{\Psi}}$ of class $\Psi$ includes its choice space $\prod\Psi$ (that is,
$\prod\Psi \subseteq {\Psi_\heartsuit}^{\domain{\Psi}}$).
\end{theorem}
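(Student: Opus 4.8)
The plan is to prove the stated set inclusion by the standard element-chasing method: fix an arbitrary member of $\prod\Psi$ and show it lies in the proto-space $\Psi_\heartsuit^{\domain{\Psi}}$. Because both sides are sets of mappings sharing the common domain $\domain{\Psi}$, the entire content of the theorem reduces to verifying that every choice mapping indeed takes its values in $\Psi_\heartsuit$, which is immediate once the relevant definitions are unpacked.

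First I would take an arbitrary $\chi \in \prod\Psi$. By definition \ref{D:CHOICE_SPACE}, the choice space $\prod\Psi$ is exactly the set of choice mappings of $\Psi$, so $\chi$ is a choice mapping. Next, definition \ref{D:CHOICE} states precisely what a choice mapping is: a mapping $\chi \colon \domain{\Psi} \to \Psi_\heartsuit$ with the additional property that $\chi(i) \in \Psi(i)$ for each $i \in \domain{\Psi}$. The extra membership condition is not needed for this particular inclusion; what matters is that $\chi$ already carries domain $\domain{\Psi}$ and codomain $\Psi_\heartsuit$.

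Then I would invoke definition \ref{D:PROTOSPACE}, which declares the proto-space $\Psi_\heartsuit^{\domain{\Psi}}$ to be the set of \emph{all} mappings $\domain{\Psi} \to \Psi_\heartsuit$. Since $\chi$ is such a mapping, $\chi \in \Psi_\heartsuit^{\domain{\Psi}}$. As $\chi$ was arbitrary, every member of $\prod\Psi$ lies in $\Psi_\heartsuit^{\domain{\Psi}}$, which is exactly the desired inclusion $\prod\Psi \subseteq \Psi_\heartsuit^{\domain{\Psi}}$.

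There is essentially no obstacle here; the result holds by construction, because definition \ref{D:CHOICE} builds choice mappings with codomain $\Psi_\heartsuit$ from the outset. The only point meriting a word of care is a consistency check: one should confirm that the constraint $\chi(i) \in \Psi(i)$ is compatible with the codomain $\Psi_\heartsuit$, which follows from the remark after definition \ref{D:PROTOSET} that $\Psi(i) \subseteq \Psi_\heartsuit$ for each index $i$. This guarantees the choice mappings are genuine, well-typed elements of the proto-space, so the inclusion is not vacuous.
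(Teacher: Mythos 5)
Your proof is correct and follows essentially the same route as the paper's: fix an arbitrary $\chi \in \prod\Psi$, note via definitions \ref{D:CHOICE_SPACE} and \ref{D:CHOICE} that $\chi$ is a mapping $\domain{\Psi} \to \Psi_\heartsuit$, and conclude membership in ${\Psi_\heartsuit}^{\domain{\Psi}}$ by definition \ref{D:PROTOSPACE}. Your added consistency check that $\Psi(i) \subseteq \Psi_\heartsuit$ is a harmless extra observation the paper leaves implicit.
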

\begin{proof}
Suppose $\chi \in \prod\Psi$. By definition \ref{D:CHOICE}, $\chi$ is a mapping $\domain{\Psi} \to \Psi_\heartsuit$. Then, by
definition \ref{D:PROTOSPACE}, $\chi \in {\Psi_\heartsuit}^{\domain{\Psi}}$. Thus, any member of $\prod\Psi$ is also a member of
${\Psi_\heartsuit}^{\domain{\Psi}}$. From this we conclude $\prod\Psi \subseteq {\Psi_\heartsuit}^{\domain{\Psi}}$.
\end{proof}

\subsection{Choice subspaces} \label{S:CHOICE_SUBSPACE} 
Any mapping, including a choice mapping, may be restricted to subsets of its domain.

\begin{definition} \label{D:SUBCHOICE}
Let $\Psi$ be an class, and let $R \subseteq \domain{\Psi}$ be a subset of its index set. Suppose $\chi \in \prod\Psi$ is a choice.
A \emph{subchoice} $\Restrict{\chi}{R}$ is the ordinary mapping restriction of $\chi$ to its domain subset $R$.
\end{definition}

\begin{remark}
In the above, degenerate case $R = \varnothing$ yields $\Restrict{\chi}{R} = \varnothing$.
\end{remark}

\begin{definition} \label{D:SUBSPACE}
Let $\Psi$ be an class. For each $R \subseteq \domain{\Psi}$, the \emph{subspace} $\Restrict{(\thinspace\prod\Psi)}{R}$ is the set
of subchoices $\Restrict{\lbrace\thinspace\nu}{R} \medspace \colon \nu \in \prod\Psi\thinspace\rbrace$.
\end{definition}

\begin{theorem}\label{T:CHC_RSTR_EQ_RSTR_CHC}
Let class $\Psi$ generate choice space $\prod \Psi$, and let $R \subseteq \domain{\Psi}$ be a subset of its index set.
The restriction of the choice space equals the choice space of the restriction:
\[
\Restrict{(\thinspace\prod\Psi)}{R} = \prod (\Restrict{\Psi}{R}).
\]
\end{theorem}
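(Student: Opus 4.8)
The plan is to prove the set equality $(\prod\Psi)\mid R = \prod(\Tmid{\Psi}{R})$ by establishing mutual inclusion, after first translating each side through the governing definitions. By Definition \ref{D:SUBSPACE}, the left-hand subspace is the collection of subchoices $\lbrace \Tmid{\chi}{R} \colon \chi \in \prod\Psi \rbrace$, where each $\Tmid{\chi}{R}$ is the ordinary restriction of Definition \ref{D:SUBCHOICE}. By Definitions \ref{D:CHOICE_SPACE} and \ref{D:CHOICE}, together with the observations that $\domain{\Tmid{\Psi}{R}} = R$ and $(\Tmid{\Psi}{R})(i) = \Psi(i)$ for each $i \in R$, the right-hand side is the set of all mappings $\chi'$ on $R$ satisfying $\chi'(i) \in \Psi(i)$ for every $i \in R$. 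With both sides exposed in these terms the comparison reduces to extending and restricting mappings between the index sets $R$ and $\domain{\Psi}$.

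First I would settle the forward inclusion $(\prod\Psi)\mid R \subseteq \prod(\Tmid{\Psi}{R})$. Take an arbitrary member, necessarily of the form $\Tmid{\chi}{R}$ for some $\chi \in \prod\Psi$. Since $\chi$ is a choice of $\Psi$, it obeys $\chi(i) \in \Psi(i)$ for all $i \in \domain{\Psi}$, and in particular for all $i \in R$. Because $(\Tmid{\chi}{R})(i) = \chi(i)$ on $R$ and $(\Tmid{\Psi}{R})(i) = \Psi(i)$, the restriction satisfies the choice condition of Definition \ref{D:CHOICE} for the ensemble $\Tmid{\Psi}{R}$, so $\Tmid{\chi}{R} \in \prod(\Tmid{\Psi}{R})$. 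This direction is routine: the membership constraints are simply inherited on the smaller index set.

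The reverse inclusion $\prod(\Tmid{\Psi}{R}) \subseteq (\prod\Psi)\mid R$ is where the real work lies, and I expect it to be the main obstacle. Given $\chi' \in \prod(\Tmid{\Psi}{R})$, I must exhibit a full choice $\chi \in \prod\Psi$ whose restriction to $R$ is exactly $\chi'$. The natural construction sets $\chi(i) = \chi'(i)$ for $i \in R$ and, for each $i \in \domain{\Psi}\setminus R$, selects some element of $\Psi(i)$. The selection is available precisely because Definition \ref{D:ENSEMBLE} requires every indexed set $\Psi(i)$ to be non-empty, so no index is left without an admissible value; this non-emptiness is the structural hypothesis the argument turns on (for infinite index sets the selection formally invokes the axiom of choice, though the paper restricts practical interest to finite index sets). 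The resulting $\chi$ satisfies $\chi(i) \in \Psi(i)$ for every $i \in \domain{\Psi}$, hence $\chi \in \prod\Psi$, and by construction $\Tmid{\chi}{R} = \chi'$, placing $\chi'$ in the subspace $(\prod\Psi)\mid R$. The degenerate case $R = \varnothing$ can be dispatched separately by direct inspection, since both sides then reduce to the singleton containing the empty mapping. Combining the two inclusions yields the claimed equality.
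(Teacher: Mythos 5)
Your proof is correct and follows the paper's overall route: unpack both sides through Definitions \ref{D:SUBSPACE}, \ref{D:SUBCHOICE}, \ref{D:CHOICE}, and \ref{D:CHOICE_SPACE}, note that $\domain{(\Tmid{\Psi}{R})} = R$ with $(\Tmid{\Psi}{R})(r) = \Psi(r)$ on $R$, and prove the two inclusions separately; the forward inclusion is handled identically in both. Where you differ --- to your credit --- is the reverse inclusion. The paper's proof, after deriving $\xi(r) \in \Psi(r)$ for each $r \in R$, simply asserts that \Dquo{from this it follows that $\xi \in (\thinspace\prod\Psi) \mid R$}; but membership in the subspace requires, by Definition \ref{D:SUBSPACE}, a witness $\chi \in \prod\Psi$ with $\Tmid{\chi}{R} = \xi$, and the paper never constructs one. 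You correctly identify this as the substantive step and supply the missing construction: retain $\xi$ on $R$ and select, for each $i \in \domain{\Psi} \setminus R$, some element of $\Psi(i)$, which is possible precisely because every indexed set of an ensemble is non-empty (with the axiom of choice invoked formally in the infinite case, as you note --- consistent with the paper's remark restricting practical interest to finite index sets). So your argument is the same decomposition but strictly more complete; it in effect repairs a small gap in the paper's own proof of the harder inclusion. Your separate dispatch of $R = \varnothing$ is harmless but unnecessary, since the empty mapping case is subsumed by the general extension argument.
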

\begin{proof}
Suppose $\xi \in \Restrict{(\thinspace\prod\Psi)}{R}$. By definition \ref{D:SUBCHOICE}, there exists $\chi \in \prod\Psi$ such that
$\xi = \Restrict{\chi}{R}$. By definition of Cartesian product, for each $i \in \domain{\Psi}$, $\chi(i) \in \Psi(i)$. Since
$R \subseteq \domain{\Psi}$, then for each $r \in R$, $\xi(r) \in \Psi(r)$. Consider $\Restrict{\Psi}{R}$, for which
$\domain{(\Restrict{\Psi}{R})} = R$. By definition of restriction, for $r \in R$, $(\Restrict{\Psi}{R})(r) = \Psi(r)$. Since
$\xi(r) \in \Psi(r)$ and $\Psi(r) = (\Restrict{\Psi}{R})(r)$, then for any $r \in R$, $\xi(r) \in (\Restrict{\Psi}{R})(r)$~-- that is, $\xi$
is a choice of $\Restrict{\Psi}{R}$. From the preceding, $\xi \in \Restrict{(\thinspace\prod\Psi)}{R}$ implies 
$\xi \in \prod (\Restrict{\Psi}{R})$, or $\Restrict{(\thinspace\prod\Psi)}{R} \subseteq \prod (\Restrict{\Psi}{R})$.

Next suppose $\xi \in \prod (\Restrict{\Psi}{R})$. Then, by definitions \ref{D:CHOICE} and \ref{D:CHOICE_SPACE} covering Cartesian
products, for each $r \in R$, $\xi(r) \in (\Restrict{\Psi}{R})(r)$. The class $\Psi$ coincides with its restriction $\Restrict{\Psi}{R}$
on $R$. A restatement of this is $(\Restrict{\Psi}{R})(r) = \Psi(r)$ for $r \in R$. Substituting $\Psi(r)$ for $(\Restrict{\Psi}{R})(r)$
yields $\xi(r) \in \Psi(r)$ for each $r \in R$. From this it follows that $\xi \in \Restrict{(\thinspace\prod\Psi)}{R}$, with the further
implication that $\prod (\Restrict{\Psi}{R}) \subseteq \Restrict{(\thinspace\prod\Psi)}{R}$.

We conclude equality $\Restrict{(\thinspace\prod\Psi)}{R} = \prod (\Restrict{\Psi}{R})$ after establishing that each of these two sets is a
subset of the other.
\end{proof}

\begin{lemma}\label{L:SUBSPACE_SUBSET}
Let $\Psi$ and $\Phi$ be classes. If $\Closure{\Phi}$ is a subspace of $\Closure{\Psi}$, then $\Phi \subseteq \Psi$.
\end{lemma}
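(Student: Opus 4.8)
The plan is to translate the hypothesis through the definition of subspace and then apply two uniqueness results already established. First I would invoke Definition \ref{D:SUBSPACE}: to say that $\CP{\Phi}$ \emph{is} a subspace of $\CP{\Psi}$ is precisely to say there exists an index subset $R \subseteq \domain{\Psi}$ such that $\CP{\Phi} = (\thinspace\prod\Psi)\mid R$. This unpacking is the crucial step; everything afterward is bookkeeping.

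Next I would rewrite the restricted choice space using Theorem \ref{T:CHC_RSTR_EQ_RSTR_CHC}, which gives $(\thinspace\prod\Psi)\mid R = \prod(\Tmid{\Psi}{R})$. Substituting into the previous equation yields $\prod\Phi = \prod(\Tmid{\Psi}{R})$, an equality of two choice spaces. Now Theorem \ref{T:SPACE_UNIQ_ENSEMBLE} asserts that any choice space has a unique generating ensemble, so from $\prod\Phi = \prod(\Tmid{\Psi}{R})$ I would conclude $\Phi = \Tmid{\Psi}{R}$.

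Finally I would observe that $\Tmid{\Psi}{R}$, being the ordinary restriction of the mapping $\Psi$ to a subset $R$ of its domain, is a subfamily of $\Psi$: every term $(i,P) \in \Tmid{\Psi}{R}$ satisfies $(i,P) \in \Psi$, which is exactly the containment $\Tmid{\Psi}{R} \subseteq \Psi$ in the ensemble sense. Combining this with $\Phi = \Tmid{\Psi}{R}$ gives $\Phi \subseteq \Psi$, as required.

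The only real obstacle is the first step -- reading off from Definition \ref{D:SUBSPACE} that subspace status furnishes an explicit index set $R$ with $\CP{\Phi} = (\thinspace\prod\Psi)\mid R$, rather than merely an inclusion of sets. Once that $R$ is in hand, Theorems \ref{T:CHC_RSTR_EQ_RSTR_CHC} and \ref{T:SPACE_UNIQ_ENSEMBLE} do all the work, and the concluding containment is immediate from the definition of restriction.
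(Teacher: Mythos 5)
Your proposal is correct and follows essentially the same route as the paper's own proof: unpack Definition \ref{D:SUBSPACE} to obtain an index set $R \subseteq \domain{\Psi}$ with $\CP{\Phi} = (\thinspace\prod\Psi)\mid R$, apply Theorem \ref{T:CHC_RSTR_EQ_RSTR_CHC} and then Theorem \ref{T:SPACE_UNIQ_ENSEMBLE} to get $\Phi = \Tmid{\Psi}{R}$, and conclude $\Phi \subseteq \Psi$ from the definition of restriction. The paper spells out the final containment term-by-term ($(i,P) \in \Tmid{\Psi}{R}$ implies $(i,P) \in \Psi$), exactly as you indicate.
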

\begin{proof}
Let $\Closure{\Phi}$ be a subspace of $\Closure{\Psi}$. By definition \ref{D:SUBSPACE}, there exists $R \subseteq \domain{\Psi}$
such that $\Closure{\Phi} = \Restrict{(\thinspace\Closure{\Psi})}{R}$. By Theorem \ref{T:CHC_RSTR_EQ_RSTR_CHC}, 
the restriction of the choice space equals the choice space of the restriction:
$\Restrict{(\thinspace\prod\Psi)}{R} = \prod (\Restrict{\Psi}{R})$. Transitivity of equality implies $\Closure{\Phi} = \prod (\Restrict{\Psi}{R})$.
Then, by Theorem \ref{T:SPACE_UNIQ_ENSEMBLE} (invertibility of the Cartesian product), $\Phi = \Restrict{\Psi}{R}$.

Suppose term $(i,P) \in \Phi$. Since $\Phi = \Restrict{\Psi}{R}$, then $(i,P) \in \Restrict{\Psi}{R}$. By the definition of restriction,
this implies both $(i,P) \in \Psi$ and $i \in R$. Since $(i,P) \in \Phi$ implies $(i,P) \in \Psi$, we conclude $\Phi \subseteq \Psi$.
\end{proof}

\begin{lemma}\label{L:SUBSET_RESTRICTION}
Let $\Psi$ and $\Phi$ be classes. If $\Phi \subseteq \Psi$ and $R = \domain{\Phi}$, then $\Phi = \Restrict{\Psi}{R}$.
\end{lemma}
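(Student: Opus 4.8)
The plan is to establish the set equality $\Phi = \Tmid{\Psi}{R}$ by proving mutual containment of the two ensembles, working term-by-term with pairs $(i,P)$. Throughout I would lean on two facts: the membership convention for $\Phi \subseteq \Psi$ (that $(i,P) \in \Phi$ implies $(i,P) \in \Psi$), and the defining property of restriction, that $\Tmid{\Psi}{R}$ consists precisely of those terms $(i,P) \in \Psi$ whose index satisfies $i \in R$.

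First I would show $\Phi \subseteq \Tmid{\Psi}{R}$. Take an arbitrary term $(i,P) \in \Phi$. The hypothesis $\Phi \subseteq \Psi$ gives $(i,P) \in \Psi$, and since $(i,P)$ is a term of $\Phi$ its index satisfies $i \in \domain{\Phi} = R$. By the defining property of restriction, $(i,P) \in \Psi$ together with $i \in R$ yields $(i,P) \in \Tmid{\Psi}{R}$. This direction uses only the containment hypothesis and the identification $R = \domain{\Phi}$.

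Second I would show the reverse containment $\Tmid{\Psi}{R} \subseteq \Phi$, which is the more delicate direction. Take $(i,P) \in \Tmid{\Psi}{R}$, so that $(i,P) \in \Psi$ and $i \in R = \domain{\Phi}$. Membership $i \in \domain{\Phi}$ guarantees the existence of some term $(i,Q) \in \Phi$, and again $\Phi \subseteq \Psi$ places $(i,Q) \in \Psi$. Because an ensemble is a family and hence a single-valued mapping (definition \ref{D:ENSEMBLE}), the two terms $(i,P)$ and $(i,Q)$ of $\Psi$ sharing the index $i$ force $P = Q$; therefore $(i,P) = (i,Q) \in \Phi$.

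The main obstacle is precisely this reverse inclusion: one cannot infer $(i,P) \in \Phi$ directly from $(i,P) \in \Psi$ and $i \in \domain{\Phi}$, since a priori the value that $\Phi$ assigns at $i$ need not coincide with $P$. The single-valuedness of $\Psi$ as a mapping is exactly what closes this gap and identifies the two values. Having established both inclusions, I would conclude by antisymmetry of set containment that $\Phi = \Tmid{\Psi}{R}$.
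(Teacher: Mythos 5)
Your proposal is correct and matches the paper's own proof essentially step for step: both directions are argued by chasing terms $(i,P)$, and the delicate reverse inclusion is closed in exactly the paper's way, by producing a term $(i,Q) \in \Phi$ from $i \in \domain{\Phi}$ and invoking single-valuedness of $\Psi$ as a mapping to force $P = Q$. The only difference is the order in which the two inclusions are presented, which is immaterial.
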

\begin{proof}
Consider $(i,P) \in \Restrict{\Psi}{R}$. It then follows from the definition of restriction that $(i,P) \in \Psi$ and $i \in R$. But
$R = \domain{\Phi}$, so $i \in \domain{\Phi}$. This implies there exists $(i,Q) \in \Phi$. Since $\Phi \subseteq \Psi$, then
$(i,Q) \in \Psi$. Since $\Psi$ is a mapping, then $(i,P) \in \Psi$ and $(i,Q) \in \Psi$ implies $P = Q$. From $P = Q$ and
$(i,Q) \in \Phi$, we infer that $(i,P) \in \Phi$. Thus $(i,P) \in \Restrict{\Psi}{R}$ implies $(i,P) \in \Phi$, so
$\Restrict{\Psi}{R} \subseteq \Phi$.

Next suppose $(i,P) \in \Phi$. From this it follows that $i \in R = \domain{\Phi}$. From the premises $(i,P) \in \Phi$ and
$\Phi \subseteq \Psi$ we conclude $(i,P) \in \Psi$. Together $(i,P) \in \Psi$ and $i \in R$ imply that $(i,P) \in \Restrict{\Psi}{R}$.
Thus $(i,P) \in \Phi$ implies $(i,P) \in \Restrict{\Psi}{R}$, so $\Phi \subseteq \Restrict{\Psi}{R}$.

From $\Restrict{\Psi}{R} \subseteq \Phi$ and $\Phi \subseteq \Restrict{\Psi}{R}$ we infer $\Phi = \Restrict{\Psi}{R}$.
\end{proof}

\begin{lemma}\label{L:SUBSET_SUBSPACE}
Let $\Psi$ and $\Phi$ be classes. If $\Phi \subseteq \Psi$, then $\Closure{\Phi}$ is a subspace of $\Closure{\Psi}$.
\end{lemma}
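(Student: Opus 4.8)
The plan is to exhibit an explicit index subset $R \subseteq \domain{\Psi}$ witnessing that $\CP{\Phi}$ arises as a restriction of $\CP{\Psi}$, in the exact sense demanded by definition \ref{D:SUBSPACE}. The natural candidate is $R = \domain{\Phi}$, and the argument amounts to chaining together the two results already proved for precisely this purpose: lemma \ref{L:SUBSET_RESTRICTION} and theorem \ref{T:CHC_RSTR_EQ_RSTR_CHC}. This is the converse of lemma \ref{L:SUBSPACE_SUBSET} and reuses the same machinery in the opposite direction.

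First I would record that $R = \domain{\Phi}$ is a legitimate index subset: since $\Phi \subseteq \Psi$ means every term $(i,P) \in \Phi$ is also a term of $\Psi$, taking domains gives $\domain{\Phi} \subseteq \domain{\Psi}$, so $R \subseteq \domain{\Psi}$ as required. Next, with this same $R = \domain{\Phi}$, lemma \ref{L:SUBSET_RESTRICTION} applies directly—its hypotheses $\Phi \subseteq \Psi$ and $R = \domain{\Phi}$ are exactly the present ones—and yields the ensemble identity $\Phi = \Tmid{\Psi}{R}$.

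Passing to choice spaces, theorem \ref{T:ENSEMBLE_UNIQ_SPACE} (equal ensembles generate equal choice spaces) gives $\prod\Phi = \prod(\Tmid{\Psi}{R})$. Then theorem \ref{T:CHC_RSTR_EQ_RSTR_CHC} rewrites the right-hand side as the restricted choice space, $\prod(\Tmid{\Psi}{R}) = (\prod\Psi)\mid R$. Transitivity of equality combines these into $\CP{\Phi} = (\CP{\Psi})\mid R$ with $R \subseteq \domain{\Psi}$, which is verbatim the condition in definition \ref{D:SUBSPACE} for $\CP{\Phi}$ to be a subspace of $\CP{\Psi}$, completing the proof.

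There is essentially no hard part here; the content has been front-loaded into the auxiliary lemma and theorem. The only point requiring minor care is the bookkeeping of which result supplies each equality—specifically invoking theorem \ref{T:ENSEMBLE_UNIQ_SPACE} to move from the ensemble equality $\Phi = \Tmid{\Psi}{R}$ to the corresponding choice-space equality \emph{before} theorem \ref{T:CHC_RSTR_EQ_RSTR_CHC} is applied, rather than conflating those two steps into one.
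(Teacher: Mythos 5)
Your proof is correct and follows essentially the same route as the paper's: set $R = \domain{\Phi}$, obtain $\Phi = \Tmid{\Psi}{R}$ from lemma \ref{L:SUBSET_RESTRICTION}, pass to choice spaces, and apply theorem \ref{T:CHC_RSTR_EQ_RSTR_CHC} to conclude via definition \ref{D:SUBSPACE}. If anything, your citation of theorem \ref{T:ENSEMBLE_UNIQ_SPACE} for the step from ensemble equality to choice-space equality is the more apt one (the paper cites theorem \ref{T:SPACE_UNIQ_ENSEMBLE} at that point), and your explicit check that $R \subseteq \domain{\Psi}$ tidies a detail the paper leaves implicit.
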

\begin{proof}
Set $R = \domain{\Phi}$. Since $\Phi \subseteq \Psi$ by hypothesis, then by applying lemma \ref{L:SUBSET_RESTRICTION} we infer
$\Phi = \Restrict{\Psi}{R}$. With this equality and  Theorem \ref{T:SPACE_UNIQ_ENSEMBLE} (invertibility of the Cartesian product),
we have $\Closure{\Phi} = \prod (\Restrict{\Psi}{R})$. Theorem \ref{T:CHC_RSTR_EQ_RSTR_CHC} asserts that the restriction of the choice space
equals the choice space of the restriction: $\Restrict{(\thinspace\prod\Psi)}{R} = \prod (\Restrict{\Psi}{R})$. Transitivity of equality implies
$\Closure{\Phi} = (\thinspace\prod\Psi) \mid R$. This last equality is exactly the premise of definition \ref{D:SUBSPACE}: $\Closure{\Phi}$
is a subspace of $\Closure{\Psi}$.
\end{proof}
\begin{theorem}\label{T:SUBSET_IFF_SUBSPACE}
Let $\Psi$ and $\Phi$ be classes. $\Closure{\Phi}$ is a subspace of $\Closure{\Psi}$ if and only if $\Phi \subseteq \Psi$.
\end{theorem}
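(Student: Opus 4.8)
The plan is to recognize that this biconditional has already been decomposed into its two constituent implications, each of which appears immediately above as a separate lemma. So the proof reduces to assembling these two pieces, exactly as Theorem \ref{T:SPACE_BIJECTION_ENSEMBLE} was assembled from Theorems \ref{T:ENSEMBLE_UNIQ_SPACE} and \ref{T:SPACE_UNIQ_ENSEMBLE}.

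First I would dispatch the forward direction. Suppose $\CP{\Phi}$ is a subspace of $\CP{\Psi}$; then Lemma \ref{L:SUBSPACE_SUBSET} asserts precisely that $\Phi \subseteq \Psi$. That lemma obtained this by locating $R \subseteq \domain{\Psi}$ with $\CP{\Phi} = (\thinspace\CP{\Psi}) \mid R$, invoking Theorem \ref{T:CHC_RSTR_EQ_RSTR_CHC} to rewrite this as $\CP{\Phi} = \prod(\Tmid{\Psi}{R})$, and then using invertibility of the Cartesian product (Theorem \ref{T:SPACE_UNIQ_ENSEMBLE}) to conclude $\Phi = \Tmid{\Psi}{R} \subseteq \Psi$; nothing further is needed here.

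Next I would handle the reverse direction. Suppose $\Phi \subseteq \Psi$; then Lemma \ref{L:SUBSET_SUBSPACE} asserts precisely that $\CP{\Phi}$ is a subspace of $\CP{\Psi}$. That lemma, in turn, set $R = \domain{\Phi}$, used Lemma \ref{L:SUBSET_RESTRICTION} to obtain $\Phi = \Tmid{\Psi}{R}$, and then combined Theorems \ref{T:SPACE_UNIQ_ENSEMBLE} and \ref{T:CHC_RSTR_EQ_RSTR_CHC} to reach $\CP{\Phi} = (\thinspace\CP{\Psi}) \mid R$, which is the defining condition of definition \ref{D:SUBSPACE}. Having both implications in hand, I would simply conjoin them to conclude the stated equivalence.

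Since both halves are already proven, there is essentially no obstacle; the only point demanding care is bookkeeping — confirming that the two lemmas are genuine converses of one another (they are, matching $\Phi \subseteq \Psi$ against \Dquo{$\CP{\Phi}$ is a subspace of $\CP{\Psi}$} in opposite directions) so that their conjunction yields exactly the biconditional claimed, with no gap or overlap in hypotheses.
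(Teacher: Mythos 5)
Your proposal is correct and matches the paper's own proof exactly: the paper likewise cites Lemma \ref{L:SUBSPACE_SUBSET} for the forward implication and Lemma \ref{L:SUBSET_SUBSPACE} for the converse, then conjoins the two to establish the biconditional. Your recap of the lemmas' internal arguments is accurate but not needed, since the theorem's proof consists only of the assembly step.
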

\begin{proof}
lemma \ref{L:SUBSPACE_SUBSET} asserts that if $\Closure{\Phi}$ is a subspace of $\Closure{\Psi}$, then $\Phi \subseteq \Psi$.
Lemma \ref{L:SUBSET_SUBSPACE} asserts that if $\Phi \subseteq \Psi$, then $\Closure{\Phi}$ is a subspace of $\Closure{\Psi}$.
This pair of inverse implications establishes the biconditional.
\end{proof}

\begin{lemma}\label{L:ENSEMBLE_PROD_SUBSETS}
If $\Upsilon$, $\Psi$, and $\Phi$ are classes such that $\Upsilon = \Psi\Phi$, then $\Psi \subseteq \Upsilon$ and
$\Phi \subseteq \Upsilon$. 
\end{lemma}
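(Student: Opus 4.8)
The plan is to reduce the claim directly to the set-theoretic definition of the dyadic ensemble product. First I would observe that for the product $\Psi\Phi$ to be defined at all, Definition \ref{D:DYADIC_ENSEMBLE_PRODUCT} requires $\Psi$ and $\Phi$ to be disjoint; under that definition the product is simply the union, so the hypothesis $\Upsilon = \Psi\Phi$ immediately rewrites as $\Upsilon = \Psi \cup \Phi$. This single substitution does essentially all the work.

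Next I would establish $\Psi \subseteq \Upsilon$ by appeal to the notation for ensemble containment (term $(i,P) \in \Psi$ implies $(i,P) \in \Upsilon$). I would take an arbitrary term $(i,P) \in \Psi$; by elementary set theory $(i,P) \in \Psi \cup \Phi$, and since $\Upsilon = \Psi \cup \Phi$ it follows that $(i,P) \in \Upsilon$. Because the term was arbitrary, $\Psi \subseteq \Upsilon$.

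Finally I would obtain $\Phi \subseteq \Upsilon$ by the identical argument with the roles of $\Psi$ and $\Phi$ interchanged — or, more economically, by invoking the commutativity $\Psi\Phi = \Phi\Psi$ remarked after Definition \ref{D:DYADIC_ENSEMBLE_PRODUCT} and re-applying the first case. I expect no genuine obstacle: the only point worth flagging is the implicit disjointness assumption that licenses rewriting the product as a union, after which both containments are immediate from the definition of \emph{subset} on ensembles.
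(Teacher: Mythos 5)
Your proposal is correct and takes essentially the same route as the paper's proof: unfold $\Upsilon = \Psi\Phi$ into the union $\Psi \cup \Phi$ via Definition \ref{D:DYADIC_ENSEMBLE_PRODUCT}, then chase an arbitrary term through the union to get both containments. The only difference is cosmetic --- the paper additionally performs a case split on whether $i \in \domain{\Psi}$ or $i \in \domain{\Phi}$ using disjointness, which your direct argument correctly shows is unnecessary, since $\Psi \subseteq \Psi \cup \Phi$ holds regardless of disjointness.
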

\begin{proof}
Since $\Upsilon$ is the dyadic product of $\Psi$ and $\Phi$, then by definition \ref{D:DYADIC_ENSEMBLE_PRODUCT}, $\Psi$ and $\Phi$
are disjoint classes and $\Upsilon = \Psi \cup \Phi$.

Suppose $i \in \domain{\Upsilon} = \domain{(\Psi \cup \Phi)}$. Through definition \ref{D:DISJOINT_ENSEMBLES}, disjointness entails
that $\domain{\Psi}\thickspace\cap\thickspace\domain{\Phi} = \varnothing$. Thus, if $i \in \domain{\Upsilon}$, exactly one of two
cases hold: either A: $i \in \domain{\Psi}$ and $i \notin \domain{\Phi}$, or B: $i \in \domain{\Phi}$ and $i \notin \domain{\Psi}$.

Assume case A, that $i \in \domain{\Psi}$ and $i \notin \domain{\Phi}$. With $\Upsilon = \Psi \cup \Phi$, it follows from the
definition of set union that for any $i \in \domain{\Psi}$, $(i,P) \in \Psi$ implies $(i,P) \in \Upsilon$~-- that is,
$\Psi \subseteq \Upsilon$.

For case B, similar argument leads to $\Phi \subseteq \Upsilon$.
\end{proof}

\begin{corollary}\label{C:ENSEMBLE_PROD_MEMBERS}
If $\Upsilon$, $\Psi$, and $\Phi$ are classes such that $\Upsilon = \Psi\Phi$, then $\Psi(i) = \Upsilon(i)$ for
$i \in \domain{\Psi}$, and $\Phi(j) = \Upsilon(j)$ for $j \in \domain{\Phi}$.
\end{corollary}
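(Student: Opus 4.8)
The plan is to derive both pointwise equalities directly from Lemma \ref{L:ENSEMBLE_PROD_SUBSETS}, which has already established that $\Upsilon = \Psi\Phi$ entails the two inclusions $\Psi \subseteq \Upsilon$ and $\Phi \subseteq \Upsilon$. The single additional ingredient is the observation, implicit in Definition \ref{D:ENSEMBLE}, that an ensemble is a family and hence a \emph{mapping} --- in particular single-valued at each index. Subset inclusion between two such mappings promotes to pointwise agreement wherever the smaller mapping is defined, and that is exactly the claim.

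First I would fix an arbitrary index $i \in \domain{\Psi}$ and record the corresponding term $(i, \Psi(i)) \in \Psi$. Invoking $\Psi \subseteq \Upsilon$ from the lemma, this same term satisfies $(i, \Psi(i)) \in \Upsilon$. Since $\Upsilon$ is a mapping, it admits at most one term whose first coordinate is $i$, and that term exhibits the value $\Upsilon(i)$; the membership $(i, \Psi(i)) \in \Upsilon$ therefore forces $\Upsilon(i) = \Psi(i)$. Because $i$ was arbitrary in $\domain{\Psi}$, this gives $\Psi(i) = \Upsilon(i)$ for every such $i$. The second assertion follows by the identical argument with $\Phi$ and the inclusion $\Phi \subseteq \Upsilon$ in place of $\Psi$ and $\Psi \subseteq \Upsilon$, yielding $\Phi(j) = \Upsilon(j)$ for each $j \in \domain{\Phi}$.

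I do not expect a genuine obstacle here; the corollary is a one-step consequence of the preceding lemma once the single-valuedness of the larger ensemble is made explicit. The only point deserving careful phrasing is precisely that last structural fact: a subset relation $\Psi \subseteq \Upsilon$ between mappings does not merely embed the graph of $\Psi$ into that of $\Upsilon$, it pins down the value of $\Upsilon$ at every index already claimed by $\Psi$, since $\Upsilon$ cannot carry a competing second value there. Stating this explicitly is what converts the graph-theoretic inclusion of Lemma \ref{L:ENSEMBLE_PROD_SUBSETS} into the functional equalities the corollary demands.
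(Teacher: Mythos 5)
Your proof is correct and follows essentially the same route as the paper's: invoke Lemma \ref{L:ENSEMBLE_PROD_SUBSETS} for the inclusions $\Psi \subseteq \Upsilon$ and $\Phi \subseteq \Upsilon$, then transfer the term $(i,\Psi(i))$ into $\Upsilon$ and read off pointwise equality. Your version is if anything slightly more careful than the paper's, since you make explicit the single-valuedness of $\Upsilon$ that the paper's phrase \Dquo{both equaling $P$} leaves implicit.
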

\begin{proof}
Under identical premises, lemma \ref{L:ENSEMBLE_PROD_SUBSETS} provides $\Psi \subseteq \Upsilon$ and $\Phi \subseteq \Upsilon$.
Suppose $i \in \domain{\Psi}$. If $(i,P) \in \Psi$, then $(i,P) \in \Upsilon$ since $\Psi \subseteq \Upsilon$. The notation
$\Psi(i) = \Upsilon(i)$ (both equaling $P$) is equivalent. A similar argument demonstrates $\Phi(j) = \Upsilon(j)$ for
$j \in \domain{\Phi}$.
\end{proof}

\begin{theorem}\label{T:ENSEMBLE_PROD_CHOICE_PROD}
Let $\Upsilon$, $\Psi$, and $\Phi$ be classes such that $\Upsilon = \Psi\Phi$. For each $\upsilon \in \Closure{\Upsilon}$, there exist
unique $\psi \in \Closure{\Psi}$ and $\phi \in \Closure{\Psi}$ such that $\upsilon = \psi\phi$.
\end{theorem}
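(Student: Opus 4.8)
The plan is to recognize that this theorem is essentially a repackaging of the bijectivity already established in Theorem \ref{T:DYADIC_CHOICE_PROD}. The statement asks, for each $\upsilon \in \CP{\Upsilon}$, for the existence and uniqueness of a pair of component choices whose dyadic product recovers $\upsilon$; existence and uniqueness of such a pair are precisely what surjectivity and injectivity of the dyadic choice product assert. (I read the codomain of $\phi$ as $\CP{\Phi}$ rather than the $\CP{\Psi}$ appearing in the statement, which I take to be a typographical slip.)

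First I would unpack the hypothesis $\Upsilon = \Psi\Phi$. By definition \ref{D:DYADIC_ENSEMBLE_PRODUCT}, the dyadic ensemble product $\Psi\Phi$ is only formed from \emph{disjoint} ensembles, so the equality $\Upsilon = \Psi\Phi$ carries with it the fact that $\Psi$ and $\Phi$ are disjoint. Disjointness of $\Psi$ and $\Phi$ is exactly the standing hypothesis of Theorem \ref{T:DYADIC_CHOICE_PROD}, so that theorem applies and supplies a bijection given by the dyadic choice product (definition \ref{D:DYADIC_CHOICE_PROD}):
\[
\CP{\Psi} \times \CP{\Phi} \leftrightarrow \CP{\Psi}\CP{\Phi} = \prod\Psi\Phi.
\]
I would then observe that $\prod\Psi\Phi = \prod\Upsilon = \CP{\Upsilon}$, since $\Upsilon = \Psi\Phi$ and generation of a choice space depends only on the ensemble (Theorem \ref{T:ENSEMBLE_UNIQ_SPACE}). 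Thus the dyadic choice product is a bijection $\CP{\Psi} \times \CP{\Phi} \leftrightarrow \CP{\Upsilon}$.

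Finally I would extract the two conclusions from bijectivity. Fixing an arbitrary $\upsilon \in \CP{\Upsilon}$, surjectivity of the bijection yields a pair $(\psi,\phi) \in \CP{\Psi} \times \CP{\Phi}$ mapping to $\upsilon$, i.e. with $\psi\phi = \upsilon$, giving existence. Injectivity of the bijection guarantees that this pair is the only one with $\psi\phi = \upsilon$, giving uniqueness of both $\psi$ and $\phi$. There is no genuine obstacle here: the entire argument is a translation of the prior bijection into existence-and-uniqueness language, and the only point requiring care is the bookkeeping identity $\prod\Psi\Phi = \CP{\Upsilon}$, which simply records that the product set on one side of Theorem \ref{T:DYADIC_CHOICE_PROD} is the choice space named in the present statement.
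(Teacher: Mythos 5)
Your proof is correct, but it takes a genuinely different route from the paper's. You obtain the theorem as an immediate corollary of Theorem \ref{T:DYADIC_CHOICE_PROD}: disjointness of $\Psi$ and $\Phi$ is implicit in the hypothesis $\Upsilon = \Psi\Phi$ (the dyadic ensemble product is only defined for disjoint ensembles), so the dyadic choice product is a bijection $\CP{\Psi} \times \CP{\Phi} \leftrightarrow \CP{\Psi\Phi} = \CP{\Upsilon}$, and surjectivity and injectivity translate directly into existence and uniqueness of the pair $(\psi,\phi)$. The paper instead runs a direct construction: it sets $\psi = \Tmid{\upsilon}{\domain{\Psi}}$ and $\phi = \Tmid{\upsilon}{\domain{\Phi}}$, establishes that these restrictions are well-defined via Theorem \ref{T:DYADIC_PRODUCT_IS_ENSEMBLE} and Lemma \ref{L:ENSEMBLE_PROD_SUBSETS}, and verifies $\psi \in \CP{\Psi}$ and $\phi \in \CP{\Phi}$ using Corollary \ref{C:ENSEMBLE_PROD_MEMBERS} --- in effect duplicating the surjectivity half of the proof of Theorem \ref{T:DYADIC_CHOICE_PROD}. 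Each approach buys something: the paper's construction exhibits explicit formulas for the factors (the two restrictions of $\upsilon$), which is concretely useful, whereas your argument is shorter and in fact logically tighter at the end --- the paper's proof is terse about uniqueness and never explicitly checks that $\psi\phi = \upsilon$, both of which fall out of your bijection for free. Your reading of the statement's $\phi \in \CP{\Psi}$ as a typographical slip for $\CP{\Phi}$ is also right; the paper's own proof constructs $\phi$ as a member of $\CP{\Phi}$.
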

\begin{proof}
Suppose $\upsilon \in \Closure{\Upsilon}$. Since any choice has the same domain as its generating ensemble,
$\domain{\Upsilon} = \domain{\upsilon}$. Theorem \ref{T:DYADIC_PRODUCT_IS_ENSEMBLE} states that
$\domain{\Upsilon} = \domain{\Psi} \cup \domain{\Phi}$, from which transitivity of equality provides
$\domain{\upsilon} = \domain{\Psi} \cup \domain{\Phi}$.

From lemma \ref{L:ENSEMBLE_PROD_SUBSETS} we conclude $\Psi \subseteq \Upsilon$ and $\Phi \subseteq \Upsilon$. Since these relations
hold for entire classes, then the same is true of the ensembles' domains: $\domain{\Psi} \subseteq \domain{\Upsilon}$ and
$\domain{\Phi} \subseteq \domain{\Upsilon}$. By substitution, the previous result $\domain{\Upsilon} = \domain{\upsilon}$ then
establishes that $\domain{\Psi} \subseteq \domain{\upsilon}$ and $\domain{\Phi} \subseteq \domain{\upsilon}$.

The inclusion $\domain{\Psi} \subseteq \domain{\upsilon}$ ensures that the restriction
$\psi = \Restrict{\upsilon}{\domain{\Psi}}$ is well-defined. Similarly $\phi = \Restrict{\upsilon}{\domain{\Phi}}$ is also well-defined.

We next focus on the restriction $\psi$ constructed above, seeking to demonstrate that it is also a member of the choice space
$\Closure{\Psi}$. Suppose term $(i, p) \in \psi$. Since $\psi = \Restrict{\upsilon}{\domain{\Psi}}$, then both $(i, p) \in \upsilon$ and
$i \in \domain{\Psi}$. Since $\upsilon \in \Closure{\Upsilon}$ by hypothesis, definition \ref{D:CHOICE} demands that $p \in \Upsilon(i)$
whenever $(i, p) \in \upsilon$. Corollary \ref{C:ENSEMBLE_PROD_MEMBERS} asserts $\Psi(i) = \Upsilon(i)$ for $i \in \domain{\Psi}$.
Since $p \in \Upsilon(i)$ and $\Upsilon(i) = \Psi(i)$ then $p \in \Psi(i)$. Thus for any $(i,p) \in \psi$, it follows that
$p \in \Psi(i)$. This means that $\psi$ is a choice of $\Psi$ by definition \ref{D:CHOICE}~-- that is, $\psi \in \Closure{\Psi}$
by definition \ref{D:CHOICE_SPACE}.

Similar reasoning establishes that the other restriction $\phi$ is a member of $\Closure{\Phi}$. The unique $\psi \in \Closure{\Psi}$ and
$\phi \in \Closure{\Psi}$ such that $\upsilon = \psi\phi$ are expressed by the restrictions $\psi = \Restrict{\upsilon}{\domain{\Psi}}$ and
$\phi = \Restrict{\upsilon}{\domain{\Phi}}$.
\end{proof}

\chapter{Invalid logic in MIL-STD-882} \label{S:MIL-STD-882-LOGIC} 
A secondary goal is resolving the conflict over the definition of the term risk. 
The first meaning is an accident model known in mathematics as the compound Poisson stochastic process, and is called \emph{statistical} risk. 
It is composed of an (intensity, severity) pair, and underlies the theory of safety and availability engineering. 
Reliability considers only the dimension of intensity. 
The second is here called \emph{developmental} risk, and considers a sequence of five categories of decreasing software autonomy, which allow decreasing amounts of time for manual recovery of control.

A secondary topic is distinguishing two separate subjects which have become confused in \Acro{MIL-STD-882E}: there are two different technical meanings for the term \emph{risk}. This lack of clarity involves an engineered product assurance method and an engineering process assurance method for software development.

The first meaning is known in mathematics as the compound Poisson stochastic process, and is called \emph{statistical} risk. 
It is composed of an (intensity, severity) pair, and underlies the theory of safety and availability engineering. 
Reliability considers only the dimension of intensity. 
The second is here called \emph{developmental} risk, and considers a sequence of five categories of decreasing software autonomy, which allow decreasing amounts of time for manual recovery of control.

A primary objective of safety engineering is consistent quantitative risk assessment. Consistency was enforced through a common risk model known in 
mathematics as the compound Poisson stochastic process. 

A primary objective of safety engineering is consistent quantitative 
(or as an approximation, categorical) risk assessment. Consistency was enforced through a common risk model known in 
mathematics as the compound Poisson stochastic process.

Such units characterize military standard safety and reliability measurements, particularly \Acro{MIL-STD-882E} (the United States Department 
of Defense standard for system safety engineering).

A second incentive for this memoir is correcting an error in 
\Acro{MIL-STD-882E}.  
This consistency was recently lost 
to a new empirical method for software risk management. In the current 
Standard, hardware risk management is conducted via the compound Poisson process, while 
software is assigned the empirical method. Comparison of these methods reveals potentially 
noxious divergence. This conclusion is inescapable because the empirical method is not 
algebraically reducible to a compound Poisson process, and fails to be a measure of risk 
according to dimensional analysis. At most one of these methods can be correct. The empirical 
method is presently pseudo-science. To gain status as proper science. its scope must be examined for relevance to elementary engineering principles, \emph{\Acro{NOT}} risk assessment.

A second incentive for this memoir is correction of an inconsistency in \Acro{MIL-STD-882E}. This inconsistency involves a regimin that begins with a general product assurance method and concludes confounded with a process assurance method for software development. This confusion's enabling mechanism is two different technical meanings for the term \emph{risk}.
The first meaning is known in mathematics as the compound Poisson stochastic process, and will be called \emph{statistical} risk. 
For lack of a better term, the second is called \emph{developmental} risk.
In development risk, there is but one universal hazard: Loss of computer control. It is mitigated by the the duration required for a human operator to recognize and manually recover from loss of computer control.

Computers may control safety-critical operations in machines having embedded software. This memoir proposes a regimen to verify such algorithms at precribed 
levels of statistical confidence. 

Discussion appears in two major parts: theory, which shows the 
relationship between automata, discrete systems, and software; 
and application, which covers demonstration and indemnification. 
Demonstration is a method for generating random tests and indemnification 
is a technique for representing pass/fail test results as compound Poisson 
measure (severity and intensity). 

A second incentive for this memoir is correction of an inconsistency in \Acro{MIL-STD-882E}. This inconsistency involves a regimin that begins with a general product assurance method and concludes confounded with a process assurance method for software development. This confusion's enabling mechanism is two different meanings for the technical term \emph{risk}.
The first meaning is known in mathematics as the compound Poisson stochastic process, and will be called \emph{statistical} risk. 
The second is called \emph{developmental} risk.

A primary objective of safety engineering is consistent quantitative risk assessment. Consistency was enforced through a common risk model known in 
mathematics as the compound Poisson stochastic process. 

A primary objective of safety engineering is consistent quantitative 
(or as an approximation, categorical) risk assessment. Consistency was enforced through a common risk model known in 
mathematics as the compound Poisson stochastic process.

Such units characterize of military standard safety and reliability measurements, particularly \Acro{MIL-STD-882E} (the United States Department 
of Defense standard for system safety engineering).

A second incentive for this memoir is correcting an error in 
\Acro{MIL-STD-882E}.  
This consistency was recently lost 
to a new empirical method for software risk management. In the current 
Standard, hardware risk management is conducted via the compound Poisson process, while 
software is assigned the empirical method. Comparison of these methods reveals potentially 
noxious divergence. This conclusion is inescapable because the empirical method is not 
algebraically reducible to a compound Poisson process, and fails to be a measure of risk 
according to dimensional analysis. At most one of these methods can be correct. The empirical 
method is presently pseudo-science. To gain status as proper science. its scope must be examined for relevance to elementary engineering principles, \emph{\Acro{NOT}} risk assessment.

A second incentive for this memoir is correction of an inconsistency in \Acro{MIL-STD-882E}. This inconsistency involves a regimin that begins with a general product assurance method and concludes confounded with a process assurance method for software development. This confusion's enabling mechanism is two different technical meanings for the term \emph{risk}.
The first meaning is known in mathematics as the compound Poisson stochastic process, and will be called \emph{statistical} risk. 
The second is called \emph{developmental} risk for lack of a better term.

\chapter{MIL-STD-882 and the categorical CPP} \label{S:MIL-STD-882} 

\MilStd{} is the United States Department of Defense Standard Practice for System Safety. Revision~E became effective 
May 11, 2012. In preference to \emph{accident}, this standard prefers the term \emph{mishap}, which it defines as \Dquo{an event or 
series of events resulting in unintentional death, injury, occupational illness, damage to or loss of equipment or property, or damage 
to the environment.}

Its safety risk assessment method uses the compound Poisson process (CPP) to represent the timing and severity
of mishaps. \MilStdE{} partitions compound Poisson processes into a lattice of categories and levels that covers the range of interest.
The \emph{category} is a variable which, in an explicit range [1-4], expresses the expectation $(\mu_L)$ of the CPP loss 
random variable $L$.
The \emph{level}  is a variable which, in an explicit range [A-F], expresses the rate or intensity $\lambda$ of the CPP.

The system of categories and levels agrees with the limits of discernibility of human intuition. Two different compound Poisson 
processes having the same category and level are indeed different but in practice are indistinguishable. This characteristic imposes 
a logarithmic organization on the categories and levels.  

\begin{table}[H]
    \begin{center}
        \begin{tabular}{|l|c|p{4in}|}
            \hline
Description& Severity&Mishap Result Criteria \\
           &Category&                        \\ \hline
Catastrophic&1&Could result in one or more of the following: death, permanent total disability, irreversible significant environmental impact, or monetary loss equal to or exceeding \$10M. \\ \hline
Critical&2&Could result in one or more of the following: permanent partial disability, injuries or occupational illness that may result in hospitalization of at least three personnel, reversible significant environmental impact, or monetary loss equal to or exceeding \$1M but less than \$10M. \\ \hline
Marginal&3&Could result in one or more of the following: injury or occupational illness resulting in one or more lost work day(s), reversible moderate environmental impact, or monetary loss equal to or exceeding \$100K but less than \$1M. \\ \hline
Negligible&4&Could result in one or more of the following: injury or occupational illness not resulting in a lost work day, minimal environmental impact, or monetary loss less than \$100K. \\
            \hline
        \end{tabular}
    \end{center}
    \caption{MIL-STD-882E Severity Categories} \label{Ta:SEVERITY_CATEGORY}
\end{table}

\begin{table}[H]
    \begin{center}
        \begin{tabular}{|l|c|p{2in}|p{2in}|}
            \hline
Description&Level&Specific Individual Item&Fleet or Inventory \\ \hline
Frequent&A&Likely to occur often in the life of an item.&Continuously experienced. \\ \hline
Probable&B&Likely to occur often in the life of an item.&Will occur frequently. \\ \hline
Occasional&C&Likely to occur sometime in the life of an item.&Will occur several times. \\ \hline
Remote&D&Unlikely, but possible to occur in the life of an item.&Unlikely, but can reasonably be expected to occur. \\ \hline
Improbable&E&So unlikely, it can be assumed occurrence may not be experienced in the life of an item.&Unlikely to occur, but possible. 
\\ \hline
Eliminated&F&\multicolumn{2}{|p{4in}|}{Incapable of occurrence. This level is used when potential hazards are identified and 
later eliminated.} \\ \hline
        \end{tabular}
    \end{center}
    \caption{MIL-STD-882E Probability Levels} \label{Ta:PROBABILITY_LEVELS}
\end{table}

Table 2 above is a qualitative description of levels. Table 3 below, appearing in \MilStdE{} Appendix A, outlines certain pitfalls 
in accomplishing the same task quantitatively. Numerical expression of the intensity or rate of occurrence is generally preferable to 
mere qualitative phrasing. For quantitative description, the intensity is the ratio of mishaps (numerator) to some measure of exposure 
(denominator). 

\begin{table}[H]
    \begin{center}
        \begin{tabular}{|l|c|p{1.5in}|p{1.5in}|p{1.5in}|}
            \hline
Description&Level&Individual Item&Fleet/Inventory*&Quantitative \\ \hline
Frequent&A&Likely to occur often in the life of an item&Continuously experienced.&Probability of occurrence greater than or equal to 
$10^{-1}$. \\ \hline
Probable&B&Will occur several times in the life of an item&Will occur frequently.&Probability of occurrence less than $10^{-1}$ but greater than or equal to $10^{-2}$. \\ \hline
Occasional&C&Likely to occur sometime in the life of an item&Will occur several times.&Probability of occurrence less than $10^{-2}$ but greater than or equal to $10^{-3}$. \\ \hline
Remote&D&Unlikely, but possible to occur in the life of an item&Unlikely but can reasonably be expected to occur.&Probability of occurrence less than $10^{-3}$ but greater than or equal to $10^{-6}$. \\ \hline
Improbable&E&So unlikely, it can be assumed occurrence may not be experienced in the life of an item&Unlikely to occur, but possible.&Probability of occurrence less than $10^{-6}$. \\ \hline
Eliminated&F&\multicolumn{3}{|p{4.5in}|}{Incapable of occurrence within the life of an item. This category is used when potential hazards are identified and later eliminated.} \\ \hline
        \end{tabular}
        \caption{MIL-STD-882E Example Probability Levels} 
             {\footnotesize{* The size of the fleet or inventory should be defined}} 
             \label{Ta:EXAMPLE_PROBABILITY_LEVELS}
    \end{center}
\end{table}

The false hegemony of a single intuitively understood measure of exposure will now be examined. 
We will find that, however well-intended, Table 3 lacks essential explanation. Without that, it is an oversimplification.

\Dquo{Natural} measures of exposure must embrace a variety of units, some examples of which are: the life of one item, number 
of missile firings, flight hours, miles driven, or years of service. For example, an exposure measure of miles driven is expected 
silently to exclude substantial periods when the system is out of use. Similar would be any situation-based measure of exposure 
having a sizable portion of time spent in unused status (time not counted). This topic appeared in \S\ref{S:INTERMITTENT_POISSON},
the intermittant compound Poisson process.
The natural unit of exposure can be tuned to the culture of a particular hazard. 
However, lacking conversion capability, this freedom of choice leads to the problem of a system composed of a 
heterogeneous plethora of non-comparable exposure units. 

What is behind this incomparability? Natural units are important but incomplete --  
\MilStdE{} needs additional factors to paint a full quantitative picture. 
There is need for conversion of various natural units into a single common standard unit, so that comparison involves only observation of
magnitudes, without pondering the meaning of different units. 
This is particularly important in the cases of many ambiguous references to \Dquo{life.}
Suppose we arbitrarily standardize time duration at one year.
We then define a conversion factor $p$, which means that $p$ years constitute a life. 
A measure $\iota$ quantifies what fraction of time the system's mission is inactive or idle. 
A conversion factor for remaining units must be established; 
without specifying what units remain to be converted, we can say that the unit conversion calculus of elementary physics results 
in some linear coefficient $\kappa$. With $N$ a natural exposure unit and $U$ a standard measure, what we have stated so far is 
summarized in the following form: 
\[
U = \frac{\kappa \cdot (1 - \iota)}{p} \cdot N .
\]

Standard units measure statistical risk as resulting from exposure to an intermittent compound Poisson process.
These standard units may not be a proper exposure, but measure the exposure expected in a year's duration.
For this reason we celebrate the importance of the role of pure natural units; it is important to understand risk as proportionate
to exposure. To understand this importance, imagine yourself as the one exposed to a transient but intense hazard.
But that does not imply the dismissal of statistical risk as a concern; it is also part of the risk analysis picture
to consider how much risk exposure occurs within a given duration. This is the role of the standard unit.

Another complicating factor is the use of the term \Dquo{level} itself. A level is a designator for a class of possibly intermittent 
indistinguishable probability distributions. Rather than being clear about this, \MilStdE{} equivocates greatly in Table 3, confusing 
this designator with a literal probability statement. Only after full quantitative analysis is completed ($p$, $\iota$, and $\kappa$ 
known) can definite statements concerning probability be asserted. It is insufficient to mandate vague documentation of \Dquo{all numerical 
definitions of probability used in risk assessments} without further guidance.

Table 4 below is a categorical rendering of the hyperbola of statistical risk. Definition \ref{D:STATISTICAL_RISK} asserts
$h = \lambda \mu_L$. Excepting the administrative level \Dquo{Eliminated}, this cross-tabulation presents the level $(\lambda)$ 
on the vertical axis and the category $(\mu_L)$ along the horizontal axis. For each combination of level and category, another 
categorical variable\footnote{Not to be confused with the categorical variable named \Dquo{category}} 
represents the statistical risk $h = \lambda \mu_L$ with values: High, Serious, Medium, and Low.

\begin{table}[h!]
    \begin{center}
        \begin{tabular}{|l|c|c|c|c|}
            \hline
SEVERITY /&Catastrophic&Critical&Marginal&Negligible \\
PROBABILITY&(1)&(2)&(3)&(4) \\ \hline
Frequent (A)&High&High&Serious&Medium \\ \hline
Probable (B)&High&High&Serious&Medium \\ \hline
Occasional (C)&High&Serious&Medium&Low \\ \hline
Remote (D)&Serious&Medium&Medium&Low \\ \hline
Improbable (E)&Medium&Medium&Medium&Low \\ \hline
Eliminated (F)&\multicolumn{4}{|c|}{Eliminated} \\ \hline
        \end{tabular}
    \end{center}
    \caption{MIL-STD-882E Risk Assessment Matrix} \label{Ta:RISK_ASSESSMENT_MATRIX}
\end{table}

This table suffers the same ambiguity as in Table 3. \MilStd's definitions are clearly inadequate for quantitative analysis. 
Through equivocation, exposure to an intermittent compound Poisson process is regarded as not different than exposure to a compound 
Poisson process, despite that the difference becomes obvious through the linear factor $(1 - \iota)$. \MilStd{} is an evolving 
document in its fifth major revision; let us hope these ambiguities are 
resolved in the future.

\chapter{Deterministic finite automaton} 
While neither the deterministic finite automaton nor the reactive actuated automaton are proper subsets of each other, 
they do have similarities.

\section{Definition of DFA} \label{S:DFA} 
This simple depiction of the deterministic finite automaton appears in Wikipedia \cite{wW11autmaton}:
\begin{quotation}
\noindent A [deterministic finite] \emph{automaton} is represented formally by the 5-tuple $\langle Q, \Sigma, \delta, q_0, A \rangle$, where:
\begin{itemize}
  \item $Q$ is a finite set of \emph{states}.
  \item $\Sigma$ is a finite set of \emph{symbols}, called the \emph{alphabet} of the automaton.
  \item $\delta$ is the \emph{transition function}, that is, $\delta \colon Q \times \Sigma \to Q$.
  \item $q_0$ is the \emph{start state}, that is, the state which the automaton \emph{occupies} when no input has been processed yet,
        where $q_0 \in Q$.
  \item $A$ is a set of states of $Q$ (i.e. $A \subseteq Q$) called \emph{accept states}.
\end{itemize}
\end{quotation}

An approach for engineers is found in \cite{jH79}.

\section{Analogous structures} \label{S:ANALOGOUS_DFA_RAA} 
The following structures are analogous between the DFA and RAA:

\begin{table}[h!]
    \begin{center}
        \begin{tabular}{|l||c|c|}
            \hline
Analogy&DFA&RAA \\ \hhline{=#=#=}
State/Locus&Q&$\Lambda$ \\ \hline
Alphabet/Reactive Stimulus&$\Sigma$&$\Closure{\Psi}$ \\ \hline
Transition Function/Actuation&$\delta \colon Q \times \Sigma \to Q$&
        $\lambda_{i+1} = \tranaddr{\Act{a}_i}{\psi_i}$ \\ \hline
        \end{tabular}
    \end{center}
    \caption{DFA/RAA Analogies} \label{Ta:DFA_RAA_ANALOGIES}
\end{table}

\section{Actuation vs. transition function} \label{S:TRANSITION_FUNCTION} 
The reactive actuated automaton uses actuators to determine the next locus, while the deterministic finite 
automaton uses the transition function for the same purpose. In the DFA, if $q \in Q$ is the current state and 
$\sigma \in \Sigma$ is the current symbol, then the next state is $q' = \delta(q, \sigma)$. For the RAA, 
assuming $\lambda \in \Lambda$ is the current locus and $\psi \in \Psi$ is the current reactive stimulus, then 
the next locus is $\lambda' = \tranaddr{\Act{a}_i}{\psi_i}$.

The RAA is somewhat more restrictive because both its next locus and its actuated functionality must remain constant 
over the blocks of partition $\#_\lambda(\Closure{\Psi})$.
\begin{remark}
This means that the actuated functionality $\Ftylc{f}$ is the same across partition block $B$. 
It does not that the application 
$\Ftylc{f}(\psi)$ is constant over $\psi \in B$.
\end{remark}

\chapter{Locus digraph} \label{S:LOCUS_DIGRAPH} 
With any reactive actuated automaton is associated a directed graph called the locus digraph, also known as the software control flow graph. 
This digraph is made famous by Thomas McCabe, who uses it in his study of program (cyclomatic) complexity\cite{wW16cyclomatic}.

\section{Vertex} \label{S:LOCUS_DIGRAPH_VERTEX} 
\begin{definition}
Let $\Auto{A}$ be a reactive actuated automaton. 
In the locus digraph induced by $\Auto{A}$, a vertex is an element of the automaton's catalog of loci $\Lambda$.
\end{definition}

\section{Edge} \label{S:LOCUS_DIGRAPH_EDGE} 
\begin{definition}
In the locus digraph induced by automaton $\Auto{A}$, a vertex $\lambda$ is directionally connected to another vertex $\lambda'$ if there exists 
configuration $\ConfigScript{c}$ such that $\lambda = \mho_\Lambda(\ConfigScript{c})$ and $\lambda' = \mho_\Lambda(\Auto{A}(\ConfigScript{c}))$.
\end{definition}


\pagebreak
\clearpage

\end{document}